\documentclass[11pt,reqno]{article}

%
%

\headheight=6.15pt
\textheight=8.75in
\textwidth=6.5in
\oddsidemargin=0in
\evensidemargin=0in
\topmargin=0in

\usepackage{amsmath, amsthm, amsfonts, amssymb, amsxtra, fullpage, color, hyperref, url}
\usepackage{graphicx}
\usepackage[inline]{enumitem}

\numberwithin{equation}{section}

\newcommand{\R}{{\mathbb R}}
\newcommand{\C}{{\mathbb C}}
\newcommand{\compC}{{\mathbb C}}
\newcommand{\realR}{{\mathbb R}}

\renewcommand{\d}{\partial}
\renewcommand{\Re}{{\operatorname{Re\,}}}
\renewcommand{\Im}{{\operatorname{Im\,}}}

\newcommand{\F}{{\mathcal F}}
\newcommand{\M}{{\mathcal M}}

\newcommand{\diag}{{\operatorname{diag}}}
\newcommand{\dist}{{\operatorname{dist}}}
\newcommand{\Ai}{{\operatorname{Ai}}}

\newcommand{\crit}{{\operatorname{cr}}}

\newcommand{\be}{\beta}
\newcommand{\ga}{\gamma}

\newcommand{\la}{\lambda}

\newcommand{\De}{\Delta}

\newcommand{\sg}{\sigma}
\newcommand{\Sg}{\Sigma}

\newcommand{\Om}{\Omega}

\newcommand{\z}{\zeta}

\newcommand{\acal}{{\mathcal A}}

\newcommand{\bigO}{{\mathcal O}}

\newcommand{\qcal}{{\mathcal Q}}
\newcommand{\lcal}{{\mathcal L}}

\newcommand{\Painleve}{Painlev\'{e}}
\newcommand{\Veto}{Vet\H{o}}
\newcommand{\resp}{resp.}

\DeclareMathOperator{\Tr}{Tr}
\DeclareMathOperator{\sgn}{sgn}
\DeclareMathOperator{\scaled}{scaled}
\DeclareMathOperator{\tac}{tac}
\DeclareMathOperator{\HM}{HM}
\DeclareMathOperator{\tMM}{2MM}
\newtheorem{theo}{{\sc \bf Theorem}}[section]
\newtheorem{cor}[theo]{{\sc \bf Corollary}}
\newtheorem{lem}[theo]{{\sc \bf Lemma}}
\newtheorem{prop}[theo]{{\sc \bf Proposition}}
\theoremstyle{definition}
\newtheorem{RHP}[theo]{{\sc \bf Riemann--Hilbert Problem}}

\theoremstyle{remark}

\newtheorem{rem}{Remark}[section]

\theoremstyle{definition}

\title{Two Lax systems for the \Painleve\ II equation, and two related kernels in random matrix theory}

\author{Karl Liechty \thanks{Department of Mathematical Sciences, DePaul University, Chicago, IL, 
\href{mailto:kliechty@depaul.edu}{\nolinkurl{kliechty@depaul.edu}}
 \newline
    Supported by DePaul University College of Science and Health Summer Research Grant, an AMS--Simons travel grant, and a grant from the Simons Foundation (\#357872, Karl Liechty)
} \and Dong Wang\thanks{Department of Mathematics, National University of Singapore, Singapore, 119076, \href{mailto:matwd@nus.edu.sg}{\nolinkurl{matwd@nus.edu.sg}} \newline
    Supported partially by the startup grant R-146-000-164-133}}
    
\begin{document}

\maketitle

\begin{abstract}
We consider two Lax systems for the homogeneous \Painleve\ II equation: one of size $2\times 2$  studied by Flaschka and Newell in the early 1980's, and one of size $4\times 4$ introduced by Delvaux--Kuijlaars--Zhang and Duits--Geudens in the early 2010's. We prove that solutions to the $4\times 4$ system can be derived from those to the $2\times 2$ system via an integral transform, and consequently relate the Stokes multipliers for the two systems. As corollaries we are able to express two kernels for determinantal processes as contour integrals involving the Flaschka--Newell Lax system: the tacnode kernel arising in models of nonintersecting paths, and a critical kernel arising in a two-matrix model.
\end{abstract}

\section{Introduction and statement of results}

The homogeneous \Painleve\ II equation (PII) is the second order nonlinear ODE
\begin{equation}\label{int:1}
  y''=xy+2y^3.
\end{equation} 
Despite its unassuming form, its solutions, known as the \emph{\Painleve\ transcendents}, appear in exact solutions of many models in mathematical physics. For example, one particular solution to \eqref{int:1} is the one satisfying the boundary condition 
\begin{equation}\label{par:0}
  q(\sg) \sim \Ai(\sg) \textrm{ as } \sg\to+\infty,
\end{equation}
where $\Ai$ is the Airy function. This solution is known as the \emph{Hastings--McLeod solution} \cite{Hastings-McLeod80}. It is particularly important in random matrix theory, for it defines the celebrated Tracy--Widom distributions which describe the generic soft edge behavior of random matrices from  orthogonal-, unitary-, or symplectic-invariant ensembles \cite{Tracy-Widom94}, \cite{Tracy-Widom96}.

The PII equation \eqref{int:1} is an \emph{integrable equation}, and its integrability is characterized by the existence of {\it Lax pairs}. 
A Lax pair, or more generally a Lax system, is a system of overdetermined linear differential equations whose compatibility implies a nonlinear equation. Let $\Psi = \Psi(z_1, \dotsc, z_r)$ be an $n \times n$ matrix-valued function with variables $z_1, \dotsc, z_r$. Let 
\begin{equation}\label{eq:def_Lax_system}
  \frac{\partial \Psi}{\partial z_1} = A_1 \Psi, \quad  \dotsc, \quad \frac{\partial \Psi}{\partial z_r} = A_r \Psi,
\end{equation}
be an (overdetermined) system of differential equations satisfied by $\Psi$ with $n\times n$ coefficient matrices $A_1, \dotsc, A_r$.
For the overdetermined differential equations to have nontrivial solutions, we need the compatibility among $A_1, \dotsc, A_r$, the \emph{Frobenius compatibility conditions}, sometimes called \emph{zero-curvature relations}:
\begin{equation} \label{eq:zero_curvature}
  \frac{\partial A_i}{\partial z_j} - \frac{\partial A_j}{\partial z_i} + [A_i, A_j] = 0, \quad \text{for all $i, j = 1, \dotsc, r$}.
\end{equation}
The Frobenius compatibility conditions are in general nonlinear differential equations for the entries of $A_j$, and we call the system \eqref{eq:def_Lax_system} the Lax system for the nonlinear equation(s)  \eqref{eq:zero_curvature}. In the most common cases $r = 2$ and we call the system \eqref{eq:def_Lax_system} a Lax pair, but we may also consider the general case $r \geq 2$.

\begin{rem}
The term Lax pair originates with the work of Peter Lax in the late 1960's \cite{Lax68}, in which he used the compatibility of a pair of linear differential equations to study a nonlinear {\it partial} differential equation. In the problem considered by Lax the evolution of the time variable gives an {\it isospectral} deformation of the linear operator. On the other hand, \Painleve\ equations represent {\it isomonodromic} deformations of the analogous linear equations with respect to the singularities, i.e., the monodromy data is invariant as the argument of the (fixed) \Painleve\ function changes, and the isomonodromic relations are expressed in the same form of Lax pairs \cite[Chapter 4]{Fokas-Its-Kapaev-Novokshenov06}. The idea of representing the \Painleve\ equations as isomonodromy deformations of a system of linear equations is nearly as old as the \Painleve\ equations themselves, dating back to the work of Fuchs \cite{Fuchs07} and later Garnier \cite{Garnier12}. Therefore it may be more appropriate to call the overdetermined systems \eqref{eq:def_Lax_system} and \eqref{int:7} {\it Garnier--Fuchs pairs/systems} rather than Lax pairs/systems. Such terminology can be found in the literature, see \cite{Joshi-Kitaev-Treharne07} and \cite{Joshi-Kitaev-Treharne09}. However, the phrase Lax pair is much more abundant in the literature and this is the nomenclature we use, following the terminology of \cite{Delvaux-Kuijlaars-Zhang11}, \cite{Duits-Geudens13}, \cite{Delvaux13}, and \cite{Fokas-Its-Kapaev-Novokshenov06}.
\end{rem}

Nonlinear differential equations which possess a Lax system representation are in some sense integrable, although they can be rather complicated. All of the \Painleve\ equations, including \eqref{int:1}, can be represented by Lax pairs/systems \cite{Fokas-Its-Kapaev-Novokshenov06}. However, the construction of Lax pairs/systems for a given \Painleve\ equation is far from trivial, and the relations between different Lax pair/systems for a \Painleve\ equation deserve investigation for their own sake. In this paper we demonstrate the relation between one classical Lax pair and a recently discovered Lax system for the PII equation \eqref{int:1}.
However, the main motivation of our paper is not purely theoretical, but is driven by the appearance of these Lax systems in random matrix theory and related problems. The classical Lax pair and the new Lax system are both related to random matrix theory, but in quite different aspects.

\subsection{The Flaschka--Newell Lax pair for PII} \label{subsec:Flaschka-Newell_Lax}
First we present a classical Lax pair for \eqref{int:1}, found by Flaschka and Newell \cite{Flaschka-Newell80}.

\begin{rem}
  The Flaschka--Newell Lax pair was originally presented for the general PII equation which has a free parameter (see Section \ref{subsec:outlook}), and we only present it for the homogeneous case \eqref{int:1}. A different Lax pair for PII was found by Jimbo and Miwa around the same time \cite{Jimbo-Miwa81} (with a precursor in \cite{Garnier12}), but in the homogeneous case the Jimbo--Miwa Lax pair can be reduced to the Flaschka--Newell one \cite[Section 4.2]{Fokas-Its-Kapaev-Novokshenov06}. Other Lax pairs associated to PII have been found by Harnad, Tracy, and Widom in \cite{Harnad-Tracy-Widom93} (of size $2\times 2$) and by Joshi, Kitaev, and Treharne in \cite{Joshi-Kitaev-Treharne09} (of size $3\times 3$). The equivalence among these Lax pairs is discussed in \cite{Joshi-Kitaev-Treharne09}.
\end{rem}

Let $\Phi = \Phi(\zeta; \sigma)$ be a $2 \times 2$ matrix-valued function with variables $\zeta$ and $\sigma$ which satisfies the overdetermined equations
\begin{subequations}\label{int:2}
  \begin{align}
    \frac{\d}{\d \z}\Phi(\z; \sg) = {}& A\Phi(\z; \sg), \label{int:2a} \\
    \frac{\d}{\d \sg}\Phi(\z; \sg) = {}& B\Phi(\z; \sg), \label{int:2b}
  \end{align}
\end{subequations} 
where
\begin{equation}\label{int:3}
  A=\begin{pmatrix} -4i\z^2-i(\sg+2q^2) & 4\z q+2ir \\ 4\z q -2ir  & 4i\z^2+i(\sg+2q^2) \end{pmatrix}, \quad
  B= \begin{pmatrix} -i\z & q \\ q & i\z \end{pmatrix},
\end{equation} 
and $q$ and $r$ are parameters which may depend on $\sigma$.
It is an amiable exercise to show that the compatibility of the two equations in \eqref{int:2} is reduced to the fact that $q \equiv q(\sg)$ solves the \Painleve\ equation \eqref{int:1}, and the parameter $r$ in \eqref{int:3} is $r\equiv r(\sg)=q'(\sg).$

It is known that all solutions to the \eqref{int:1} are meromorphic, so if we choose $q\equiv q(\sg)$ to be any particular solution to \eqref{int:1} and take $r\equiv q'(\sg)$, then the system \eqref{int:2} is solvable provided $\sg$ is not a pole of the chosen PII transcendent. Notice then that, given a particular solution $q(\sg)$ and fixing $\sg$ that is not a pole of this solution, we can find a solution to the overdetermined equation \eqref{int:2} using only \eqref{int:2a}, given proper initial conditions. Thus below we concentrate on \eqref{int:2a} when we talk about the solutions to \eqref{int:2}, where $q(\sg)$ is a fixed solution to \eqref{int:1}, $r(\sg) = q'(\sg)$, and $\sg$ is a constant that is not a pole of $q$. In some formulas in this paper, we suppress the dependence on $\sigma$ if it is treated as a constant.

Since $\infty$ is the only singular point of $A$, and
\begin{equation}
  A = (I + \bigO(\zeta^{-1}))
  \begin{pmatrix}
    -4i \zeta^2 & 0 \\
    0 & 4i \zeta^2
  \end{pmatrix},
  \quad \text{as $\zeta \to \infty$},
\end{equation}
it is natural to construct the fundamental solution $\Phi$ such that 
\begin{equation} \label{eq:fundamental_eq}
  \Phi(\zeta) = (I + \bigO(\zeta^{-1}))
  \begin{pmatrix}
    e^{-\frac{4}{3}i \zeta^3 - i \sigma \zeta} & 0 \\
    0 & e^{\frac{4}{3}i \zeta^3 + i \sigma \zeta}
  \end{pmatrix},
  \quad \text{as $\zeta \to \infty$}. 
\end{equation}
 But $\infty$ is an irregular singularity of $A$, so the Stokes phenomenon allows us only to consider the solution $\Phi$ that satisfies \eqref{eq:fundamental_eq} sectorally. For a rigorous version of the heuristic argument above see \cite[Section 5.0]{Fokas-Its-Kapaev-Novokshenov06}.

 For $j=0, 1,\dots, 5$,  define the sectors (see Figure \ref{fig:2x2jump}),
\begin{equation}\label{eq:Sj_sectors}
S_j = \left\{ z\in \C : -\frac{\pi}{6}+\frac{j\pi}{3} < \arg z < \frac{\pi}{6}+\frac{j\pi}{3}\right\}.
\end{equation}
Their boundaries are the rays with outward orientation
 \begin{equation} \label{eq:defn_Sigma_k}
   \Sigma_k = \left\{ t e^{(k - 1/2) \frac{i\pi}{3}} \mid t \in [0, \infty) \right\}, \quad k = 0, \dotsc, 5.
 \end{equation}
Then there are fundamental solutions $\Psi^{(0)}, \dotsc, \Psi^{(5)}$ to \eqref{int:2a} such that $\Psi^{(j)}$ satisfies the boundary condition \eqref{eq:fundamental_eq} in sector $S_j$. 
Of course the solution space to \eqref{int:2a} is two dimensional and so there are linear relations between the solutions $\Psi^{(0)}, \dotsc, \Psi^{(5)}$. These relations depend on the particular Painlev\'{e} transcendent appearing in the coefficient matrices $A$ and $B$ in \eqref{int:3}, and can be described in the following way \cite[Section 5.0]{Fokas-Its-Kapaev-Novokshenov06}.

For each PII solution $q(\sg)$ to \eqref{int:1}, there is a triple of complex numbers $(t_1, t_2, t_3)$ satisfying the relation
\begin{equation} \label{eq:triple_number}
  t_1 + t_2 + t_3 + t_1 t_2 t_3 = 0,
\end{equation}
such that the fundamental solutions $\Psi^{(k)}$ associated with $q(\sigma)$ satisfy
\begin{equation} \label{eq:jump_condition_2x2}
    \Psi^{(k)} = \Psi^{(k - 1)} J_k, \quad k =  0, \dotsc, 5, \quad \text{with $J_k$ shown in Figure \ref{fig:2x2jump} and $\Psi^{(-1)} := \Psi^{(5)}$.}
\end{equation}
The jump matrices $J_k$ are called the {\it Stokes matrices}, and the numbers $t_1, t_2, t_3$ are called the {\it Stokes multipliers} corresponding to the given PII solution $q(\sg)$. Remarkably, each triple $(t_1, t_2, t_3)$ of Stokes multipliers satisfying \eqref{eq:triple_number} corresponds uniquely to a PII solution, and so the solutions to PII are parametrized by the surface \eqref{eq:triple_number}. Thus in order to specify a solution to PII, it is enough to specify the Stokes multipliers $(t_1, t_2, t_3)$, see \cite[Proposition 5.1]{Fokas-Its-Kapaev-Novokshenov06}. In Figure \ref{fig:2x2jump} we show the rays, sectors, and the jump matrices $J_k$.
\begin{figure}[htb]
  \begin{minipage}[t]{0.36\linewidth}
    \centering
    \includegraphics{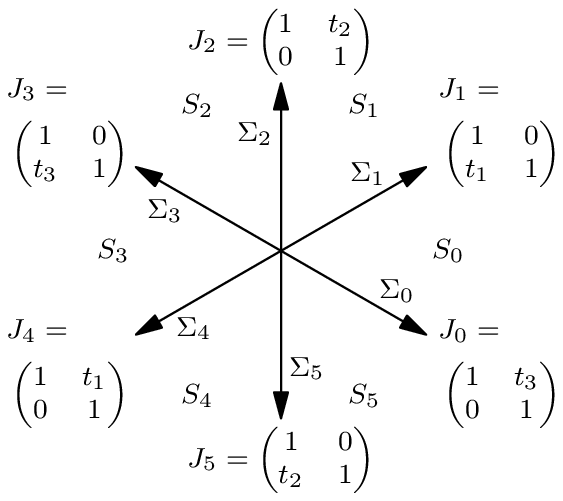}
    \caption{Rays $\Sigma_k$, sectors $S_k$, and jump matrices $J_k$ placed on $\Sigma_k$ for $k = 0, \dotsc, 5$.}
    \label{fig:2x2jump}
  \end{minipage}
  \hspace{\stretch{1}}
  \begin{minipage}[t]{0.62\linewidth}
    \centering
    \includegraphics{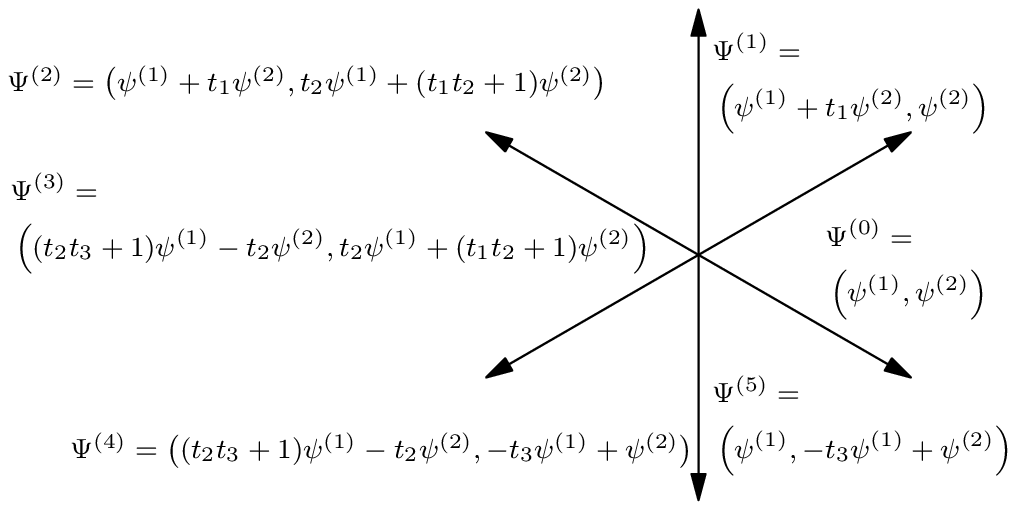}
    \caption{The formulas of $\Psi^{(0)}, \dotsc, \Psi^{(5)}$ expressed in $\psi^{(1)}$ and $\psi^{(2)}$.}
    \label{fig:2x2Phi}
  \end{minipage}
\end{figure}

For a given set of Stokes multipliers, the jump properties \eqref{eq:jump_condition_2x2} determine any of the fundamental solutions in terms of the solution $\Psi^{(0)}$. Indeed if we denote
\begin{equation} \label{eq:defn_psi^1_psi^2}
  \Psi^{(0)}(\zeta; \sigma) = \left( \psi^{(1)}(\zeta; \sigma), \psi^{(2)}(\zeta; \sigma) \right),
\end{equation}
where $\psi^{(1)}$ and $\psi^{(2)}$ are two $2$-dimensional vector-valued functions defined on the whole complex plane, then the other $\Psi^{(k)}$ are expressed in $\psi^{(1)}$ and $\psi^{(2)}$ as in Figure \ref{fig:2x2Phi}.
The asymptotics of the columns of $\Psi^{(k)}$ are summarized below (with $\delta$ being any small positive constant):
\begin{align}
  \left.
  \begin{aligned}
    \psi^{(1)}(\zeta) & \\
    \psi^{(1)}(\zeta) + t_1 \psi^{(2)}(\zeta) & \\
    (t_2 t_3 + 1)\psi^{(1)}(\zeta) - t_2 \psi^{(2)}(\zeta) & 
  \end{aligned}
                                                             \right\} = {}& (I + \bigO(\zeta^{-1}))
                                                                            \begin{pmatrix}
                                                                              e^{-\frac{4}{3}i \zeta^3 - i\sigma \zeta} \\
                                                                              0
                                                                            \end{pmatrix} &&
                                                                                             \begin{cases}
                                                                                               \text{if $\arg(\zeta) \in (-\frac{2\pi}{3} + \delta, \frac{\pi}{3} - \delta)$}, \\
                                                                                               \text{if $\arg(\zeta) \in (\delta, \pi - \delta)$}, \label{eq:asy_psi^1+t_1psi^2} \\
                                                                                               \text{if $\arg(\zeta) \in (\frac{2\pi}{3} + \delta, \frac{5\pi}{3} - \delta)$},
                                                                                             \end{cases} \\
  \left.
  \begin{aligned}
    \psi^{(2)}(\zeta) & \\
    -t_3 \psi^{(1)} + \psi^{(2)}(\zeta) & \\
    t_2 \psi^{(1)} + (t_1 t_2 + 1)\psi^{(2)}(\zeta) &
  \end{aligned}
                                                      \right\} = {}& (I + \bigO(\zeta^{-1}))
                                                                     \begin{pmatrix}
                                                                       0 \\
                                                                       e^{\frac{4}{3}i \zeta^3 + i\sigma \zeta}
                                                                     \end{pmatrix} &&
                                                                                      \begin{cases}
                                                                                        \text{if $\arg(\zeta) \in (-\frac{\pi}{3} + \delta, \frac{2\pi}{3} - \delta)$}, \\
                                                                                        \text{if $\arg(\zeta) \in (\pi + \delta, 2\pi - \delta)$}, \\
                                                                                        \text{if $\arg(\zeta) \in (\frac{\pi}{3} + \delta, \frac{4\pi}{3} - \delta)$}. 
                                                                                      \end{cases}
  \label{eq:asy_psi^2}
\end{align}

\subsubsection{Critical kernel in one-matrix model} \label{subsubsec:1MM}

As mentioned earlier, the Hastings--McLeod solution to \eqref{int:1}, the one satisfying \eqref{par:0}, is of special importance in random matrix theory. It is the solution to PII that corresponds to the Stokes multipliers $(t_1, t_2, t_3) = (1, 0, -1)$, and it is well established that it has no poles on the real line. Thus the solution $\Psi^{(0)}\equiv \Psi^{(0)}(\z;\sg)$ exists for any real $\sg$ \cite[Section 11.7]{Fokas-Its-Kapaev-Novokshenov06}.

Consider the one-matrix model given by the probability measure on the space of $n \times n$ Hermitian matrices $M$,
\begin{equation}
  \frac{1}{C_n} \exp(-n t \Tr V(M)) dM,
\end{equation}
where $V$ is the potential and $t > 0$ is a scaling factor. The eigenvalues of $M$ are a determinantal process that is characterized by a correlation kernel. In the case that $V(x) = x^4/4 - x^2$ and $n \to \infty$, the model is in a critical phase if $t = 1$. As $n \to \infty$, under the double scaling limit $t = 1 - (2n)^{-2/3} \sigma$, the correlation kernel at $u(n/4)^{-1/3}$ and $v(n/4)^{-1/3}$ converges to
\begin{equation}
  K_1^{\crit}(u,v; \sigma) = \frac{-\psi^{(1)}_1(u; \sg) \psi^{(1)}_2(v; \sg)+\psi^{(1)}_2(u; \sg) \psi^{(1)}_1(v; \sg)}{2\pi i (u-v)},
\end{equation}
where $\psi^{(1)}_1$ and $\psi^{(1)}_2$ are the two components of the 2-vector $\psi^{(1)}$ defined in \eqref{eq:defn_psi^1_psi^2}, see \cite{Bleher-Its03}. We use the notation $K_1^{\crit}$ to emphasize that this kernel arises in a 1-matrix model and to differentiate it from the kernel \eqref{critkernel} which arises in a 2-matrix model, which we denote $K_2^{\crit}$.   Note that although we only state the limiting correlation kernel for a very special potential function, the convergence to $K^{\crit}_1$ holds for a large class of potentials that have a quadratic interior critical point. See \cite{Claeys-Kuijlaars06} for the universality of the limiting kernel $K^{\crit}_1$.

Finally we remark that if we give the potential $V$ a logarithmic perturbation at $0$, i.e., let $V(x) = x^4/4 - x^2 - (2\alpha/n) \log \lvert x \rvert$, then the limiting kernel at $0$ is changed, and it is expressed by the Flaschka--Newell Lax pair for the Hastings--McLeod solution of the  \emph{inhomogeneous} PII equation. See \cite{Claeys-Kuijlaars-Vanlessen08} for detail, and also see Section \ref{subsec:outlook}.

\subsection{A $4\times 4$ Lax system for PII}

Now we introduce the other Lax system for the PII equation \eqref{int:1}, which was discovered recently by Delvaux, Kuijlaars, and Zhang in their study of non-intersecting Brownian motions \cite{Delvaux-Kuijlaars-Zhang11}, by Delvaux in the study of non-intersecting squared Bessel processes \cite{Delvaux13a}, and by Duits and Geudens in their study of the 2-matrix model \cite{Duits-Geudens13}, see also \cite{Delvaux13}, \cite{Kuijlaars14}. In its most general form this Lax system is a $4$-dimensional overdetermined differential system consisting of 16 equations. Here we consider a $4 \times 4$ matrix valued function $M = M(z, s_1, s_2, \tau)$, and the Lax system is
 \begin{subequations} \label{int:7}
   \begin{gather}
    \frac{\d}{\d z} M =U M, \label{int:7_a} \\ 
    \frac{\d}{\d s_1} M =V_1 M, \quad \frac{\d}{\d s_2} M =V_2 M, \quad \frac{\d}{\d \tau} M =W M. \label{int:7_b}
   \end{gather}
 \end{subequations}
The coefficient matrix $U$ is given by
\begin{equation}\label{int:4}
  U =
  \begin{pmatrix}
    U^{11} & U^{12} \\
    U^{21} & U^{22}
  \end{pmatrix},
\end{equation}
where each $U^{ij}$ is a $2 \times 2$ block, such that
\begin{equation}\label{int:5}
  \begin{gathered}
    U^{11} =
    \begin{pmatrix}
      \tau-s_1^2+\frac{u}{C} & \frac{\sqrt{r_2} q}{\ga C \sqrt{r_1}} \\
      -\ga \frac{\sqrt{r_1} q}{C\sqrt{r_2}} & -\tau +s_2^2-\frac{u}{C}
    \end{pmatrix}, \quad
    U^{12} =
    \begin{pmatrix}
      ir_1 & 0 \\
      0 & ir_2
    \end{pmatrix}, \quad
    U^{22} =
    \begin{pmatrix}
      \tau+s_1^2-\frac{u}{C} & \frac{\sqrt{r_1} q}{\ga\sqrt{r_2} C} \\
      -\ga \frac{\sqrt{r_2} q}{\sqrt{r_1} C} & -\tau-s_2^2+\frac{u}{C}
    \end{pmatrix}, \\
    U^{21} = i
    \begin{pmatrix}
      r_1 z -2s_1+\frac{s_1^4}{r_1}-\frac{2s_1^2 u}{r_1 C}+\frac{u^2-q^2}{r_1C^2} & \frac{\sqrt{r_1r_2}C(q'+uq)}{\ga}-\frac{(r_1^2s_2^2+r_2^2s_1^2)q}{\ga C (r_1r_2)^{3/2}} \\ \ga \sqrt{r_1r_2}C(q'+uq)-\frac{\ga(r_1^2s_2^2+r_2^2s_1^2)q}{C (r_1r_2)^{3/2}} & -r_2 z -2s_2+\frac{s_2^4}{r_2}-\frac{2s_2^2 u}{r_2 C}+\frac{u^2-q^2}{r_2C^2}
    \end{pmatrix}.
  \end{gathered}
\end{equation}
Here the numbers $r_1$ and $r_2$ are positive constants, and $C$, $\ga$, $q$, $q'$, and $u$ depend on $r_1, r_2, s_1, s_2, \tau$. We relegate the formulas for $V_1, V_2, W$ to Appendix \ref{sec:formulas_VVW}, since we do not use them in the rest of this paper. In the symmetric case $r_1 = r_2$ and $s_1 = s_2$, see also \cite[Section 5.3]{Delvaux13a}, \cite{Duits-Geudens13}, and in the $\tau = 0$ case, see also \cite[Section 5.2]{Delvaux-Kuijlaars-Zhang11}. By the compatibility of the overdetermined system, which is routine but laborious, see \cite[Section 6.5]{Delvaux13}, we derive
\begin{equation}\label{int:6}
  C=(r_1^{-2} +r_2^{-2})^{1/3}, \qquad \ga=\exp\left(\frac{8}{3}\frac{r_1^2-r_2^2}{(r_1^2+r_2^2)^2}\tau^3 - 4\frac{r_1s_1-r_2s_2}{r_1^2+r_2^2}\tau\right),
\end{equation}
and $q$ and $u$ are functions of
\begin{equation}\label{int:9}
  \sg:=\frac{2}{C} \left( \frac{s_1}{r_1} + \frac{s_2}{r_2} - \frac{2 \tau^2}{r_1^2+r_2^2} \right).
\end{equation}
Furthermore, $q = q(\sg)$ satisfies the PII equation \eqref{int:1}, $q' = q'(\sg)$ is the derivative with respect to $\sg$, and $u$ is the PII Hamiltonian
\begin{equation}\label{int:10}
  u(\sg):=q'(\sg)^2-q(\sg)^2-q(\sg)^4,
\end{equation}
which satisfies
\begin{equation}\label{int:11}
  u'(\sg)=-q(\sg)^2.
\end{equation}
Now as with the Lax pair \eqref{int:2}, we fix a particular solution $q(\sg)$ to PII and assume $\sg$ is not a pole of this solution. We can then solve the Lax system by \eqref{int:7_a} alone, with proper initial conditions.

\begin{rem}
  The authors of \cite{Delvaux-Kuijlaars-Zhang11}, \cite{Duits-Geudens13}, \cite{Delvaux13}, and \cite{Delvaux13a} introduced the Lax system \eqref{int:7} as a technical tool to study the tacnode Riemann--Hilbert problem (RHP), a $4 \times 4$ Riemann--Hilbert problem associated with the PII equation \eqref{int:1}. The tacnode RHP is only defined for the Hastings--McLeod solution to PII, but the Lax system is algebraic and the Frobenius compatibility conditions \eqref{eq:zero_curvature} are independent of boundary condition, so the Lax system exists for all solutions to the PII equation. From the Lax system we can construct an RHP that is associated with all solutions to the PII equation and thus generalize the tacnode RHP. See Riemann--Hilbert problem \ref{RHP:4x4} in Section \ref{subsec:tacnode_RHP} below.
\end{rem}

Since $\infty$ is the unique singular point of $U$, it is natural to put the boundary condition to the solution $M$ at $\infty$. The situation is a bit more complicated than for the $2\times 2$ Lax system, since infinity is, in the language of \cite{Fokas-Its-Kapaev-Novokshenov06}, a \emph{general irregular singular point} of the coefficient matrix $U$. Nonetheless, it is possible to transform the equation \eqref{int:7_a} into one with a regular singular point by means of an explicit change of variable, and then to derive the asymptotic structure of its solutions using the methods of \cite{Wasow87}. This asymptotic structure was derived by Duits and Geudens in \cite{Duits-Geudens13}. In order to describe it, we define the functions
\begin{equation}\label{eq:theta}
  \begin{aligned}
    \theta_1(z) = {}& \frac{2}{3} r_1 (-z)^{\frac{3}{2}} + 2 s_1 (-z)^{\frac{1}{2}}, & z \in {}& \compC \setminus [0, \infty),\\
    \theta_2(z) = {}& \frac{2}{3} r_2 z^{\frac{3}{2}} + 2 s_2 z^{\frac{1}{2}}, & z \in {}& \compC \setminus (-\infty, 0],
  \end{aligned}
\end{equation}
and then the $4$-dimensional vector-valued functions
\begin{equation} \label{eq:defn_v_1-v_4}
  \begin{aligned}
    v_1(z) = {}& \frac{1}{\sqrt{2}} e^{-\theta_1(z) + \tau z}
    \left( (-z)^{-\frac{1}{4}},
      0,
      -i(-z)^{\frac{1}{4}},
      0 \right)^T,
    &
    v_2(z) = {}& \frac{1}{\sqrt{2}} e^{-\theta_2(z) - \tau z}
    \left( 0,
      z^{-\frac{1}{4}},
      0,
      iz^{\frac{1}{4}} \right)^T,
    \\ v_3(z) = {}& \frac{1}{\sqrt{2}} e^{\theta_1(z) + \tau z}
    \left( -i(-z)^{-\frac{1}{4}},
      0,
      (-z)^{\frac{1}{4}},
      0 \right)^T,
    &
    v_4(z) = {}& \frac{1}{\sqrt{2}} e^{\theta_2(z) - \tau z}
    \left( 0,
      i z^{-\frac{1}{4}},
      0,
      z^{\frac{1}{4}} \right)^T,
  \end{aligned}
\end{equation}
and the matrix-valued function
\begin{equation} \label{eq:defn_A}
  \acal(z) := \Big( v_1(z), v_2(z), v_3(z), v_4(z) \Big).
\end{equation}
For the fractional powers in \eqref{eq:defn_v_1-v_4} we take the principal branches, so $\acal(z)$ has cuts on $\R_+$ and $\R_-$. More precisely, the functions $v_1(z)$ and $v_3(z)$ each have cuts on the positive real axis, and the functions $v_2(z)$ and $v_4(z)$ each have cuts on the negative real axis. We also define the function $\acal^+(z)$ to be the continuation of $\acal(z)$ from the upper half plane with a cut on the negative imaginary axis, and and $\acal^-(z)$ to be the continuation of $\acal(z)$ from the lower half plane with a cut on the positive imaginary axis. To be concrete, we denote
\begin{equation}
  \acal^{\pm}(z) = \Big( v^{\pm}_1(z), v^{\pm}_2(z), v^{\pm}_3(z), v^{\pm}_4(z) \Big),
\end{equation}
such that for all $j = 1, \dotsc, 4$, $v^{\pm}_j(z) = v_j(z)$ in $\compC_{\pm}$, and the branch cut for $v^{\pm}_j(z)$ is $\{ \mp i t \mid t \geq 0 \}$. 
If we denote by $v_{j}^+(z)$ (resp. $v_{j}^-(z)$) the limiting value of $v_j(z)$ from the upper (resp. lower) half-plane for $j=1,2,3,4$, then we have the following relations:
\begin{equation} \label{eq:v_jumps}
\begin{aligned}
  v^+_1(z) = -v^-_3(z) &\quad \text{and} & & v^+_3(z) =  v^-_1(z), & z\in {}& \R_+, \\
  v^+_2(z) = -v^-_4(z) &\quad \text{and} & & v^+_4(z) =  v^-_2(z), & z\in {}& \R_-.
\end{aligned}
\end{equation}

Again due to the Stokes phenomenon, we cannot find solutions that satisfy the boundary conditions at $\infty$ from all directions, but only sectorally. Here we follow the notation in \cite{Duits-Geudens13} and define six overlapping sectors in the complex plane
\begin{equation} \label{eq:defn_Omega_sectors}
  \Omega_j := \left\{ z\in \C : -\frac{\pi}{12}+\frac{j\pi}{3} < \arg z <\frac{7\pi}{12}+\frac{j\pi}{3}\right\}, \quad j = 0, \dotsc, 5,
\end{equation}
as shown in Figure \ref{fig:Omega_k}.
The following result was proved in \cite[Lemma 5.2]{Duits-Geudens13}.
\begin{prop}\label{prop:Duits-Geudens}
For fixed $r_1, r_2>0$ and $\Om_j$ one of the sectors defined in \eqref{eq:defn_Omega_sectors}, the equation \eqref{int:7_a} has a unique fundamental solution $M^{(j)}$ such that as $z\to \infty$ within $\Om_j$, 
\begin{equation}\label{eq:M_asymptotics}
M^{(j)}(z) = 
\begin{cases}
\left(I+\bigO(z^{-1})\right)\acal^+(z), & \text{for $j = 0,1,2$}, \\
\left(I+\bigO(z^{-1})\right)\acal^-(z), & \text{for $j = 3,4,5$}.
\end{cases}
\end{equation}
\end{prop}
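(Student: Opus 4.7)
My plan follows the strategy indicated in the paragraph preceding the proposition: verify that $\acal(z)$ provides the correct formal fundamental solution of \eqref{int:7_a} near $z=\infty$, and then apply the classical sectoral existence theorem (see \cite{Wasow87}) to upgrade this formal data to holomorphic fundamental solutions on each $\Omega_j$. For the first step I would write $M = \acal\,\Psi$ and substitute into \eqref{int:7_a}; using the explicit formulas \eqref{int:5} for $U$ and \eqref{eq:defn_v_1-v_4} for $\acal$, together with the algebraic identities \eqref{int:6}, \eqref{int:10}, \eqref{int:11}---in particular the PII equation for $q$ and $u'=-q^2$---a direct if lengthy calculation shows
\[
  \frac{\d}{\d z}\Psi = R(z)\,\Psi, \qquad R(z) = \bigO(z^{-1}) \text{ as } z \to \infty,
\]
so that $\Psi$ admits a formal solution $\Psi = I + \sum_{n\ge 1}\Psi_n z^{-n}$ with coefficients determined recursively by $R$. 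The $z$-dependent leading part of $U$ is cancelled precisely by the exponential factors in $\acal$, which is the content of the WKB ansatz \eqref{eq:defn_v_1-v_4}.

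\emph{Reduction to standard form.} The exponents in \eqref{eq:theta} grow like $z^{3/2}$, so $z=\infty$ is an irregular singularity of half-integer rank, just outside the scope of the classical theorem. I would pass to the double cover $z=w^2$: the pulled-back system $\partial_w M = 2w\,U(w^2)\,M$ has polynomial coefficients of degree $3$ in $w$ with formal exponents $\pm\tfrac{2}{3}r_j w^3 \pm \tau w^2 \pm 2s_j w$, and after the gauge by the pullback of $\acal$ (whose leading matrix is semisimple with distinct eigenvalues) the resulting system is in standard normal form with integer Poincar\'e rank $3$. Sibuya's asymptotic existence theorem then produces, in each maximal Stokes sector in the $w$-plane, a unique holomorphic solution whose asymptotic expansion at $\infty$ matches the formal series from Step~1.

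\emph{Transporting sectors back to $z$.} A maximal Stokes sector in $w$ for a rank-$3$ irregular singularity has opening $\pi/3$, and under $z = w^2$ it doubles to opening $2\pi/3$ in $z$, exactly matching the width of each $\Omega_j$ in \eqref{eq:defn_Omega_sectors}. The two sheets of the cover correspond to the two branch choices $\acal^\pm$: $\Omega_0,\Omega_1,\Omega_2$ lie on the sheet compatible with the branch cut on the negative imaginary axis and are matched with $\acal^+$, while $\Omega_3,\Omega_4,\Omega_5$ are matched with $\acal^-$. Uniqueness of each $M^{(j)}$ on $\Omega_j$ is inherited directly from the $w$-plane uniqueness in Sibuya's theorem.

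\emph{Main obstacle.} The principal technical difficulty is branch-cut bookkeeping rather than the existence machinery. One must verify (i) that each $\Omega_j$ is a sharp maximal Stokes sector containing the correct number of Stokes rays of the exponent pairs $\pm\theta_1\pm\tau z$, $\pm\theta_2\pm\tau z$, so that Sibuya applies without ambiguity; (ii) that the branch assignment $\acal^+$ on $\Omega_0,\Omega_1,\Omega_2$ and $\acal^-$ on $\Omega_3,\Omega_4,\Omega_5$ is the one consistent with the lift $w=\sqrt{z}$; and (iii) that the gauge transformation $M = \acal\,\Psi$ is well-defined across the branch cuts of $\acal^\pm$, which follows from the jump relations \eqref{eq:v_jumps}. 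Verifying these compatibilities is straightforward but laborious, and occupies the bulk of the argument in \cite{Duits-Geudens13}.
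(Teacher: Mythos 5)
The paper does not prove this proposition; it cites \cite[Lemma 5.2]{Duits-Geudens13} verbatim (with a remark that the extension from $r_1=r_2=1$, $s_1=s_2$ to general parameters is routine). Your outline — gauge by the WKB ansatz $\acal$, pull back along $z=w^2$ to an integer-rank singularity, apply Sibuya/Wasow in each maximal sector, and then unwind the bookkeeping — is the right shape of argument and is essentially what \cite{Duits-Geudens13} does. So your strategy is compatible with the paper's (implicitly cited) proof.

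There is, however, a concrete gap that undermines the sharp form of the statement. In Step 1 you assert that a direct calculation gives $R(z)=\acal^{-1}(U\acal-\acal')=\bigO(z^{-1})$ and that $\Psi$ has a \emph{formal series in integer powers} $\Psi=I+\sum_{n\ge 1}\Psi_n z^{-n}$. But after passing to $w=\sqrt{z}$, the singularity is of integer rank $3$ in $w$, and the formal fundamental solution produced by the general theory (Wasow, Theorem 19.1) is a series in powers of $w^{-1}=z^{-1/2}$, not $z^{-1}$. Nothing in your Steps 1–3 forces the odd-order coefficients $\Psi_{1/2},\Psi_{3/2},\ldots$ to vanish; the generic conclusion would be $M^{(j)}(z)=(I+\bigO(z^{-1/2}))\acal^{\pm}(z)$, which is strictly weaker than \eqref{eq:M_asymptotics}. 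The paper flags exactly this: the remark following the proposition says the $z^{-1}$ rate ``is the result of some symmetry in the equation \eqref{int:7_a}, see the proof in \cite{Duits-Geudens13}.'' Your sketch does not identify or use that symmetry, and hiding it inside ``a direct if lengthy calculation shows $R(z)=\bigO(z^{-1})$'' is precisely where the proof would need to do real work. To close the gap you would have to exhibit the symmetry of \eqref{int:7_a} (in the $w$-variable, an involution exchanging the two sheets of the cover and acting on the gauged system) that kills the half-integer-order corrections, or otherwise verify by hand that the odd coefficients in the formal $w^{-1}$-expansion vanish.

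A secondary, smaller point: the sectors $\Omega_j$ have opening exactly $2\pi/3$, which corresponds to $w$-sectors of opening exactly $\pi/3$; for a rank-$3$ singularity this is the borderline width, so one should check that the Sibuya/Wasow statement being invoked applies to sectors of this exact opening (some versions require a strictly smaller closed subsector, or produce uniqueness only after a separate Liouville–Phragm\'en–Lindel\"of argument). You mention Stokes-ray bookkeeping in item (i) of your ``main obstacle'' paragraph, but the marginal-width issue deserves explicit mention since it is what makes the uniqueness claim tight.
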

\begin{rem}
  In \cite[Lemma 5.2]{Duits-Geudens13}, the above result is stated for $s_1=s_2 \in \realR$, and $r_1=r_2=1$, but it is trivial to extend to the more general parameters $s_1, s_2, \tau$ and $r_1, r_2 >0$.
\end{rem}

\begin{rem}
  The general theory outlined in \cite[Theorem 19.1]{Wasow87} would indicate a weaker result, namely an asymptotic expansion in powers of $z^{-1/2}$ rather than in powers of $z^{-1}$. The stronger asymptotics above are the result of some symmetry in the equation \eqref{int:7_a}, see the proof in \cite{Duits-Geudens13}.
\end{rem}


Below we construct six 4-vector-valued functions solutions to
\begin{equation} \label{eq:diff_eq_4d}
  \frac{\partial}{\partial z} m = Um,
\end{equation}
which we denote by $n^{(0)}, \dotsc, n^{(5)}$, explicitly from the solutions to the Flaschka--Newell Lax pair \eqref{int:2a}. It is then shown that the solution $n^{(j)}$ is recessive in the sector $S_j$ which was defined in \eqref{eq:Sj_sectors}. Thus these solutions comprise the essential components of the fundamental solutions $M^{(j)}$ satisfying \eqref{eq:M_asymptotics}.

\subsection{Main results}

In order to state the construction and properties of $n^{(0)}, \dotsc, n^{(5)}$, we first introduce some notations. Suppose $\Gamma=\Gamma_1 \cup \Gamma_2 \cup \Gamma_3$ is a trivalent contour, where $\Gamma_1, \Gamma_2,$ and $\Gamma_3$ are three rays in the complex plane which meet at the origin such that $\Gamma_1$ and $\Gamma_2$ are oriented away from the origin, and $\Gamma_3$ is oriented towards the origin. Denote $a, b$, $c$, $\gamma_1$, and $\gamma_2$ as
\begin{equation}\label{diffeq:2}
  \begin{gathered}
  a = \frac{4}{3}\left(\frac{r_1^2-r_2^2}{r_1^2+r_2^2}\right), \quad b = \frac{8\tau}{C^2(r_1^2+r_2^2)}, \quad c = \frac{1}{C}\left[\frac{4\tau^2(r_1^2-r_2^2)}{(r_1^2+r_2^2)^2}-2\left(\frac{s_1}{r_1}-\frac{s_2}{r_2}\right)\right], \\
\ga_1=\exp\left(-\frac{8r_1^4\tau^3}{3(r_1^2+r_2^2)^3}+\frac{4r_1s_1\tau}{r_1^2+r_2^2}\right), \quad \ga_2=\exp\left(-\frac{8r_2^4\tau^3}{3(r_1^2+r_2^2)^3}+\frac{4r_2s_2\tau}{r_1^2+r_2^2}\right),
  \end{gathered}
\end{equation}
and then the function
\begin{equation}
  G(\zeta) = \exp \left( ia \zeta^3 +  b \zeta^2 + ic \zeta \right),
\end{equation}
and the related functions
\begin{equation} \label{eq:defn_of_G_1-G_4}
  G_1(\zeta) = \sqrt{\frac{2}{\pi}} \frac{\gamma_1}{C \sqrt{r_1}} G(\zeta), \quad G_2(\zeta) = \sqrt{\frac{2}{\pi}} \frac{\gamma_2}{C \sqrt{r_2}} G(\zeta), \quad G_3(\zeta) = \frac{2i}{C} \zeta G_1(\zeta), \quad G_4(\zeta) = \frac{2i}{C} \zeta G_2(\zeta).
\end{equation}
Define now an integral transform $\qcal_{\Gamma}$ that transforms two $2$-dimensional vector-valued functions $f(\zeta) = (f_1(\zeta), f_2(\zeta))^T$ and $g(\zeta) = (g_1(\zeta), g_2(\zeta))^T$ into a $4$-dimensional vector-valued function given by,
\begin{multline} \label{Q_def}
  \qcal_\Gamma (f,g)(z) := \\
  \M 
  \begin{pmatrix}
    \int_{\Gamma_1}e^{\frac{2iz\z}{C}} f_1(\zeta) G_1(\zeta) d\zeta + \int_{\Gamma_2}e^{\frac{2iz\z}{C}}g_1(\zeta) G_1(\zeta) d\zeta + \int_{\Gamma_3} e^{\frac{2iz\z}{C}} (f_1(\zeta) + g_1(\zeta)) G_1(\zeta) d\zeta \\
    \int_{\Gamma_1}e^{\frac{2iz\z}{C}} f_2(\zeta) G_2(\zeta) d\zeta + \int_{\Gamma_2}e^{\frac{2iz\z}{C}} g_2(\zeta) G_2(\zeta) d\zeta + \int_{\Gamma_3}e^{\frac{2iz\z}{C}}(f_2(\zeta) + g_2(\zeta)) G_2(\zeta) d\zeta \\
    \int_{\Gamma_1}e^{\frac{2iz\z}{C}} f_1(\zeta) G_3(\zeta) d\zeta + \int_{\Gamma_2}e^{\frac{2iz\z}{C}} g_1(\zeta) G_3(\zeta) d\zeta + \int_{\Gamma_3} e^{\frac{2iz\z}{C}} (f_1(\zeta) + g_1(\zeta)) G_3(\zeta) d\zeta \\
    \int_{\Gamma_1}e^{\frac{2iz\z}{C}} f_2(\zeta) G_4(\zeta) d\zeta + \int_{\Gamma_2}e^{\frac{2iz\z}{C}} g_2(\zeta) G_4(\zeta) d\zeta + \int_{\Gamma_3}e^{\frac{2iz\z}{C}} (f_2(\zeta) + g_2(\zeta)) G_4(\zeta) d\zeta
  \end{pmatrix},
\end{multline}
where $r_1$, $r_2$, $s_1$, $s_2$, $\tau$, and $C$ are the parameters in \eqref{int:5} and \eqref{int:6}, and
\begin{equation} \label{eq:defn_M4x4}
  \M = e^{-\tau z \left( \frac{r^2_1 - r^2_2}{r^2_1 + r^2_2} \right)}
  \begin{pmatrix}
    1 & 0 & 0 & 0 \\
    0 & 1 & 0 & 0 \\
    \frac{i}{r_1} \left( \tau \frac{r^2_1 - r^2_2}{r^2_1 + r^2_2} + \tau - s^2_1 + \frac{u}{C} \right) & \frac{i}{r_1} \frac{\sqrt{r_2} q}{\gamma \sqrt{r_1} C} & \frac{-i}{r_1} & 0 \\
    \frac{-i}{r_2} \frac{\gamma \sqrt{r_1} q}{\sqrt{r_2} C} & \frac{i}{r_2} \left( \tau \frac{r^2_1 - r^2_2}{r^2_1 + r^2_2} - \tau + s^2_2 - \frac{u}{C} \right) & 0 & \frac{-i}{r_2}
  \end{pmatrix},
\end{equation}
where $q=q(\sg)$ is any fixed PII solution evaluated at $\sg$ defined in \eqref{int:9}, and $u=u(\sg)$ is defined in \eqref{int:10}.

We have the following proposition.
\begin{prop} \label{diffeq}
Fix some solution $q$ to \eqref{int:1} and let $\sg$ be as in \eqref{int:9} such that it is not a pole of $q$. Let $\phi(\z)$ and $\varphi(\z)$ be any two 2-vector solutions to \eqref{int:2a}. Assume that for a particular choice of $\Gamma$ the integral transform $\qcal_{\Gamma} (\phi,\varphi)(z)$ exists and is finite for every $z\in \C$. Then $\qcal_{\Gamma} (\phi, \varphi)(z)$ solves the differential equation \eqref{eq:diff_eq_4d}.
\end{prop}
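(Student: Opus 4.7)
The plan is to verify by direct computation that $\partial_z\qcal_\Gamma(\phi,\varphi)(z) = U\cdot\qcal_\Gamma(\phi,\varphi)(z)$. The prefactor $\M$ in \eqref{eq:defn_M4x4} depends on $z$ only through the scalar $e^{-\tau z(r_1^2-r_2^2)/(r_1^2+r_2^2)}$, so I would first separate that factor and work directly with the column $\vec I(z)$ of integrals for which $\qcal_\Gamma(\phi,\varphi)(z) = \M\vec I(z)$. The claim is then equivalent to a linear system $\vec I\,'(z) = \widetilde U\vec I(z)$ for an explicit matrix $\widetilde U$ built from $U$ and from $\M$. Differentiating $\vec I$ componentwise under the integral sign is legitimate thanks to the finiteness assumption, and each differentiation brings down a factor $2i\zeta/C$ from $e^{2iz\zeta/C}$.

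\textbf{Main manipulations.}
Two simple observations drive the computation. First, \eqref{eq:defn_of_G_1-G_4} gives $G_3(\zeta) = (2i\zeta/C)G_1(\zeta)$ and $G_4(\zeta) = (2i\zeta/C)G_2(\zeta)$, so $\partial_z \vec I_1 = \vec I_3$ and $\partial_z\vec I_2 = \vec I_4$ immediately; after $\M$-conjugation this reproduces the first two rows of $\widetilde U$ and matches the $U^{12}$-block $\diag(ir_1,ir_2)$ of $U$. Second, $\partial_z\vec I_3$ and $\partial_z\vec I_4$ each carry a factor $(2i\zeta/C)^2 = -4\zeta^2/C^2$. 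The $\zeta^2$-factor is eliminated using the Flaschka--Newell equation $\phi' = A\phi$, which expresses $\zeta^2\phi_1$ as a combination of $\phi_1'$, $\phi_1$, $\phi_2$, and $\zeta\phi_2$, and analogously for $\zeta^2\phi_2$ and for $\varphi$. The $\phi_1'$ and $\phi_2'$ terms are then absorbed by integration by parts against $e^{2iz\zeta/C}G_j(\zeta)$; this maneuver produces (i) the $z$-linear entries of $U^{21}$ from $\partial_\zeta e^{2iz\zeta/C} = (2iz/C)e^{2iz\zeta/C}$, and (ii) additional $\zeta^2,\zeta,1$ terms coming from the logarithmic derivative $G_j'/G_j = 3ia\zeta^2+2b\zeta+ic$. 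A single re-application of Flaschka--Newell then reduces the residual $\zeta^2$ back to linear and constant terms, after which every surviving integral is recognizable as a component of $\vec I$ or of $z\vec I$ with a determined scalar coefficient.

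\textbf{Boundary terms and the main obstacle.}
Integration by parts is justified only if the boundary contributions vanish. At infinity this is guaranteed by the standing hypothesis that $\qcal_\Gamma(\phi,\varphi)(z)$ is finite for all $z\in\C$. At the origin the trivalent structure of $\Gamma$ is exactly what is needed: $\Gamma_1$ (outgoing) contributes $-\phi(0)G(0)$, $\Gamma_2$ (outgoing) contributes $-\varphi(0)G(0)$, and $\Gamma_3$ (incoming, carrying $\phi+\varphi$) contributes $+(\phi+\varphi)(0)G(0)$, so the three terms telescope to zero. A further useful feature is that the ratio $G_1/G_2 = \gamma_1\sqrt{r_2}/(\gamma_2\sqrt{r_1})$ is independent of $\zeta$, so an integral of $\phi_2 G_1 e^{2iz\zeta/C}$ arising from the substitution is automatically a scalar multiple of the corresponding piece of $\vec I_2$. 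The genuine obstacle is the final algebraic matching: one must check that all the coefficients collected above reconstruct the intricate entries of $U$ in \eqref{int:5}. This is precisely where the Frobenius-compatibility identities \eqref{int:6} for $C$ and $\gamma$, the Hamiltonian relations $r = q'$ and \eqref{int:10}--\eqref{int:11}, and the definitions \eqref{int:9} and \eqref{diffeq:2} of $\sigma$ and of $a,b,c,\gamma_1,\gamma_2$ come into play; those quantities were engineered precisely so that the output of the computation matches $\M U\M^{-1}$ after the scalar factor in $\M$ is restored. The bookkeeping is lengthy but mechanical, and this last step carries the weight of the proof.
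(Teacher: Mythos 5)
Your strategy and nearly all of the key manipulations coincide with the paper's proof: you correctly identify that $G_3=\frac{2i\zeta}{C}G_1$ and $G_4=\frac{2i\zeta}{C}G_2$ give the first two rows for free, that the remaining rows bring down $-4\zeta^2/C^2$, that the Flaschka--Newell equation \eqref{int:2a} converts $\zeta^2 h_1$ into $h_1',\,h_1,\,h_2,\,\zeta h_2$, that integration by parts eliminates $h_1'$, and that the boundary terms at the origin cancel by the trivalent $-f(0)-g(0)+(f+g)(0)=0$ telescoping. (The paper organizes the algebra slightly differently, first eliminating $m_3,m_4$ and reducing to the coupled second-order system \eqref{int:13}--\eqref{int:13_2} before substituting the integral ansatz, but that is a bookkeeping choice, not a different idea.)

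The one step that does not go through as stated is the closing of the loop. After integration by parts, the logarithmic derivative $G_1'/G_1 = 3ia\zeta^2 + 2b\zeta + ic$ reintroduces a $\zeta^2 h_1$ term, and you propose to remove it by ``a single re-application of Flaschka--Newell.'' This does not terminate: applying \eqref{int:2a} to the new $\zeta^2 h_1$ produces another $h_1'$, which after integration by parts against $e^{2iz\zeta/C}G_1$ yields yet another $\zeta^2 h_1$ (now with an extra factor $\tfrac{3a}{4}$), and so on. One could try to sum the resulting geometric series, which converges since $|3a/4|=|r_1^2-r_2^2|/(r_1^2+r_2^2)<1$, but that requires an extra convergence argument and is not a finite ``single'' step. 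The paper avoids the recursion entirely via a different observation: since $\zeta^2 e^{2iz\zeta/C} = -\tfrac{C^2}{4}\,\partial_z^2 e^{2iz\zeta/C}$, the troublesome $3ia\zeta^2$ piece of $G_1'/G_1$ contributes a term proportional to $\partial_z^2 I_1$ itself (see \eqref{eq:id_with_2nd_deriv_I_1}). Moving it to the left-hand side and dividing by $1-\tfrac{3a}{4}$ gives the closed formula \eqref{eq:2nd_deriv_I_1_final} in one algebraic step, after which the identification with the entries of $U$ proceeds as you describe. Replace the ``re-apply Flaschka--Newell'' step with this $\partial_z^2$-identification and the argument is complete.
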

The proof of this proposition is given in Section \ref{diff_eq_proof}.

Now we make a special choice of $\phi(\zeta)$ and $\varphi(\zeta)$ in Proposition \ref{diffeq} and define the particular solutions $n^{(0)}, \dotsc, n^{(5)}$ of \eqref{eq:diff_eq_4d}. Recall the rays $\Sigma_0, \dotsc, \Sigma_5$ defined in \eqref{eq:defn_Sigma_k} (see also Figure \ref{fig:2x2jump}). We define the trivalent contours $\Gamma^{(0)}, \dotsc, \Gamma^{(5)}$ as the $\Gamma$ in Proposition \ref{diffeq} as follows:
\begin{equation} \label{eq:defn_Gamma^(k)}
  \Gamma^{(k)} = \Gamma^{(k)}_1 \cup \Gamma^{(k)}_2 \cup \Gamma^{(k)}_3, \quad \text{where} \quad \Gamma^{(k)}_1 = \Sigma_{1 - k}, \quad \Gamma^{(k)}_2 = \Sigma_{2 - k}, \quad  \Gamma^{(k)}_3 = (-\Sigma_{3 - k}),
\end{equation}
where the contours $\Sg_j$ are oriented towards infinity,  $-\Sigma_{j}$ means the contour $\Sigma_{j}$ oriented in the opposite direction, and $\Sigma_{i - 6} = \Sigma_i$. For an illustration of the contours, see Figure \ref{fig:six_contours}.

Then we define
\begin{equation}\label{eq:def_of_nj1}
  n^{(k)}(z) = n^{(k)}(z; r_1, r_2, s_1, s_2, \tau) = \qcal_{\Gamma^{(k)}}(f^{(k)}, g^{(k)}), \quad k = 0, \dotsc, 5,
\end{equation}
where $r_1, r_2, s_1, s_2, \tau$ are parameters in the formula of $\qcal_{\Gamma}$, and $f^{(k)}$ and $g^{(k)}$ are the columns of fundamental solutions to \eqref{int:2a}, given as
\begin{equation}\label{eq:def_of_nj2}
  \begin{aligned}
    f^{(2j)}(\zeta) = {}&
      \begin{pmatrix}
        \Psi^{(1 - 2j)}_{1, 2}(\zeta; \sigma) \\  
        \Psi^{(1 - 2j)}_{2, 2}(\zeta; \sigma) \\  
      \end{pmatrix},
      & g^{(2j)}(\zeta) = {}& t_{2 + j}
      \begin{pmatrix}
        \Psi^{(1 - 2j)}_{1, 1}(\zeta; \sigma)(z) \\
        \Psi^{(1 - 2j)}_{2, 1}(\zeta; \sigma)(z)
      \end{pmatrix}, \\
    f^{(2j + 1)}(\zeta) = {}&
      \begin{pmatrix}
        \Psi^{(-2j)}_{1, 1}(\zeta; \sigma) \\  
        \Psi^{(-2j)}_{2, 1}(\zeta; \sigma) \\  
      \end{pmatrix},
      & g^{(2j + 1)}(\zeta) = {}& t_{1 + j}
      \begin{pmatrix}
        \Psi^{(-2j)}_{1, 1}(\zeta; \sigma)(z) \\
        \Psi^{(-2j)}_{2, 1}(\zeta; \sigma)(z)
      \end{pmatrix},
  \end{aligned}
  \qquad \text{for $j = 0, 1, 2$},
\end{equation}
where the parameter $\sigma$ is determined by the relation \eqref{int:9}, and $(t_1, t_2, t_3)$ are the Stokes multipliers corresponding to the chosen PII solution. We use the notational conventions $t_{3 + i} = t_i$ and $\Psi^{(6 + i)} = \Psi^{(i)}$, and the subscripts refer to the matrix entries. Here we note that all the $f^{(k)}$ and $g^{(k)}$ are linear combinations of $\psi^{(1)}$ and $\psi^{(2)}$, as shown in Figure \ref{fig:six_contours}, and by the jump condition \eqref{eq:jump_condition_2x2} we see that,
\begin{multline} \label{eq:form_f_and_g}
  f^{(2j)}(\zeta) + g^{(2j)}(\zeta; \sigma) =
  \begin{pmatrix}
    \Psi^{(2- 2k)}_{1, 2}(\zeta; \sigma) \\  
    \Psi^{(2- 2k)}_{2, 2}(\zeta; \sigma)
  \end{pmatrix}, \quad
  f^{(2j + 1)}(\zeta; \sigma) + g^{(2j + 1)}(\zeta; \sigma) =
  \begin{pmatrix}
    \Psi^{(1 - 2k)}_{1, 1}(\zeta; \sigma) \\  
    \Psi^{(1 - 2k)}_{2, 1}(\zeta; \sigma)
  \end{pmatrix}, \\
  \text{for $j = 0, 1, 2$}.
\end{multline}

From the definitions of the functions $n^{(j)}(z)$ and the relation \eqref{eq:triple_number}, the linear relations between them are easy to see, especially in Figure \ref{fig:six_contours}. We have, for example, the pair of independent relations
 \begin{subequations} \label{nj_relations}
 \begin{equation} \label{nj_relations_a}
 n^{(5)}(z) = -t_3 n^{(0)}(z)-(1+t_2t_3) n^{(1)}(z) + t_2n^{(2)}(z) -n^{(3)}(z),
 \end{equation}
  \begin{equation} \label{nj_relations_b}
 n^{(0)}(z) = -t_2 n^{(1)}(z)-(1+t_1t_2) n^{(2)}(z) + t_1n^{(3)}(z) -n^{(4)}(z).
 \end{equation}
 \end{subequations}

\begin{figure}[htb]
  \centering
  \includegraphics{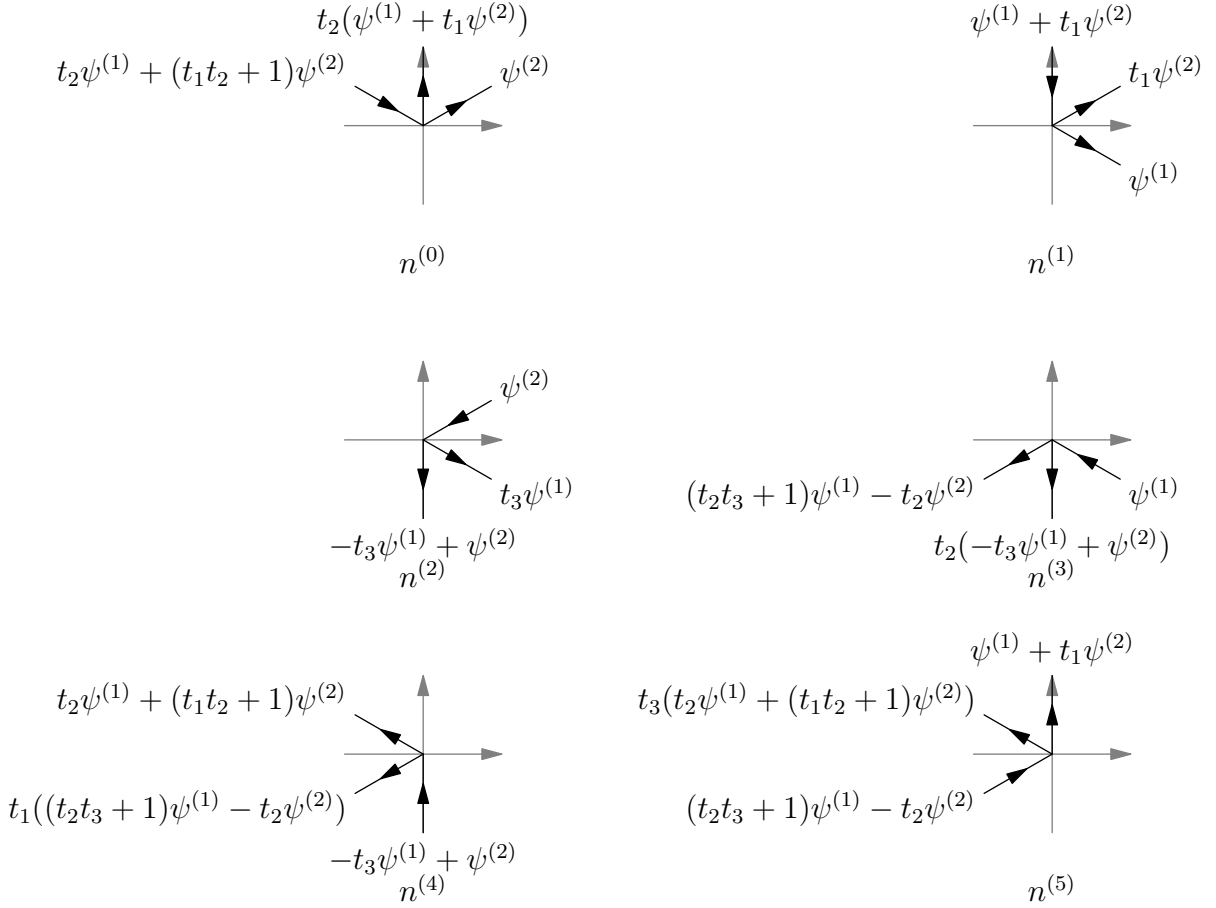}
  \caption{The contours $\Gamma^{(k)}$ for the integral representation of $n^{(0)}, \dotsc, n^{(5)}$. On each ray a two dimensional vector in the form of $c_1 \psi^{(1)} + c_2 \psi^{(2)}$ is given, and they are $f$, $g$, or $f + g$ in the integral formulas $\qcal_{\Gamma^{(k)}}$.}
  \label{fig:six_contours}
\end{figure}

The next result of the paper is that the solutions $n^{(0)}, \dotsc, n^{(5)}$ of \eqref{eq:diff_eq_4d} satisfy the asymptotics of the columns of the fundamental solutions $M^{(0)}, \dotsc, M^{(5)}$ in some of the sectors $\Omega_0, \dotsc, \Omega_5$, and thus these fundamental solutions can be built from the columns $n^{(0)}, \dotsc, n^{(5)}$. To state the proposition, we recall the functions $\theta_1(z)$ and $\theta_2(z)$ defined in \eqref{eq:theta}.

\begin{prop} \label{thm:main_asy}
  Suppose $\delta > 0$ is a small constant. For $z = r e^{i\theta}$ with $\theta$ fixed and $r \to +\infty$, we have the following asymptotic results, where all power functions take the principal branch $(-\pi, \pi)$:
  \begin{enumerate}
  \item \label{enu:thm:main_asy_1}
    Uniformly for $\theta \in (-\pi/3 + \delta, \pi/3 - \delta)$
    \begin{equation}
      n^{(0)}(z) = \frac{1}{\sqrt{2}}e^{-\theta_2(z) - \tau z}
      \left( \bigO(z^{-\frac{3}{4}}),
        z^{-\frac{1}{4}} + \bigO(z^{-\frac{3}{4}}),
        \bigO(z^{-\frac{1}{4}}),
        iz^{\frac{1}{4}} + \bigO(z^{-\frac{1}{4}}) \right)^T,
    \end{equation}
    uniformly for $\theta \in [0, \pi/3 - \delta)$
    \begin{equation}
      n^{(2)}(z) = -\frac{1}{\sqrt{2}}e^{\theta_2(z) - \tau z}
      \left( \bigO(z^{-\frac{3}{4}}),
        i z^{-\frac{1}{4}} + \bigO(z^{-\frac{3}{4}}),
        \bigO(z^{-\frac{1}{4}}),
        z^{\frac{1}{4}} + \bigO(z^{-\frac{1}{4}}) \right)^T,
    \end{equation}
    and uniformly for $\theta \in (-\pi/3 + \delta, 0]$
    \begin{equation}
      n^{(4)}(z) = \frac{1}{\sqrt{2}}e^{\theta_2(z) - \tau z}
      \left( \bigO(z^{-\frac{3}{4}}),
      i z^{-\frac{1}{4}} + \bigO(z^{-\frac{3}{4}}),
      \bigO(z^{-\frac{1}{4}}),
      z^{\frac{1}{4}} + \bigO(z^{-\frac{1}{4}}) \right)^T.
    \end{equation}
  \item \label{enu:thm:main_asy_2}
    Uniformly for $\theta \in (\pi/3 + \delta, \pi - \delta)$
    \begin{equation}
      n^{(2)}(z) = -\frac{1}{\sqrt{2}}e^{\theta_2(z) - \tau z}
      \left( \bigO(z^{-\frac{3}{4}}),
        i z^{-\frac{1}{4}} + \bigO(z^{-\frac{3}{4}}),
        \bigO(z^{-\frac{1}{4}}),
        z^{\frac{1}{4}} + \bigO(z^{-\frac{1}{4}}) \right)^T,
    \end{equation}
    uniformly for $\theta \in [2\pi/3, \pi - \delta)$
    \begin{equation}
      n^{(4)}(z) = -\frac{1}{\sqrt{2}}e^{-\theta_2(z) - \tau z}
      \left( \bigO(z^{-\frac{3}{4}}),
        z^{-\frac{1}{4}} + \bigO(z^{-\frac{3}{4}}),
        \bigO(z^{-\frac{1}{4}}),
        iz^{\frac{1}{4}} + \bigO(z^{-\frac{1}{4}}) \right)^T,
    \end{equation}
    and uniformly for $\theta \in (\pi/3 + \delta, 2\pi/3]$
    \begin{equation}
      n^{(0)}(z) = \frac{1}{\sqrt{2}}e^{-\theta_2(z) - \tau z}
      \left( \bigO(z^{-\frac{3}{4}}),
        z^{-\frac{1}{4}} + \bigO(z^{-\frac{3}{4}}),
        \bigO(z^{-\frac{1}{4}}),
        iz^{\frac{1}{4}} + \bigO(z^{-\frac{1}{4}}) \right)^T.
    \end{equation}
  \item \label{enu:thm:main_asy_3}
    Uniformly for $\theta \in (\pi + \delta, 5\pi/3 - \delta)$
    \begin{equation}
      n^{(4)}(z) = \frac{1}{\sqrt{2}}e^{\theta_2(z) - \tau z}
      \left( \bigO(z^{-\frac{3}{4}}),
        i z^{-\frac{1}{4}} + \bigO(z^{-\frac{3}{4}}),
        \bigO(z^{-\frac{1}{4}}),
        z^{\frac{1}{4}} + \bigO(z^{-\frac{1}{4}}) \right)^T,
    \end{equation}
    uniformly for $\theta \in [4\pi/3, 5\pi/3 - \delta)$
    \begin{equation}
      n^{(0)}(z) = \frac{1}{\sqrt{2}}e^{-\theta_2(z) - \tau z}
      \left( \bigO(z^{-\frac{3}{4}}),
        z^{-\frac{1}{4}} + \bigO(z^{-\frac{3}{4}}),
        \bigO(z^{-\frac{1}{4}}),
        iz^{\frac{1}{4}} + \bigO(z^{-\frac{1}{4}}) \right)^T,
    \end{equation}
    and uniformly for $\theta \in (\pi + \delta, 4\pi/3]$
    \begin{equation}
      n^{(2)}(z) = -\frac{1}{\sqrt{2}}e^{-\theta_2(z) - \tau z}
      \left( \bigO(z^{-\frac{3}{4}}),
        z^{-\frac{1}{4}} + \bigO(z^{-\frac{3}{4}}),
        \bigO(z^{-\frac{1}{4}}),
        iz^{\frac{1}{4}} + \bigO(z^{-\frac{1}{4}}) \right)^T.
    \end{equation}
  \item \label{enu:thm:main_asy_4}
    Uniformly for $\theta \in (\delta, 2\pi/3 - \delta)$
    \begin{equation}
      n^{(1)}(z) = \frac{1}{\sqrt{2}}e^{\theta_1(z) + \tau z}
      \left( -i(-z)^{-\frac{1}{4}} + \bigO(z^{-\frac{3}{4}}),
        \bigO(z^{-\frac{3}{4}}),
        (-z)^{\frac{1}{4}} + \bigO(z^{-\frac{1}{4}}),
        \bigO(z^{-\frac{1}{4}}) \right)^T,
          \end{equation}
    uniformly for $\theta \in([\pi/3, 2\pi/3 - \delta)$
    \begin{equation}
      n^{(3)}(z) = -\frac{1}{\sqrt{2}}e^{-\theta_1(z) + \tau z}
      \left( (-z)^{-\frac{1}{4}} + \bigO(z^{-\frac{3}{4}}),
      \bigO(z^{-\frac{3}{4}}),
      -i(-z)^{\frac{1}{4}} + \bigO(z^{-\frac{1}{4}}),
      \bigO(z^{-\frac{1}{4}}) \right)^T,
    \end{equation}
    and uniformly for $\theta \in (\delta, \pi/3]$
    \begin{equation}
      n^{(5)}(z) = \frac{1}{\sqrt{2}}e^{-\theta_1(z) + \tau z}
      \left( (-z)^{-\frac{1}{4}} + \bigO(z^{-\frac{3}{4}}),
        \bigO(z^{-\frac{3}{4}}),
        -i(-z)^{\frac{1}{4}} + \bigO(z^{-\frac{1}{4}}),
        \bigO(z^{-\frac{1}{4}}) \right)^T.
    \end{equation}
  \item \label{enu:thm:main_asy_5}
    Uniformly for $\theta \in (2\pi/3 + \delta, 4\pi/3 - \delta)$
    \begin{equation}
      n^{(3)}(z) = -\frac{1}{\sqrt{2}}e^{-\theta_1(z) + \tau z}
      \left( (-z)^{-\frac{1}{4}} + \bigO(z^{-\frac{3}{4}}),
      \bigO(z^{-\frac{3}{4}}),
      -i(-z)^{\frac{1}{4}} + \bigO(z^{-\frac{1}{4}}),
      \bigO(z^{-\frac{1}{4}}) \right)^T,
    \end{equation}
    uniformly for $\theta \in [\pi, 4\pi/3 - \delta)$
    \begin{equation}
      n^{(5)}(z) = -\frac{1}{\sqrt{2}}e^{\theta_1(z) + \tau z}
      \left( -i(-z)^{-\frac{1}{4}} + \bigO(z^{-\frac{3}{4}}),
      \bigO(z^{-\frac{3}{4}}),
      (-z)^{\frac{1}{4}} + \bigO(z^{-\frac{1}{4}}),
      \bigO(z^{-\frac{1}{4}}) \right)^T,
    \end{equation}
    and uniformly for $\theta \in (2\pi/3 + \delta \pi]$ 
    \begin{equation}
      n^{(1)}(z) = \frac{1}{\sqrt{2}}e^{\theta_1(z) + \tau z}
      \left( -i(-z)^{-\frac{1}{4}} + \bigO(z^{-\frac{3}{4}}),
      \bigO(z^{-\frac{3}{4}}),
      (-z)^{\frac{1}{4}} + \bigO(z^{-\frac{1}{4}}),
      \bigO(z^{-\frac{1}{4}}) \right)^T.
    \end{equation}
  \item \label{enu:thm:main_asy_6}
    Uniformly for $\theta \in (4\pi/3 + \delta, 2\pi - \delta)$
    \begin{equation}
      n^{(5)}(z) = -\frac{1}{\sqrt{2}}e^{\theta_1(z) + \tau z}
      \left( -i(-z)^{-\frac{1}{4}} + \bigO(z^{-\frac{3}{4}}),
        \bigO(z^{-\frac{3}{4}}),
        (-z)^{\frac{1}{4}} + \bigO(z^{-\frac{1}{4}}),
        \bigO(z^{-\frac{1}{4}}) \right)^T,
    \end{equation}
    uniformly for $\theta \in [5\pi/3, 2\pi - \delta)$
    \begin{equation}
      n^{(1)}(z) = \frac{1}{\sqrt{2}}e^{-\theta_1(z) + \tau z}
      \left( (-z)^{-\frac{1}{4}} + \bigO(z^{-\frac{3}{4}}),
        \bigO(z^{-\frac{3}{4}}),
        -i(-z)^{\frac{1}{4}} + \bigO(z^{-\frac{1}{4}}),
        \bigO(z^{-\frac{1}{4}}) \right)^T,
    \end{equation}
    and uniformly for $\theta \in (4\pi/3 + \delta, 5\pi/3]$
    \begin{equation}
      n^{(3)}(z) = -\frac{1}{\sqrt{2}}e^{-\theta_1(z) + \tau z}
      \left( (-z)^{-\frac{1}{4}} + \bigO(z^{-\frac{3}{4}}),
        \bigO(z^{-\frac{3}{4}}),
        -i(-z)^{\frac{1}{4}} + \bigO(z^{-\frac{1}{4}}),
        \bigO(z^{-\frac{1}{4}}) \right)^T.
    \end{equation}
  \end{enumerate}
\end{prop}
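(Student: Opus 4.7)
My plan is to apply the classical method of steepest descent to each of the three ray integrals that make up $\qcal_{\Gamma^{(k)}}$. The integrand always carries the factor $e^{2iz\zeta/C}$ coming from the kernel, the cubic $G(\zeta) = e^{ia\zeta^3 + b\zeta^2 + ic\zeta}$, and the oscillatory exponential $e^{\mp(4i\zeta^3/3 + i\sigma\zeta)}$ from the asymptotic form of $\psi^{(1)}$ or $\psi^{(2)}$; balancing the cubic against the linear term in $\zeta$ gives the saddle scale $\zeta = O(|z|^{1/2})$, so that the Gaussian neighbourhood of the saddle supplies the $|z|^{\pm 1/4}$ prefactors and the value of the phase at the saddle supplies the $e^{\pm \theta_j(z) + \tau z}$ exponential appearing in the statement.

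First I would substitute into each ray integral the leading-order asymptotics \eqref{eq:asy_psi^1+t_1psi^2}--\eqref{eq:asy_psi^2} of the precise linear combination of $\psi^{(1)}, \psi^{(2)}$ that sits on that ray; by the way the $f^{(k)}, g^{(k)}$ are chosen in \eqref{eq:def_of_nj2}--\eqref{eq:form_f_and_g} and illustrated in Figure \ref{fig:six_contours}, each ray of $\Gamma^{(k)}$ falls inside a sector where the assigned linear combination is genuinely recessive, so its expansion is a purely integer-power series in $\zeta^{-1}$ times a single exponential $e^{\mp(4i\zeta^3/3 + i\sigma\zeta)}$. The integrand then factors as $e^{\Phi_\pm(\zeta;z)}$ times an algebraic prefactor of order $\zeta^0$ (from $G_1, G_2$) or $\zeta^1$ (from $G_3, G_4$), with
\begin{equation*}
  \Phi_\pm(\zeta;z) = i\Big(a \mp \tfrac{4}{3}\Big)\zeta^3 + b\zeta^2 + i\Big(\tfrac{2z}{C} + c \mp \sigma\Big)\zeta.
\end{equation*}
Using $a = \tfrac{4}{3}(r_1^2 - r_2^2)/(r_1^2 + r_2^2)$ together with the identity $r_1^2 + r_2^2 = C^3 r_1^2 r_2^2$ implied by \eqref{int:6}, the cubic coefficient collapses to $-8ir_2^2/(3(r_1^2+r_2^2))$ in the $\psi^{(1)}$ case and to $+8ir_1^2/(3(r_1^2+r_2^2))$ in the $\psi^{(2)}$ case.

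Next, the saddle-point analysis: solving $\partial_\zeta \Phi_\pm = 0$ to leading order in $z$ gives the saddles $\zeta_\star = \pm\tfrac{r_1 C}{2} z^{1/2}$ for $\Phi_-$ and $\zeta_\star = \pm\tfrac{r_2 C}{2}(-z)^{1/2}$ for $\Phi_+$, and a direct substitution yields $\Phi_\pm(\zeta_\star;z) = \pm\theta_j(z) + \tau z + O(z^{-1/2})$ once one inserts $b = 8\tau/(C^2(r_1^2+r_2^2))$ and the explicit $c$ from \eqref{diffeq:2} and observes that the subleading shift of $\zeta_\star$ induced by the $b\zeta^2$ term contributes exactly $\tau z$ to the phase. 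The Gaussian integration around $\zeta_\star$ then yields a $z^{-1/4}$ factor which absorbs the normalization $\sqrt{2/\pi}\,\gamma_j/(C\sqrt{r_j})$ inside $G_1, G_2$ and produces the $\tfrac{1}{\sqrt{2}}z^{-1/4}$ prefactor of $v_j(z)$; the additional factor $2i\zeta_\star/C \sim i r_j z^{1/2}$ hidden in $G_3, G_4$ promotes the $z^{-1/4}$ to $z^{+1/4}$ for the third and fourth components. Finally the leading-order linear action of the matrix $\M$ in \eqref{eq:defn_M4x4} rotates the resulting two-component saddle value into the four-vector that matches $\pm v_j(z)$ (or its analytic continuation $\pm v_j^\pm(z)$).

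The main obstacle is the bookkeeping that assigns a dominant ray to each of the six subcases appearing in parts \ref{enu:thm:main_asy_1}--\ref{enu:thm:main_asy_6}. Since the saddles rotate as $z^{1/2}$ when $\arg z$ varies, the relevant steepest-descent path swings continuously, and the trivalent contour $\Gamma^{(k)}$ must be deformed onto the steepest-descent path through $\zeta_\star$ without crossing the origin or leaving the sector in which the chosen $\psi$-expansion is valid; this determines, for each sector of $\arg z$, which of the three rays of $\Gamma^{(k)}$ is deformable and which two are exponentially suppressed. The reason that three different $n^{(k)}$ of the same parity are needed to cover a single big sector of $\arg z$ is that no single contour admits this deformation throughout the full sector, while the linear relations \eqref{nj_relations}, combined with the Stokes identity \eqref{eq:triple_number}, a priori guarantee that the three resulting expressions agree on their overlap. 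The non-dominant rays contribute only exponentially small terms by a standard integration-by-parts bound, and the uniform $\bigO(z^{-3/4})$ and $\bigO(z^{-1/4})$ error estimates follow by combining the subleading $\zeta^{-1}$ corrections in the $\psi^{(j)}(\zeta)$ expansion with the standard next-order term in the saddle-point method.
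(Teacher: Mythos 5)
Your overall strategy — a steepest-descent / saddle-point analysis of the three ray integrals that build $\qcal_{\Gamma^{(k)}}$, after substituting the sectoral asymptotics of $\Psi^{(j)}$ — is the same as the paper's, and the scaling $\zeta_\star = \bigO(\lvert z\rvert^{1/2})$, the $z^{\pm 1/4}$ prefactors, and the $e^{\pm\theta_j(z)+\tau z}$ phases all come from where you say they come from. However, several ingredients are either wrong or glossed over in a way that would make the argument fail.

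First, a concrete computational error: you have the saddles attached to the wrong index. With cubic coefficient $i\tilde a\zeta^3$ where $\tilde a = a+\tfrac43 = 8r_1^2/(3(r_1^2+r_2^2))$ (the $\psi^{(2)}$ case of parts \ref{enu:thm:main_asy_1}--\ref{enu:thm:main_asy_3}), the critical point equation $3i\tilde a\zeta^2 + i\tilde c = 0$ with $\tilde c\approx 2z/C$ and $r_1^2r_2^2C^3 = r_1^2+r_2^2$ gives
\begin{equation*}
\zeta_\star^2 = -\frac{\tilde c}{3\tilde a} \approx -\frac{C^2 r_2^2}{4}\,z,
\end{equation*}
so the $\psi^{(2)}$ saddle sits at $\zeta_\star \approx \pm\tfrac{i r_2 C}{2}z^{1/2}$ and the $\psi^{(1)}$ saddle (parts \ref{enu:thm:main_asy_4}--\ref{enu:thm:main_asy_6}) at $\zeta_\star\approx\pm\tfrac{r_1C}{2}z^{1/2}$. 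You have $r_1$ and $r_2$ reversed (and the $i$ dropped), which is not a sign convention issue: the exponent $\theta_2(z)$ carries $r_2$ and $\theta_1(z)$ carries $r_1$ exactly because of this, so your stated $\zeta_\star$ are not even critical points of your own $\Phi_\mp$.

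Second, and more seriously, the middle ray $\Gamma^{(k)}_2$ cannot be dismissed as a ``non-dominant ray via integration by parts.'' The rays $\Gamma^{(k)}_1$ and $\Gamma^{(k)}_3$ carry combinations of $\psi^{(1)},\psi^{(2)}$ whose recessive asymptotics are $e^{\frac43 i\zeta^3 + i\sigma\zeta}$ (for even $k$); combining with $G(\zeta)e^{2iz\zeta/C}$ gives the phase $F(\xi) = i\tilde a\xi^3 + i\tilde c\xi$, and the entire arc $\Gamma^{(k)}_1\cup\Gamma^{(k)}_3$ is one contour through the saddle. But $\Gamma^{(k)}_2$ carries $g^{(k)}$, whose leading asymptotics involve the \emph{other} exponential $e^{-\frac43 i\zeta^3 - i\sigma\zeta}$. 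The effective phase there is $\hat F(\xi) = F(\xi) - \tfrac{8i}{3}\xi^3 = i(\tilde a - \tfrac83)\xi^3 + i\tilde c\xi$, a cubic with a different (negative) leading coefficient. Integration by parts on a ray where the phase is purely oscillatory yields only algebraic decay; to make $\Gamma^{(k)}_2$ contribute $o(e^{-\epsilon'\lvert z\rvert^{3/2}})$ one must actually prove $\Re\hat F(\xi) < \Re F(\xi_\pm) - \epsilon'\lvert z\rvert^{3/2}$ uniformly along a suitably chosen $\Gamma^{(k)}_2$, which is precisely what the paper does with the $\hat F$-estimates (and which occupies a nontrivial portion of the proof). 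Your proposal silently replaces this with a different and insufficient tool.

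Third, the assertion that ``the trivalent contour must be deformed onto the steepest-descent path without crossing the origin or leaving the sector in which the chosen $\psi$-expansion is valid'' is exactly the difficult step, and as written it is not an argument. The saddles sit at $\bigO(\lvert z\rvert^{1/2})$ and rotate with $\arg z$, the original rays pass through the origin, and the $\psi^{(j)}$-asymptotics are only valid in $\zeta$-sectors that have to be respected uniformly in the whole $\arg z$-sector. The paper justifies the deformation by a planar-dynamical-system analysis: it classifies the level set of $\Re F$ through $0$ and through $\xi_\pm$, identifies the flow lines of $\nabla\Re F$ emanating from the saddles, and builds $\tilde\Gamma^{(k)}_1\cup\tilde\Gamma^{(k)}_3$ out of pieces of flow lines and rays so that $\Re F$ is maximized exactly at $\xi_\pm$ and the contour stays a fixed distance inside the prescribed $\Delta_j$-sectors, uniformly in $\arg\tilde c$. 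Without something equivalent to this, one cannot justify that the deformation is admissible throughout the stated range of $\theta$, which is the whole point of the uniformity claim in the proposition. In short: the saddle-point framework is right, but the two hard steps — the $\hat F$-estimate for $\Gamma^{(k)}_2$ and the topological construction of admissible deformed contours — are precisely what you have skipped, and the saddle formulas you do write down are incorrect.
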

The proof is given in Section \ref{asymptotics}.

To visualize the result of Proposition \ref{thm:main_asy}, we summarize it in Figure \ref{fig:Prop1_3_summary}. 
\begin{figure}[ht]
  \centering
  \includegraphics{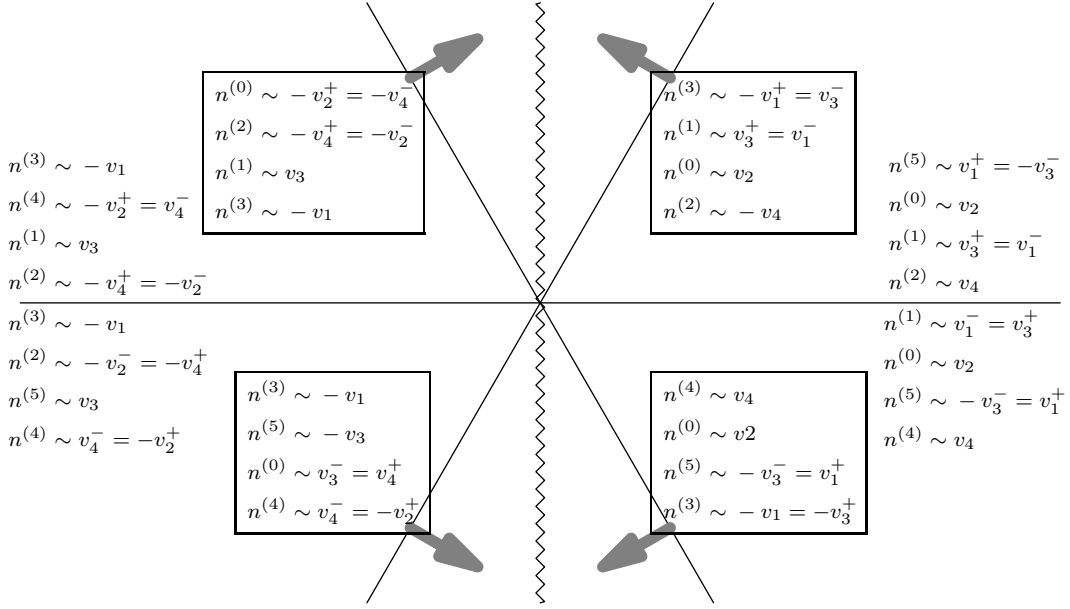}
  \caption{This figure summarizes the result of Proposition \ref{thm:main_asy}. The dividing lines are the real and imaginary axes, as well as $\arg z = \pm \pi/3$ and $\arg z= \pm 4\pi/3$. They separate the complex plane into 8 sectors, and within each sector the leading order behavior of some of the solutions $n^{(k)}(z)$ can be identified with the columns of $\acal^+$ and $\acal^-$. If the function $v_j$ is written without any superscript it means that $v^+_j = v^-_j$ throughout the given sector. The zigzag lines are the branch cuts for $\acal^+$ (the negative imaginary axis), and $\acal^-$ (the positive imaginary axis).}
  \label{fig:Prop1_3_summary}
\end{figure}

As a consequence of Proposition \ref{thm:main_asy}, we see that the vector $n^{(j)}$ is recessive in the sector $S_j$ shown in Figure \ref{fig:2x2jump}. We now describe the entries of the fundamental solutions defined in Proposition \ref{prop:Duits-Geudens} in terms of the solutions $n^{(0)}, \dots, n^{(5)}$.

\begin{theo}\label{fund_solutions}
Fix a PII solution $q(\sg)$ with Stokes multipliers $(t_1, t_2, t_3)$. For $j=0, \dots, 5$, let $M^{(j)}$ be the unique $4\times 4$ matrix-valued solution to \eqref{int:7_a} which satisfies \eqref{eq:M_asymptotics}. We have the following explicit formulas:
\begin{equation}\label{eq:fundamental_solutions}
\begin{aligned}
M^{(0)}&=\left( n^{(5)}+t_3n^{(0)}, n^{(0)}, n^{(1)}, -n^{(2)} \right), &
M^{(1)}&=\left( -n^{(3)}, n^{(0)}+t_2n^{(1)}, n^{(1)}, -n^{(2)} \right), \\
M^{(2)}&=\left( -n^{(3)}, -n^{(4)}, n^{(1)}+t_1 n^{(2)}, -n^{(2)} \right), &
M^{(3)}&=\left( -n^{(3)}, -n^{(2)}-t_3n^{(3)}, -n^{(5)}, n^{(4)} \right), \\
M^{(4)}&=\left( -n^{(3)}-t_2n^{(4)}, n^{(0)}, -n^{(5)}, n^{(4)} \right), &
M^{(5)}&=\left( n^{(1)}, n^{(0)}, -n^{(5)}, n^{(4)} +t_1n^{(5)} \right).
\end{aligned}
\end{equation}
\end{theo}

\medskip

Since we know the linear relations for the six fundamental solutions described above, we may describe them as the solution to a Riemann--Hilbert problem (RHP). In order to state the RHP, define the sectors $\De_j$ as
\begin{equation}\label{def:De_j}
  \De_j  := \left\{ z\in \C : \frac{j\pi}{3} < \arg z <\frac{(j+1)\pi}{3}\right\}, \quad j=0,\dots, 5,
\end{equation}
see Figure \ref{fig:4x4jump}. We then define the function $M(z)$ piecewise in the complex plane as
\begin{equation}\label{def:M}
M(z):=M^{(j)}(z), \quad \textrm{for} \ z \in \De_j, \ j=0, \dots, 5.
\end{equation}
Then $M(z)$ satisfies the following RHP.

\begin{RHP} \label{RHP:4x4}


  \begin{enumerate}[label=(\arabic*)]
 
 \item The $4 \times 4$ matrix-valued function $M$ is analytic in each of the sectors $\De_j$ defined in \eqref{def:De_j}, continuous up to the boundaries, and $M(z) = \bigO(1)$ as $z \to 0$.
  \item On the boundaries of the sectors $\De_j$, $M = M^{(j)}$ satisfies the jump conditions 
    \begin{equation} \label{eq:jump_matrices_4x4}
      M^{(j)}(z) = M^{(j - 1)}(z) J_j, \quad \text{for $j = 0, \dotsc, 5$}, \quad M^{(-1)}\equiv M^{(5)},
    \end{equation} 
    for the jump matrices $J_0, \dots, J_5$ with constant entries specified in Figure \ref{fig:4x4jump}.
  \item As $z\to \infty$, $M(z)$ satisfies the asymptotics
    \begin{equation}\label{eq:RHP_asymptotics}
      M(z) = \left( 1 + \bigO(z^{-1}) \right) \left( v_1(z), v_2(z), v_3(z), v_4(z) \right),
    \end{equation}
    where $v_1, v_2, v_3,$ and $v_4$ are defined in \eqref{eq:defn_v_1-v_4}.
  \end{enumerate}
\end{RHP}
\begin{figure}[htb]
  \begin{minipage}[b]{0.2\linewidth}
    \centering
    \includegraphics{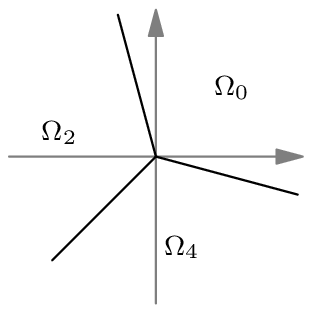}
    \includegraphics{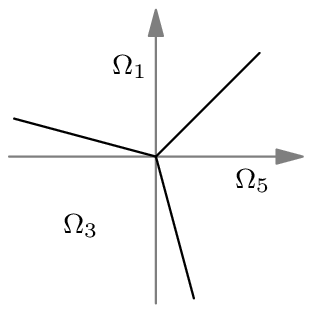}
    \caption{The sectors $\Omega_0, \dotsc, \Omega_5$.}
    \label{fig:Omega_k}
  \end{minipage}
  \hspace{\stretch{1}}
  \begin{minipage}[b]{0.8\linewidth}
    \includegraphics{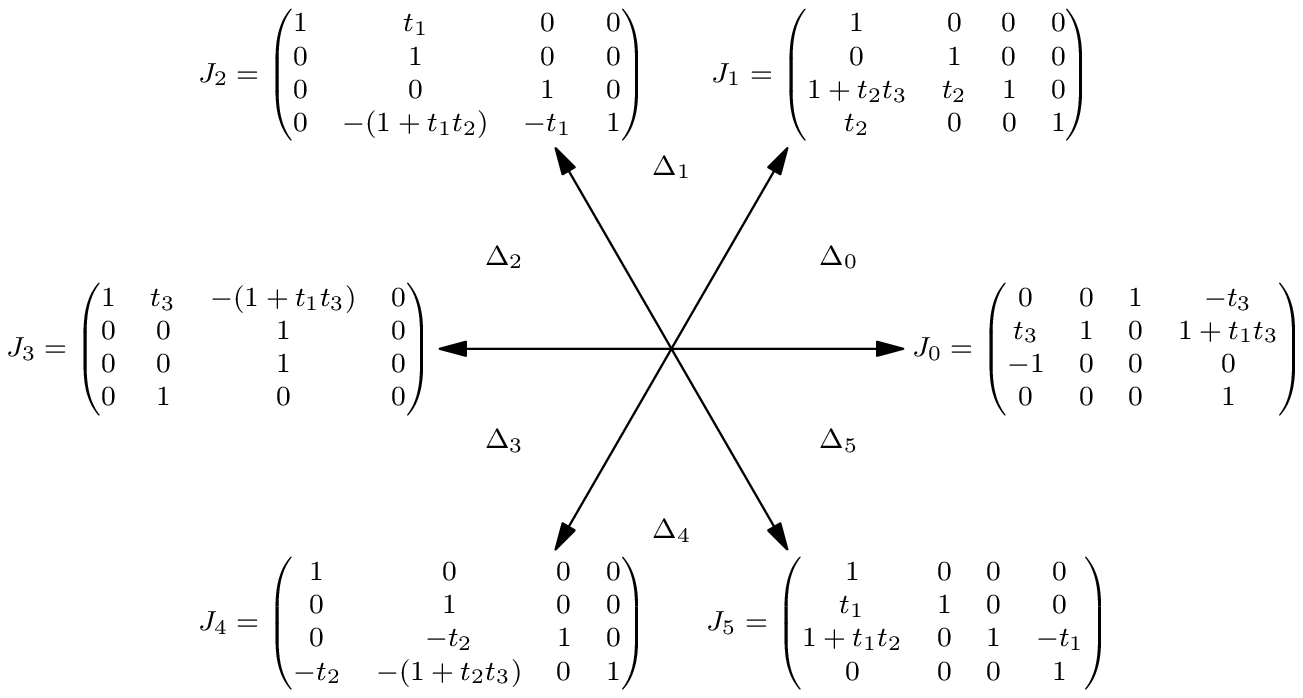}
    \caption{Jump matrices $J_0, \dotsc, J_5$ for RHP \ref{RHP:4x4}.}
    \label{fig:4x4jump}
  \end{minipage}
\end{figure}
It is not hard to see that there is at most one $M$ that satisfies RHP \ref{RHP:4x4}. So we have
\begin{cor}
  $M^{(j)}$ ($j = 0, \dotsc, 5$) are uniquely determined by Riemann--Hilbert problem \ref{RHP:4x4} and \eqref{def:M}.
\end{cor}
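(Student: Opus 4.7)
The plan is to carry out the standard uniqueness argument for Riemann--Hilbert problems. Suppose $M_1$ and $M_2$ are two solutions to RHP \ref{RHP:4x4}. I would show that the ratio $N(z) := M_1(z) M_2(z)^{-1}$ is globally analytic and tends to $I$ at infinity, which forces $M_1 \equiv M_2$ by Liouville's theorem. The three things to verify are: (i) $M_2$ is invertible everywhere, (ii) $N$ has no jumps across the rays and is bounded at the origin, so extends to an entire function, and (iii) $N(z) \to I$ as $z \to \infty$.

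For (i), a direct computation from \eqref{eq:defn_v_1-v_4} shows $\det \acal(z) \equiv \pm 1$: after reordering columns, $\acal$ factors as a pair of $2\times 2$ blocks whose exponential factors cancel and whose constant parts each have determinant $2$, combining with the four prefactors of $\tfrac{1}{\sqrt{2}}$ to give unit modulus. The jump matrices $J_j$ in Figure \ref{fig:4x4jump} are unipotent (as one reads off from the explicit formulas \eqref{eq:fundamental_solutions} via the relations \eqref{nj_relations}), so $\det J_j = 1$ for every $j$. Consequently $\det M$ extends analytically across every ray, is bounded as $z \to 0$ by the condition $M(z) = \bigO(1)$, and tends to $\det \acal(\infty) = \pm 1$ as $z \to \infty$; Liouville forces $\det M \equiv \pm 1$, so $M(z)$ is invertible for all $z \in \compC$.

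For (ii), in each sector $\De_j$ the ratio $N$ is analytic. On the ray separating $\De_{j-1}$ from $\De_j$, the shared jump \eqref{eq:jump_matrices_4x4} gives
\begin{equation*}
  N^{(j)} = M^{(j-1)}_1 J_j J_j^{-1} (M^{(j-1)}_2)^{-1} = N^{(j-1)},
\end{equation*}
so $N$ extends analytically across every ray and is well-defined on $\compC \setminus \{0\}$. The bound $M_i(z) = \bigO(1)$ as $z \to 0$, combined with $\det M_2 \equiv \pm 1$, makes $0$ a removable singularity, so $N$ is entire. For (iii), the asymptotic \eqref{eq:RHP_asymptotics} gives $M_i(z) = (I + \bigO(z^{-1})) \acal(z)$ for $i = 1, 2$, and so $N(z) = (I + \bigO(z^{-1}))(I + \bigO(z^{-1}))^{-1} = I + \bigO(z^{-1})$ as $z \to \infty$. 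A second application of Liouville yields $N \equiv I$, hence $M_1 \equiv M_2$.

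The one subtle point to watch is that $\acal(z)$ has branch cuts on $\R_+$ and $\R_-$, which at first sight threatens the cancellation $\acal \acal^{-1} = I$ used in the asymptotic analysis of $N$ at infinity. The resolution is that those branch discontinuities are precisely what the jump matrices $J_0$ and $J_3$ encode, so after dividing $M_1$ by $M_2$ the branch jumps cancel together with the unipotent Stokes jumps, and the cancellation is genuinely valid as an equality of single-valued analytic functions on $\compC \setminus \{0\}$. Once this is understood the argument is a textbook RHP uniqueness proof.
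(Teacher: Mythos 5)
Your argument is correct and is exactly the standard Liouville-type uniqueness argument the paper alludes to when it writes ``it is not hard to see that there is at most one $M$'' (the paper offers no further details, so your proof supplies what is left implicit). Two small remarks: the determinant of $\acal(z)$ is exactly $1$, not merely $\pm 1$ --- the exponential factors from the $(1,3)$ block ($e^{2\tau z}$) and the $(2,4)$ block ($e^{-2\tau z}$) cancel against each other, each constant block has determinant $2$, and the four $1/\sqrt{2}$ prefactors contribute $1/4$ --- and the ``subtle point'' you flag is a non-issue, since the pointwise identity $\acal(z)\acal(z)^{-1}=I$ holds in every sector regardless of branch cuts; the role of $J_0, J_3$ encoding the branch discontinuity is what guarantees $N$ extends across $\realR_{\pm}$, not what justifies the cancellation at $\infty$.
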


Notice that the jump matrices satisfy the symmetry
\begin{equation}
J_k=\begin{pmatrix} 0 & 1 & 0 & 0 \\ 1 & 0 & 0 & 0 \\ 0 & 0 & 0 & -1 \\ 0 & 0 & -1 & 0 \end{pmatrix} J_{k+3}\begin{pmatrix} 0 & 1 & 0 & 0 \\ 1 & 0 & 0 & 0\\ 0 & 0 & 0 & -1 \\ 0 & 0 & -1 & 0 \end{pmatrix},
\end{equation}
where $J_{6+k}\equiv J_k$. This implies the symmetry of the solutions
\begin{equation}
M(-z; r_1, r_2, s_1, s_2,\tau)=\begin{pmatrix} 0 & 1 & 0 & 0 \\ 1 & 0 & 0 &0 \\ 0 & 0 & 0 & -1 \\ 0 & 0 & -1 & 0 \end{pmatrix}M(z; r_2, r_1, s_2, s_1,\tau)\begin{pmatrix} 0 & 1 & 0 & 0 \\ 1 & 0 & 0 & 0 \\ 0 & 0 & 0 & -1 \\ 0 & 0 & -1 & 0 \end{pmatrix},
\end{equation}
which appears in \cite[Lemma 5.1(b)]{Delvaux-Kuijlaars-Zhang11} for the Hastings--McLeod case. (Their result is for the tacnode RHP that is equivalent to RHP \ref{RHP:4x4} in the Hastings--McLeod case, see Section \ref{subsec:tacnode_RHP}.) In this special case, there are additional symmetries with respect to complex conjugation which are not present in the general case.

\begin{rem}
 As informed by an anonymous referee, our integral representation of the fundamental solution of the Lax system \eqref{int:7} by the fundamental solution of the Flaschka--Newell Lax pair \eqref{int:2} is essentially similar to the representation of fundamental solution of the Harnad--Tracy--Widom Lax pair by the fundamental solution of the Jimbo--Miwa Lax pair in \cite[Theorem 3.1]{Joshi-Kitaev-Treharne09}. The idea of using a generalized Laplace transform to produce a Lax pair which is linear in the spectral variable from another which is quadratic in the spectral variable is in fact quite general and has been applied to other \Painleve\ equations as well, see \cite{Joshi-Kitaev-Treharne07}.
\end{rem}

\subsection{Contour integral formulas for two-matrix critical kernel and tacnode kernel}

In the special case $(t_1, t_2, t_3)=(1, 0, -1)$, $M(z)$ is the solution to the Lax system for the Hastings--McLeod solution to PII, and we refer to this special case as $M^{\HM}(z)$ below. In this section we discuss two occurrences of the entries of $M^{\HM}$, one in the two-matrix model critical kernel and the other in the tacnode kernel in the non-intersecting Brownian motion model. Originally these two kernels are expressed in terms of the tacnode RHP, which differs from the Hastings--McLeod case of our Riemann--Hilbert problem \ref{RHP:4x4} only by a constant matrix factor, see \eqref{eq:tacnode_conjugation} below. The integral formulas for the entries of $M^{\HM}$ yield contour integral formulas for these two kernels.

\subsubsection{Critical kernel in two-matrix model} \label{subsec:2MM}

Consider the two-matrix model in which two $n \times n$ random Hermitian matrices $M_1$ and $M_2$ have the joint probability measure
\begin{equation}
  \frac{1}{C_n} \exp(-n \Tr(V(M_1) + W(M_2) - \tau M_1 M_2)) dM_1 dM_2,
\end{equation}
where $V$ and $W$ are potentials and $\tau$ is the coupling constant. We concentrate on the distribution of eigenvalues of $M_1$, which is a determinantal process and is thus characterized by a correlation kernel. In the case that $V(x) = x^2/2$, $W(y) = y^4/4 + \alpha y^2/2$ and $n \to \infty$, the model is in the critical phase if $\alpha = -1$ and $\tau = 1$. As $n \to \infty$, under the double scaling limit $\alpha = -1 + 2a n^{-1/3} - b n^{-2/3}$ and $\tau = 1 + a n^{-1/3} + 2b n^{-2/3}$, where $a$ and $b$ are constants, the correlation kernel for the eigenvalues of $M_1$, at $x n^{-2/3}$ and $y n^{-2/3}$ converges to $K^{\crit}_2(x, y; (a^2 - 5b)/4, -a)$, whose formula is expressed by the tacnode RHP. See \cite{Duits-Geudens13} for the derivation, and also \cite{Duits14}.

Similar to the critical kernel $K^{\crit}_1$ in the one-matrix model, the limiting kernel $K^{\crit}_2$ is believed to be universal, and it should occur in very general settings of the two-matrix model. If $V$ is a quadratic polynomial, the forthcoming paper \cite{Claeys-Kuijlaars-Liechty-Wang17} will show that $K^{\crit}_2$ occurs for a large class of potentials $W$.

\subsubsection{Tacnode kernel in nonintersecting Brownian motion model} \label{subsec:NIBM}

Consider $(1 + \lambda)n$ non-colliding particles in Brownian bridges, with diffusion parameter $n^{-1/2}$. Suppose the particles are in two groups, such that the left $n$ of them are in the first group and the right $\lambda n$ of them are in the second. Let particles in the first group start at $a_1$ at time $0$, and end at $a_1$ at time $1$, and let particles in the second group start at $a_2$ at time $0$, and end at $a_2$ at time $1$. The particles in this model are a determinantal process, and their multi-time correlation functions are given by the multi-time correlation kernel.

If $a_1 = -1$ and $a_2 = \sqrt{\lambda}$, the model is in the critical phase as $n \to \infty$, such that the right-most particle in the first group meets narrowly the left-most particle in the second group at time $0.5$, and their trajectories touch each other like a tacnode. As $n \to \infty$, under the double scaling $a_1 = -(1 + (\Sigma/2) n^{-2/3})$ and $a_2 = \sqrt{\lambda}(1 + (\Sigma/2) n^{-2/3})$, the multi-time correlation kernel at positions $(x/2) n^{-2/3}$ and $(y/2) n^{-2/3}$ and times $(1 + \tau_1 n^{-1/3})/2$ and $(1 + \tau_2 n^{-1/3})/2$ converges to $\lcal^{\lambda, \Sigma}_{\tac}(\tau_1, x; \tau_2, y)$, which is expressed by the tacnode Riemann--Hilbert problem.

The derivation of $\lcal^{\lambda, \Sigma}_{\tac}$ was achieved by several groups of people: Adler, Ferrari and van Moerbeke got a multi-time tacnode kernel formula with $\lambda = 1$ from a discrete random walk model \cite{Adler-Ferrari-van_Moerbeke13}; Delvaux, Kuijlaars and Zhang got a single time tacnode kernel formula from the nonintersecting Brownian motion model \cite{Delvaux-Kuijlaars-Zhang11}; Johansson got a multi-time tacnode kernel formula with $\lambda = 1$ from the nonintersecting Brownian motion model \cite{Johansson13}; Ferrari and \Veto\ generalized Johansson's result for general $\lambda > 0$ \cite{Ferrari-Veto12} \footnote{Our notation $\lcal^{\lambda, \Sigma}_{\tac}$ follows that in \cite{Ferrari-Veto12}, but with their $\sigma$ replaced by $\Sigma$. The reason is that $\sigma$ occurs in our paper everywhere as the argument of $q$, the solution of \eqref{int:1}. We note that in \cite{Delvaux13}, the author took the same change of notation, as explained in \cite[Formula 2.15]{Delvaux13}.}. The formulas of Adler--Ferrari--van Moerbeke and Johansson were both expressed in terms of Airy resolvents, but quite different in structure. They were later was proved to be equivalent \cite{Adler-Johansson-van_Moerbeke14}. Delvaux--Kuijlaars--Zhang's formula was expressed by the tacnode RHP (their paper first defined the tacnode RHP, and the RHP is named thereby). Later Delvaux showed in \cite{Delvaux13} the equivalence of the results in \cite{Delvaux-Kuijlaars-Zhang11} and \cite{Ferrari-Veto12}, and furthermore wrote the general multi-time tacnode kernel in the tacnode Riemann-Hilbert problem. See also \cite{Kuijlaars14}. A variation of the model where the nonintersecting Brownian bridges are on a circle was studied by the current authors in \cite{Liechty-Wang14-2}.

\subsubsection{Tacnode Riemann--Hilbert problem revisited} \label{subsec:tacnode_RHP}

The tacnode RHP which is mentioned in Sections \ref{subsec:2MM} and \ref{subsec:NIBM} was defined in \cite{Delvaux-Kuijlaars-Zhang11}, \cite{Duits-Geudens13}, and \cite{Kuijlaars14}, with minor variations in generality and formality. The definition in \cite[Section 2.1]{Kuijlaars14} resembles the Hastings--McLeod case of our Riemann--Hilbert problem \ref{RHP:4x4}. Let us denote the solution to the RHP in \cite[Section 2.1]{Kuijlaars14} by $M^{\tac}$. Then $M^{\tac}$ is defined on regions $\Delta_0, \dotsc, \Delta_5$, with the same boundary conditions and asymptotics at $\infty$, but the jump matrices are slightly different from those of $M^{\HM}$. It seems perplexing that two similar but different Riemann--Hilbert problems are associated to Lax sytem \eqref{int:7} (in the Hastings--McLeod case), but the reason is simple: Although the solutions $M^{(0)}, \dots, M^{(5)}$ are the unique solutions to \eqref{int:7_a} satisfying the asymptotics \eqref{eq:M_asymptotics} throughout the sectors $\Om_j$, as stated in Proposition \ref{prop:Duits-Geudens}, if the sectors are shrunk to $\Delta_j$, the asymptotics \eqref{eq:M_asymptotics} do not uniquely determine the solution in each sector. The RHP \ref{RHP:4x4} and the tacnode RHP in \cite{Kuijlaars14} require only asymptotics in sectors $\De_j$, so there is some freedom to choose the jump matrices corresponding to different solutions to \eqref{int:7_a}. The relation between the RHP \ref{RHP:4x4} and the tacnode RHP is as follows. 

\begin{equation} \label{eq:tacnode_conjugation}
  \begin{split}
    M^{\tac}(z)= {}&
    M^{\HM}(z)
    \begin{pmatrix}
      1 & 0 & 0 & 0 \\
      1 & 1 & 0 & 0 \\
      0 & 0 & 1 & -1 \\
      0 & 0 & 0 &1
    \end{pmatrix}
    \quad \text{for $z\in \De_0$}, \quad
    M^{\tac}(z) = M^{\HM}(z)
    \begin{pmatrix}
      1 & 1 & 0 & 0 \\
      0 & 1 & 0 & 0 \\
      0 & 0 & 1 & 0 \\
      0 & 0 & -1 &1
    \end{pmatrix}
    \quad \text{for $z\in \De_3$}, \\
    M^{\tac}(z)= {}& M^{\HM}(z) \quad \text{otherwise}.
\end{split}
\end{equation}

In the paper \cite{Kuijlaars14}, Kuijlaars found explicit formulas for the entries of the solution to the tacnode RHP in terms of Airy functions and related operators. He found six solutions to the differential equation \eqref{int:7_a} with $q(\sg)$ being the Hastings--McLeod solution to PII, which were labeled $m^{(0)}, \dots, m^{(5)}$. Let us remark here that in the case $(t_1, t_2, t_3)=(1, 0, -1)$, the solutions $n^{(0)}, \dots, n^{(5)}$ constructed in this paper agree with the ones constructed by Kuijlaars up to sign. Specifically we have
\begin{equation} \label{eq:Kuijlaars_compare}
  \begin{aligned}
    n^{(0)} = {}& m^{(0)}, & n^{(1)} = {}& m^{(1)}, & n^{(2)} = {}& -m^{(2)}, \\
    n^{(3)} = {}& -m^{(3)}, & n^{(4)} = {}& m^{(4)}, & n^{(5)} = {}& -m^{(5)},
\end{aligned}
\end{equation}
which follows from comparing  \cite[Figure 2]{Kuijlaars14} with \eqref{eq:tacnode_conjugation} in light of Theorem \ref{fund_solutions}.

\subsubsection{Contour integral formulas}

We can write the critical kernel $K^{\crit}_2$ for two-matrix model in a contour integral formula where the integrand is expressed by entries of $\Psi^{(0)}$ in \eqref{eq:defn_psi^1_psi^2}, the solution to the Flaschka--Newell Lax pair.  Let $\Sg_{\tac}$ be a contour consisting of two infinite pieces: the first passing from $e^{-5\pi i/6} \cdot \infty$ to $e^{-\pi i/6} \cdot \infty$; and the second passing from $e^{\pi i/6} \cdot \infty$ to $e^{5\pi i/6} \cdot \infty$, as pictured in Figure \ref{fig:Sigma_T}.
Also let $\Sg_{\tMM}$ be the contour
\begin{equation}
  \Sg_{\tMM} =[1, e^{i\pi/6}\cdot \infty) \cup [1, e^{-i\pi/6}\cdot \infty)\cup[-1, e^{5i\pi/6}\cdot \infty) \cup [-1, e^{-5i\pi/6}\cdot \infty)\cup[-1,1],
\end{equation}
oriented as shown in Figure \ref{fig:Sigma_T_deformed}.
Define the functions $f$ and $g$ on $\C$ in a piecewise way:
\begin{equation}\label{tac:8}
  f(\zeta; \sg) :=
  \begin{cases}
    -\Psi^{(0)}_{1,2}(\zeta; \sg) & \text{if $\Im \zeta > 0$}, \\
    \Psi^{(0)}_{1,1}(\zeta; \sg) & \text{if $\Im \zeta < 0$}, \\
    \Phi_{1}(\zeta; \sg) & \text{if $\Im \zeta = 0$},
  \end{cases}
  \qquad
  g(u; \sg) :=
  \begin{cases}
    -\Psi^{(0)}_{2,2}(\zeta; \sg) & \text{if $\Im \zeta > 0$}, \\
    \Psi^{(0)}_{2,1}(\zeta; \sg) & \text{if $\Im \zeta < 0$}, \\
     \Phi_{2}(\zeta; \sg) & \text{if $\Im \zeta = 0$},
  \end{cases}
\end{equation}
where
\begin{equation} \label{eq:formula_Phi_1_2}
  \Phi_{1}(\zeta; \sg):= \Psi^{(0)}_{1,1}(\zeta; \sg)+\Psi^{(0)}_{1,2}(\zeta; \sg), \qquad  \Phi_{2}(\zeta; \sg):=\Psi^{(0)}_{2,1}(\zeta; \sg)+\Psi^{(0)}_{2,2}(\zeta; \sg).
\end{equation}
We then have the following theorem.

\begin{figure}[ht]
  \begin{minipage}[t]{0.45\linewidth}
    \centering
    \includegraphics{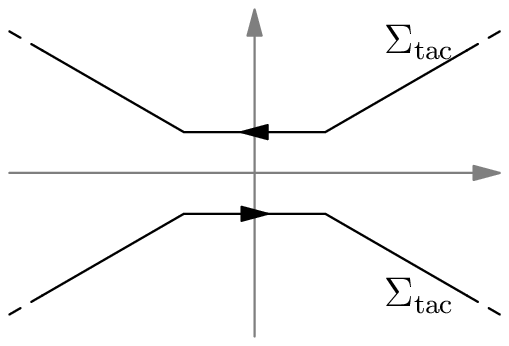}
    \caption{The contour $\Sg_{\tac}$. The rays make the angles $\pi/6$ with the real axis.}
    \label{fig:Sigma_T}
  \end{minipage}
  \hspace{\stretch{1}}
  \begin{minipage}[t]{0.45\linewidth}
    \centering
    \includegraphics{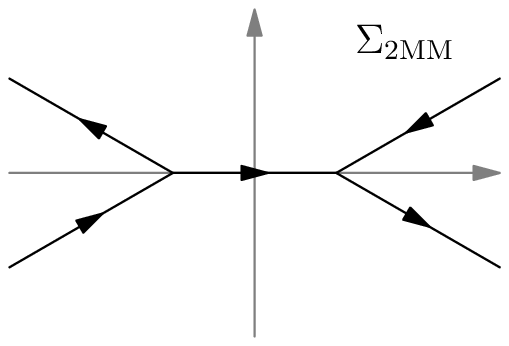}
    \caption{The contour $\Sg_{\tMM}$. The rays make the angles $\pi/6$ with the real axis, and the horizontal segment is the interval $[-1,1]$.}
    \label{fig:Sigma_T_deformed}
  \end{minipage}
\end{figure}

\begin{theo}\label{critical}
The critical kernel in the two-matrix model of Duits--Geudens \cite[equation (2.15)]{Duits-Geudens13} can be written as
  \begin{equation} \label{critkernel} 
    \begin{aligned}
      K_2^{\crit}(x,y; s,\tau)&=\frac{1}{2^{1/3}\pi} \int_{-i\infty}^{i\infty} \,du\,\int_{\Sg_{\tMM}} \,dv\, e^{-2^{4/3}\tau (u^2-v^2)+2^{2/3} (xu-y v)} \\
      &\hspace{4cm}\times \left(\frac{\Phi_1(u;\sg)g(v; \sg)-\Phi_2(u;\sg)f(v; \sg)}{2\pi(u-v)}\right) ,
    \end{aligned}
  \end{equation}
where $\sg$ is given as 
\begin{equation} \label{eq:expression_sigma_K^cr_2}
  \sg = 2^{2/3} (2s - \tau^2).
\end{equation}

\end{theo}
This theorem is proved in Section \ref{two_matrix_kernel_proof}.

\medskip

Similarly we can write the tacnode kernel $\lcal^{\lambda, \Sigma}_{\tac}$ in a double contour integral formula with integrand given by entries of $\Psi^{(0)}$ in \eqref{eq:defn_psi^1_psi^2}.  We have the following theorem.

\begin{theo}\label{tacnode}
The tacnode kernel of Ferrari--\Veto\ \cite{Ferrari-Veto12} can be written as
\begin{multline}\label{tac:16}
  \lcal_{\tac}^{\la, \Sigma}(\tau_1, x; \tau_2, y) = -1_{\tau_1 < \tau_2} \frac{1}{\sqrt{4\pi(\tau_2 - \tau_1)}} \exp \left( \frac{(y - x)^2}{4(\tau_2 - \tau_1)} \right) \\
  + \frac{1}{C\pi} \int_{\Sg_{\tac}}\,du\int_{\Sg_{\tac}}\,dv\, e^{-\frac{4i}{3}(\frac{1-\sqrt{\la}}{1+\sqrt{\la}})(u^3-v^3)+\frac{4}{C^2}(\tau_1u^2-\tau_2v^2)+\frac{2i}{C}(xu-yv)+\frac{i\sg(1-\sqrt{\la})}{1+\sqrt{\la}}(u-v)} \\
  \times \left( \frac{\left(f(u; \sg)g(v; \sg)-f(v; \sg)g(u; \sg)\right)}{2\pi i(u-v)} \right),
\end{multline}
where 
\begin{equation} \label{eq:defn_C_for_tacnode}
  C=\left(1+\frac{1}{\sqrt{\la}}\right)^{1/3} \quad \text{and} \quad \sigma = \lambda^{1/2} C^2 \Sigma.
\end{equation}
\end{theo}
This theorem is proved in Section \ref{tacnode_kernel_proof}.

\begin{rem}
  In the symmetric case $\la=1$, the formula \eqref{tac:16}, up to a rescaling, was derived by the current authors from a model of nonintersecting paths on the circle \cite{Liechty-Wang14-2}. That model was not robust enough to produce the asymmetric tacnode kernel, and the above theorem is new for $\la \ne 1$.
\end{rem}
\begin{rem}
  Formulas \eqref{critkernel} and \eqref{tac:16} are analogous, but \eqref{tac:16} is more general in the sense that (i) it has a $\lambda$ parameter and (ii) the $\tau_1, \tau_2$ parameters corresponding to $\tau$ in \eqref{critkernel} can be different. In \cite{Claeys-Kuijlaars-Liechty-Wang17}, a more general two-matrix model as well as its dynamical version is considered, and \eqref{critkernel} is generalized to a formula containing $\lambda, \tau_1, \tau_2$ parameters like \eqref{tac:16}.
\end{rem}

\subsection{Outlook} \label{subsec:outlook}

In this paper we concentrate on the homogeneous \Painleve\ II equation \eqref{int:1}. The general PII equation has a constant term: $y'' = xy + 2y^3 - \alpha$. 
Both the the Flaschka--Newell Lax pair \eqref{int:2} and the $4\times 4$ Lax system can be generalized to the inhomogeneous PII equation, and each appears in the kernel for a determinantal process.

The Flaschka--Newell Lax pair for the Hastings--McLeod solution of inhomogeneous PII equation occurs in the one-matrix model with logarithmic perturbation, see \cite{Claeys-Kuijlaars-Vanlessen08} and a brief discussion in Section \ref{subsubsec:1MM}. On the other hand, a $4 \times 4$ Riemann--Hilbert problem associated with the Hastings--McLeod solution of inhomogeneous PII equation occurs in the limiting critical correlation kernel for the nonintersecting squared Bessel processes. The (nonintersecting) squared Bessel processes are in some sense a generalization of the (nonintersecting) Brownian motions, and the limiting critical process for the nonintersecting squared Bessel process is a ``hard-edge'' generalization of the tacnode process. Hence the aforementioned $4 \times 4$ RHP, which is then called the hard-edge tacnode RHP, is a natural generalization of the tacnode RHP. This hard-edge tacnode RHP is also associated with a Lax system that is analogous to and more general than \eqref{int:7}. This hard-edge tacnode RHP is also related to a chiral two-matrix model. See \cite{Delvaux13a} and \cite{Delvaux-Geudens-Zhang13}.

It is tempting to conjecture that our construction of the $4 \times 4$ Lax system from the $2 \times 2$ Flaschka--Newell Lax pair to can be applied to the inhomogeneous case as well, thereby giving formulas  for the hard-edge tacnode RHP in terms of solutions to the Flaschka--Newell Lax pair. However, we have so far not been able to derive the relation in a straightforward way.

\subsection*{Organization of the paper}

The algebraic result Proposition \ref{diffeq} is proved in Section \ref{diff_eq_proof}, and next the analytic result Proposition \ref{thm:main_asy} is proved in Section \ref{asymptotics}. Then the main result Theorem \ref{fund_solutions} is proved in Section \ref{sec:proof_main}, based on Proposition \ref{thm:main_asy}. As the applications of the main theorem, Theorems \ref{critical} and \ref{tacnode} are proved in Section \ref{sec:proof_of_contour_int_form}.

\subsection*{Acknowledgements}

We thank the anonymous referees for valuable comments, especially for pointing out related literature on isomonodromy deformations. We also thank Peter Forrester for comments related to random matrix theory.

\section{The proof of Proposition \ref{diffeq}}\label{diff_eq_proof}

First we note that if $m = (m_1, m_2, m_3, m_4)^T$ satisfies equation \eqref{eq:diff_eq_4d}, then the components $m_3, m_4$ are expressed in terms of $m_1, m_2$ by
\begin{align}
  ir_1m_3&=m_1'-\left(\tau-s_1^2+\frac{u}{C}\right)m_1-\frac{\sqrt{r_2}q}{\ga\sqrt{r_1} C} m_2, \label{int:12} \\
  ir_2m_4&=m_2'+\left(\tau-s_2^2+\frac{u}{C}\right)m_2+\ga\frac{\sqrt{r_1}q}{\sqrt{r_2} C} m_1, \label{int:12_2}
\end{align}
and then equation \eqref{eq:diff_eq_4d} is reduced to the equations in $m_1, m_2$:
  \begin{align}
    m_1''  = {}& +2\tau m_1'+\frac{r_1^{3/2}\sqrt{r_2}q(\sg) C^2}{\ga} m_2' \notag \\
    & + \big(Cq(\sg)^2 r_1^2-r_1^2 z+2r_1s_1-\tau^2\big)m_1-\frac{r_1^{3/2}\sqrt{r_2}}{\ga C}\left[q(\sg) \tau\left(\frac{1}{r_1^2}-\frac{1}{r_2^2}\right)+C^2q'(\sg)\right]m_2, \label{int:13} \\
    m_2'' = {}& -2\tau m_2'-r_2^{3/2}\sqrt{r_1}q(\sg) C^2\ga m_1' \notag \\
    & + \big(Cq(\sg)^2 r_2^2+r_2^2 z+2r_2s_2-\tau^2\big)m_2+\frac{r_2^{3/2}\sqrt{r_1}\ga}{C}\left[q(\sg) \tau\left(\frac{1}{r_1^2}-\frac{1}{r_2^2}\right)-C^2q'(\sg)\right]m_1. \label{int:13_2}
  \end{align}
Conversely, if the four components of $m$ satisfy \eqref{int:12}, \eqref{int:12_2}, \eqref{int:13}, and \eqref{int:13_2}, then $m$ is a solution to \eqref{eq:diff_eq_4d}.

Suppose
\begin{equation} \label{eq:temporary_m}
  m(z) = \qcal_{\Gamma}(f, g), 
\end{equation}
where $f$ and $g$ are any two $2$-dimensional vector-valued functions that make $\qcal_{\Gamma}$ well defined on them. We denote functions $I_k = I_k(z)$, $k = 1, 2, 3, 4$, where
\begin{equation}
  I_k = I_{k, 1} + I_{k, 2} + I_{k, 3},
\end{equation}
such that for $j = 1, 2, 3$, with $\sgn(k) = 1$ if $k = 1, 3$ and $\sgn(k) = 2$ if $k = 2, 4$,
\begin{equation}
  I_{k, j} = \int_{\Gamma_j} h^{(j)}_{\sgn(k)}(\zeta) e^{\frac{2iz\z}{C}}G_k(\zeta) d\zeta, \quad \text{where} \quad h^{(1)}(\zeta) = f(\zeta), \quad h^{(2)}(\zeta) = g(\zeta), \quad h^{(3)}(\zeta) = f(\zeta) + g(\zeta).
\end{equation}

Then by definition \eqref{Q_def},
\begin{equation} \label{eq:ir_1m_3_in_I_k}
  ir_1 m_3(z) = e^{-\tau z \left( \frac{r^2_1 - r^2_2}{r^2_1 + r^2_2} \right)} \left( -\left( \tau \frac{r^2_1 - r^2_2}{r^2_1 + r^2_2} + \tau - s^2_1 + \frac{u}{C} \right) I_1 - \frac{\sqrt{r_1} q}{\gamma \sqrt{r_2} C} I_2 + I_3 \right).
\end{equation}
On the other hand, also by definition \eqref{Q_def},
\begin{equation} \label{eq:formulas_for_m_1_m_2_m_1'}
   m_1(z) = e^{-\tau z \left( \frac{r^2_1 - r^2_2}{r^2_1 + r^2_2} \right)} I_1, \quad m_2(z) = e^{-\tau z \left( \frac{r^2_1 - r^2_2}{r^2_1 + r^2_2} \right)} I_2,
\end{equation}
and we can evaluate $m'_1(z)$ as follows. Since $\frac{d}{dz} e^{\frac{2iz\z}{C}} G_1(\zeta) = e^{\frac{2iz\z}{C}}G_3(\zeta)$ by \eqref{eq:defn_of_G_1-G_4}, we have for $j = 1, 2, 3$,
\begin{equation} \label{eq:I_1_prime_id}
  \frac{d}{dz} I_{1, j} = \int_{\Gamma_j} h^{(j)}_1(\zeta)\left( \frac{d}{dz} e^{\frac{2iz\z}{C}}\right)G_1(\zeta) d\zeta = \int_{\Gamma_j} h^{(j)}_j(\zeta) G_3(\zeta) d\zeta = I_{3, j}.
\end{equation}
Thus
\begin{equation} \label{eq:I_1_prime_id_more}
  \frac{d}{dz} I_1 = I_3,
\end{equation}
and we have
\begin{equation} \label{eq:m'_1(z)_final}
  m'_1(z) = \frac{d}{dz} \left( e^{-\tau z \left( \frac{r^2_1 - r^2_2}{r^2_1 + r^2_2} \right)} I_1 \right) = -\tau \frac{r^2_1 - r^2_2}{r^2_1 + r^2_2} m_1(z) + e^{-\tau z \left( \frac{r^2_1 - r^2_2}{r^2_1 + r^2_2} \right)} \frac{d}{dz} I_1 = -\tau \frac{r^2_1 - r^2_2}{r^2_1 + r^2_2} m_1(z) + m_3(z).
\end{equation}
Using expressions \eqref{eq:ir_1m_3_in_I_k}, \eqref{eq:formulas_for_m_1_m_2_m_1'} and \eqref{eq:m'_1(z)_final}, we check that \eqref{int:12} holds. Similarly, we can check that \eqref{int:12_2} holds.

Next we show that if the $f$ and $g$ in \eqref{eq:temporary_m} are chosen to be the solutions $\phi(\z)$ and $\varphi(\z)$ to \eqref{eq:diff_eq_4d}, as in Proposition \ref{diffeq}, then identities \eqref{int:13} and \eqref{int:13_2} also hold.

Consider first $m_1(z)$. We have
\begin{equation} \label{eq:m''_1(z)_start}
  \begin{split}
    m''_1(z) = {}& \frac{d^2}{dz^2} \left( e^{-\tau z \left( \frac{r^2_1 - r^2_2}{r^2_1 + r^2_2} \right)} I_1 \right) \\
    = {}& \tau^2 \left( \frac{r^2_1 - r^2_2}{r^2_1 + r^2_2} \right)^2 m_1(z) - 2\tau \frac{r^2_1 - r^2_2}{r^2_1 + r^2_2} e^{-\tau z \left( \frac{r^2_1 - r^2_2}{r^2_1 + r^2_2} \right)} \frac{d}{dz} I_1 + e^{-\tau z \left( \frac{r^2_1 - r^2_2}{r^2_1 + r^2_2} \right)} \frac{d^2}{dz^2} I_1.
  \end{split}
\end{equation}
The first derivative of $I_1$ is already evaluated in \eqref{eq:I_1_prime_id_more}, and the second derivative can be computed similarly. We consider $I_{1, 1}, I_{1, 2}, I_{1, 3}$ individually, and have
\begin{equation} \label{eq:2nd_deriv_I_1_pre}
  \frac{d^2}{dz^2} I_{1, j} = \int_{\Gamma_j} h^{(j)}_1(\zeta) \left(\frac{d^2}{dz^2}e^{\frac{2iz\z}{C}}\right) G_1(\zeta) dz = \frac{-4}{C^2} \int_{\Gamma_j} h^{(j)}_1(\zeta) \zeta^2 e^{\frac{2iz\z}{C}} G_1(\zeta) d\zeta.
\end{equation}
Now we use the property that $h^{(j)}(\zeta)$ is a solution to \eqref{eq:diff_eq_4d}, and have
\begin{equation}\label{diffeq:7}
  \z^2 h^{(j)}_1(\z) = \frac{i}{4}\left[ \frac{d}{d\zeta} h^{(j)}_1(\z) + i(\sg + 2q(\sg)^2) h^{(j)}_1(\z) - (4\z q(\sg) + 2i q'(\sg)) h^{(j)}_2(\z) \right].
\end{equation}
We therefore have, using \eqref{eq:defn_of_G_1-G_4},
\begin{multline} \label{eq:2nd_deriv_I_1}
  \frac{d^2}{dz^2} I_{1, j} = \frac{1}{C^2} \left[ (\sigma + 2q(\sigma)^2) I_{1, j} - 2q'(\sigma) \frac{\gamma_1 \sqrt{r_2}}{\gamma_2 \sqrt{r_1}} I_{2, j} + 2Cq(\sigma) \frac{\gamma_1 \sqrt{r_2}}{\gamma_2 \sqrt{r_1}} I_{4, j} \vphantom{\int_{\Gamma_j}} \right. \\
  - \left. i \int_{\Gamma_j} \left( \frac{d}{d\zeta} h^{(j)}_1(\zeta) \right) e^{\frac{2iz\z}{C}}G_1(\zeta) d\zeta \right].
\end{multline}
Furthermore, using integration by parts, we have
\begin{equation}
  \int_{\Gamma_j} \left( \frac{d}{d\zeta} h^{(j)}_1(\zeta) \right) e^{\frac{2iz\z}{C}}G_1(\zeta) d\zeta = -\int_{\Gamma_j} h^{(j)}_1(\zeta) \left[ \frac{d}{d\zeta} \left(e^{\frac{2iz\z}{C}}G_1(\zeta) \right)\right] d\zeta +
  \begin{cases}
    -h^{(j)}_1(0) G_1(0) & j = 1, 2, \\
    h^{(j)}_1(0) G_1(0) & j = 3.
  \end{cases}
\end{equation}
Noting that
\begin{equation}
  -h^{(1)}_1(0) - h^{(2)}_1(0) + h^{(3)}_1(0) = 0,
\end{equation}
we obtain that
\begin{equation}
  \sum^3_{j = 1} \int_{\Gamma_j} \left( \frac{d}{d\zeta} h^{(j)}_1(\zeta) \right) e^{\frac{2iz\z}{C}}G_1(\zeta) d\zeta = \sum^3_{j = 1} \int_{\Gamma_j} h^{(j)}_1(\zeta) \left[ \frac{d}{d\zeta} \left(e^{\frac{2iz\z}{C}}G_1(\zeta) \right)\right] d\zeta,
\end{equation}
and then summing up the $j = 1, 2, 3$ cases of \eqref{eq:2nd_deriv_I_1},
\begin{multline} \label{eq:alt_2nd_deriv_I_1}
  \frac{d^2}{dz^2} I_{1} = \frac{1}{C^2} \left[ (\sigma + 2q(\sigma)^2) I_1 - 2q'(\sigma) \frac{\gamma_1 \sqrt{r_2}}{\gamma_2 \sqrt{r_1}} I_2 + 2Cq(\sigma) \frac{\gamma_1 \sqrt{r_2}}{\gamma_2 \sqrt{r_1}} I_4 \vphantom{\int_{\Gamma_j}}
  \vphantom{i \sum^3_{j = 1} \int_{\Gamma_j} h^{(j)}_1(\zeta) \left( \frac{d}{d\zeta} G_1(\zeta) \right) d\zeta} \right. \\
  + \left. i \sum^3_{j = 1} \int_{\Gamma_j} h^{(j)}_1(\zeta) \left[ \frac{d}{d\zeta} \left(e^{\frac{2iz\z}{C}}G_1(\zeta) \right)\right] d\zeta \right].
\end{multline}
Since
\begin{multline}
  \frac{d}{d\zeta} \left(e^{\frac{2iz\z}{C}}G_1(\zeta)\right) = \left( 3ia \zeta^2 + 2b\zeta + ic + \frac{2iz}{C} \right) e^{\frac{2iz\z}{C}}G_1(\zeta)  \\
  = -\frac{3iC^2 a}{4} \frac{d^2}{dz^2} \left(e^{\frac{2iz\z}{C}}G_1(\zeta)\right) - ibC e^{\frac{2iz\z}{C}}G_3(\zeta) + i \left( c + \frac{2z}{C} \right) e^{\frac{2iz\z}{C}}G_1(\zeta),
\end{multline}
we have, using \eqref{eq:2nd_deriv_I_1_pre}, that
\begin{equation} \label{eq:id_with_2nd_deriv_I_1}
  \sum^3_{j = 1} \int_{\Gamma_j} h^{(j)}_1(\zeta) \left[ \frac{d}{d\zeta} \left(e^{\frac{2iz\z}{C}}G_1(\zeta) \right)\right] d\zeta = -\frac{3iC^2 a}{4} \frac{d^2}{dz^2} I_1 - ibC I_3(\zeta) + i \left( c + \frac{2z}{C} \right) I_1.
\end{equation}
Combining \eqref{eq:alt_2nd_deriv_I_1} and \eqref{eq:id_with_2nd_deriv_I_1}, we solve that
\begin{equation} \label{eq:2nd_deriv_I_1_final}
  \left( 1 - \frac{3a}{4} \right) \frac{d^2}{dz^2} I_{1} = \left( \frac{\sigma + 2q(\sigma)^2 - c}{C^2} - \frac{2z}{C^3} \right) I_1 - \frac{2q'(\sigma)}{C^2} \frac{\gamma_1 \sqrt{r_2}}{\gamma_2 \sqrt{r_1}} I_2 + \frac{b}{C} I_3 + \frac{2q(\sigma)}{C} \frac{\gamma_1 \sqrt{r_2}}{\gamma_2 \sqrt{r_1}} I_4.
\end{equation}
Plugging \eqref{eq:2nd_deriv_I_1_final} and \eqref{eq:I_1_prime_id} into \eqref{eq:m''_1(z)_start}, and using the formulas \eqref{int:6} and \eqref{diffeq:2} for the coefficients, we have
\begin{multline} \label{eq:express_m''_2}
  m''_1(z) = \left( \tau^2 \frac{r^2_2 - 3r^2_1}{r^2_1 + r^2_2} + 2r_1s_1 + r^2_1 C q(\sigma)^2 - r^2_1 z \right) m_1 - r^2_1 C \frac{\gamma_1 \sqrt{r_2}}{\gamma_2 \sqrt{r_1}} q'(\sigma) m_2 \\
  + 2\tau m_3 + r^2_1 C^2 \frac{\gamma_1 \sqrt{r_2}}{\gamma_2 \sqrt{r_1}} q(\sigma) m_4.
\end{multline}
On the other hand, among the terms on the right-hand side of \eqref{int:13}, $m_1(z), m_2(z), m'_1(z)$ are already evaluated in \eqref{eq:formulas_for_m_1_m_2_m_1'} and \eqref{eq:m'_1(z)_final}, while $m'_2(z)$ can be evaluated similar to $m'_1(z)$ as
\begin{equation} \label{eq:m'_2(z)_final}
  m'_1(z) = -\tau \frac{r^2_1 - r^2_2}{r^2_1 + r^2_2} m_2(z) + m_4(z). 
\end{equation}
It is not hard to see that the right-hand side of \eqref{int:13} can also be expressed as the right-hand side of \eqref{eq:express_m''_2}. Thus we prove \eqref{int:13}. In the same way we can prove \eqref{int:13_2}.

\section{Proof of Proposition \ref{thm:main_asy}}\label{asymptotics}

Since parts \ref{enu:thm:main_asy_1} -- \ref{enu:thm:main_asy_6} are similar, we prove part \ref{enu:thm:main_asy_1} in detail in Section \ref{subsec:part_1_of_1.3}, and explain how the proof is adapted to other cases in Section \ref{subsec:other_parts_1.3}. Parts \ref{enu:thm:main_asy_2} and \ref{enu:thm:main_asy_3} can be proved by the computation as in part \ref{enu:thm:main_asy_1}. For parts \ref{enu:thm:main_asy_4}, \ref{enu:thm:main_asy_5}, and \ref{enu:thm:main_asy_6}, although the same method works, the computation should be adjusted because $f^{(k)}, g^{(k)}$ in \eqref{eq:def_of_nj2} and $f^{(k)} + g^{(k)}$ have different asymptotic behavior at $\infty$ for even and odd $k$.

In the proof of parts \ref{enu:thm:main_asy_1}, \ref{enu:thm:main_asy_2}, and \ref{enu:thm:main_asy_3}, for a computational reason that will be clear later, we take a change of variable
\begin{equation} \label{eq:zeta_xi_change}
  \zeta = \xi + \frac{ib}{3a + 4} = \xi + \frac{i\tau}{C^2 r^2_1},
\end{equation}
where $a, b$ are defined in \eqref{diffeq:2}, $\tau$ and $r_1$ are defined in \eqref{int:5}, and $C$ is defined in \eqref{int:6}, and define the cubic polynomial $F$ as
\begin{multline} \label{eq:defn_F(xi)}
  F(\xi) = i\tilde{a} \xi^3 + i\tilde{c} \xi, \quad \text{where}  \quad   \tilde{a} = a + \frac{4}{3} = \frac{8 r^2_1}{3(r^2_1 + r^2_2)} , \\
   \tilde{c}\equiv \tilde{c}(z) = \frac{b^2}{3a + 4} + c + \frac{2z}{C} + \sigma = \frac{2z + 4s_2/r_2}{C},
\end{multline}
and $c$ is defined in \eqref{diffeq:2}, $\sigma$ is defined in \eqref{int:9}, and $r_2, s_2$ are defined in \eqref{int:5}. We note that the leading coefficient of $F$ satisfies $\tilde{a} > 0$. We are interested in the asymptotics of the functions $n^{(0)}(z), \dots, n^{(5)}(z)$ as $z\to \infty$ in various sectors of the complex plane. Note that as $z\to\infty$ at a certain angle, the parameter $\tilde{c} \equiv \tilde{c}(z)$ also approaches $\infty$ at the same angle. Thus we will consider the asymptotic behavior of the integrals which define $n^{(0)}(z), \dots, n^{(5)}(z)$ as $\tilde{c} \to \infty$. For brevity we will often use the notation $\tilde{c}$ rather than $\tilde{c}(z)$, and we trust the reader can keep in mind that $\tilde{c}$ is related to $z$ by a scaling and shift of fixed size.

\begin{rem} \label{rem:meaning_of_F}
  The function $F(\xi)$ will be useful in the proof of parts \ref{enu:thm:main_asy_1}, \ref{enu:thm:main_asy_2}, and \ref{enu:thm:main_asy_3}, because the essential part of asymptotic analysis is the integrals on $\Gamma^{(k)}_1 \cup \Gamma^{(k)}_3$ ($k = 0, 2, 4$), where the contours $\Gamma^{(k)}_j$ are deformed, as explained later in this section. The integrands on $\Gamma^{(k)}_1 \cup \Gamma^{(k)}_3$, although various in explicit formulas, all have the asymptotic behavior
  \begin{equation} \label{eq:asy_integrands_parts_123}
    e^{\frac{4}{3} i\zeta^3 + i \sigma \zeta + \frac{2iz\zeta}{C}} G(\zeta) \times (\text{factor growing at most linearly at $\infty$}),
  \end{equation}
  and under the change of variable \eqref{eq:zeta_xi_change},
  \begin{equation} \label{eq:alt_defn_F}
    \log \left( e^{\frac{4}{3} i\zeta^3 + i \sigma \zeta + \frac{2iz\zeta}{C}} G(\zeta) \right) =   F(\xi) - \log \gamma_2-  \frac{2r^2_2}{r^2_1 + r^2_2} \tau z,
  \end{equation}
  where
 $\gamma_2$ is defined in \eqref{diffeq:2}.
\end{rem}

Recall the sectors $\Delta_0, \dotsc, \Delta_5$ defined in \eqref{def:De_j}. In what follows we consider them on the $\zeta$-plane and $\xi$-plane by replacing $z$ by $\zeta$ and $\xi$ respectively in their definitions.

\subsection{Proof of part \ref{enu:thm:main_asy_1}} \label{subsec:part_1_of_1.3}

Note that if $z = re^{i\theta}$ where $\theta \in (-\pi/3 + \delta, \pi/3 - \delta)$, then for large enough $r$, $\tilde{c}$ defined in \eqref{eq:defn_F(xi)} has its argument in a compact subset of $(-\pi/3, \pi/3)$. Below in the proof we assume that
\begin{equation} \label{eq:arg_tilde_c}
  \arg(\tilde{c}) \in [-\pi/3 + \delta', \pi/3 - \delta'], \quad \delta' > 0,
\end{equation}
even if $\lvert \tilde{c} \rvert$ is not large. To be concrete, we may take $\delta' = \delta/2$.

Before giving the rigorous argument of the proof, we describe the strategy.

\begin{enumerate}[label=\bf Step \arabic*, leftmargin=0pt, labelwidth=!, align=left]
\item \label{enu:Step_1}
  Find the critical points of $F(\xi)$. There are two of them, which are denoted as $\xi_+$ (on the upper-half plane) and $\xi_-$ (on the lower-half plane). Then denote
  \begin{equation}
    \zeta_{\pm} = \xi_{\pm} + \frac{ib}{3a + 4}.
  \end{equation}
\item \label{enu:Step_2}
  Deform the contour $\Gamma^{(k)} = \Gamma^{(k)}_1 \cup \Gamma^{(k)}_2 \cup \Gamma^{(k)}_3$ defined in \eqref{eq:defn_Gamma^(k)} for $k = 0, 2, 4$, such that $\Gamma^{(0)}_1 \cup \Gamma^{(0)}_3$ is a contour from $e^{5\pi i/6} \cdot \infty$ to $e^{\pi i/6} \cdot \infty$ and passes through $\zeta_+$, $\Gamma^{(2)}_1 \cup \Gamma^{(2)}_3$ is a contour from $e^{\pi i/6} \cdot \infty$ to $e^{-\pi i/2} \cdot \infty$ and passes through $\xi_-$, and $\Gamma^{(4)}_1 \cup \Gamma^{(4)}_3$ is a contour from $e^{-\pi i/2} \cdot \infty$ to $e^{5\pi i/6} \cdot \infty$ and passes through $\xi_-$. Then $\Gamma^{(k)}_2$ goes from a point on $\Gamma^{(k)}_1 \cup \Gamma^{(k)}_3$ to $e^{(1/2 - k/3)\pi i} \cdot \infty$, for $k = 0, 2, 4$. Furthermore, we require that for $\lvert z \rvert$ large enough,
  \begin{align}
    \Gamma^{(0)}_1 \in {}& \{ \epsilon \leq \arg \zeta \leq 2\pi/3 - \epsilon \}, & \Gamma^{(0)}_3 \in {}& \{ \pi/3 + \epsilon \leq \arg \zeta \leq \pi - \epsilon \}, \label{eq:condition_Gamma_0} \\
    \Gamma^{(2)}_1 \in {}& \{ 4\pi/3 + \epsilon \leq \arg \zeta \leq 2\pi - \epsilon \}, & \Gamma^{(2)}_3 \in {}& \{ -\pi/3 + \epsilon \leq \arg \zeta \leq \pi/3 - \epsilon \}, \label{eq:condition_Gamma_2} \\
    \Gamma^{(4)}_1 \in {}& \{ 2\pi/3 + \epsilon \leq \arg \zeta \leq 4\pi/3 - \epsilon \}, & \Gamma^{(4)}_3 \in {}& \{ \pi + \epsilon \leq \arg \zeta \leq 5\pi/3 - \epsilon \}, \label{eq:condition_Gamma_4}
  \end{align}
  \begin{equation} \label{eq:condition_Gamma_middles}
    \Gamma^{(0)}_2 \in \{ \frac{\pi}{3} + \epsilon \leq \arg \zeta \leq \frac{2\pi}{3} - \epsilon \}, \quad \Gamma^{(2)}_2 \in \{ -\frac{\pi}{3} + \epsilon \leq \arg \zeta \leq -\epsilon \}, \quad \Gamma^{(4)}_2 \in \{ \pi + \epsilon \leq \arg \zeta \leq \frac{4\pi}{3} - \epsilon \},
  \end{equation}
  and
  \begin{equation} \label{eq:Gamma_away_from_0}
    \dist(\Gamma^{(k)}, 0) > \epsilon \lvert z \rvert^{1/2},
  \end{equation}
  where $\epsilon > 0$ is a constant depending on $\delta$. Note that we only define $\Gamma^{(2)}_2$ for $\arg z \geq 0$ and only define $\Gamma^{(4)}_2$ for $\arg z \leq 0$. For the saddle point analysis, we require
  \begin{equation} \label{eq:saddle_pt_property_of_Gamma}
    \Re \log \left( e^{\frac{4}{3} i\zeta^3 + i \sigma \zeta + \frac{2iz\zeta}{C}} G(\zeta) \right) \text{ attains its maximum on $\Gamma^{(k)}_1 \cup \Gamma^{(k)}_3$ at $\zeta_{\pm}$, \quad }k = 0, 2, 4,
  \end{equation}
  where $\pm$ is $+$ for $k = 1$ and $-$ for $k = 2, 4$. See Figure \ref{fig:contours_part_1} for a schematic graph of the contours. The existence of the contours will be carefully justified later.
  \begin{figure}[htb]
    \begin{minipage}[t]{0.3\linewidth}
      \centering
      \includegraphics{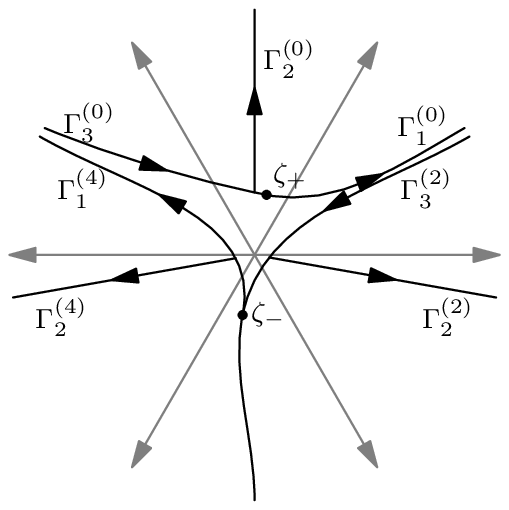}
      \caption{Schematic graphs of $\Gamma^{(0)}$, $\Gamma^{(2)}$ and $\Gamma^{(4)}$, in the proof of part \ref{enu:thm:main_asy_1} of Proposition \ref{thm:main_asy}. $\Gamma^{(2)}_1$ and $\Gamma^{(4)}_3$ are not labelled, because their major parts overlap.}
      \label{fig:contours_part_1}
    \end{minipage}
    \hspace{\stretch{1}}
    \begin{minipage}[t]{0.3\linewidth}
      \centering
      \includegraphics{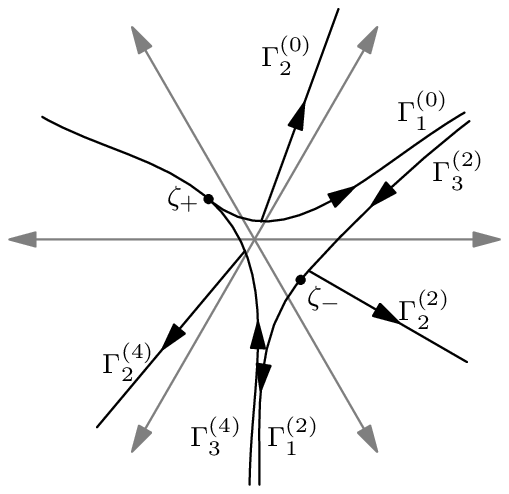}
      \caption{Schematic graphs of $\Gamma^{(0)}$, $\Gamma^{(2)}$ and $\Gamma^{(4)}$, in the proof of part \ref{enu:thm:main_asy_2} of Proposition \ref{thm:main_asy}. $\Gamma^{(4)}_1$ and $\Gamma^{(0)}_3$ are not labelled, because their major parts overlap.}
      \label{fig:contours_part_2}
    \end{minipage}
    \hspace{\stretch{1}}
    \begin{minipage}[t]{0.3\linewidth}
      \centering
      \includegraphics{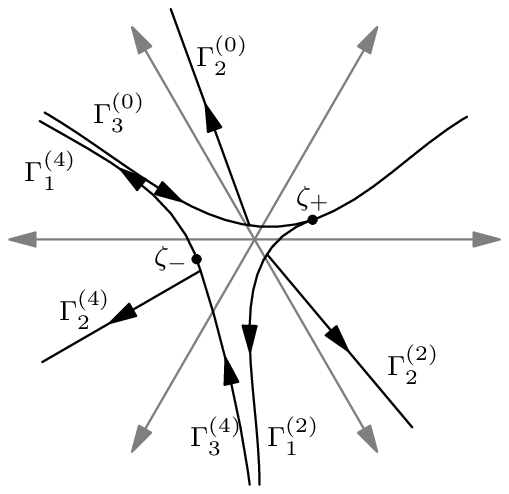}
      \caption{Schematic graphs of $\Gamma^{(0)}$, $\Gamma^{(2)}$ and $\Gamma^{(4)}$, in the proof of part \ref{enu:thm:main_asy_3} of Proposition \ref{thm:main_asy}. $\Gamma^{(0)}_1$ and $\Gamma^{(2)}_3$ are not labelled, because their major parts overlap.}
      \label{fig:contours_part_3}
    \end{minipage}
  \end{figure}
\item \label{enu:Step_3}
  Use the standard saddle point analysis to prove the result. In particular, for all integrals on $\Gamma^{(k)}_1$ and $\Gamma^{(k)}_3$ ($k = 0, 2, 4)$ in the entries of the $4 \times 4$ matrix shown in \eqref{Q_def}, the integrands are expressed as $(\text{linear polynomial in } \zeta \text{ plus } \bigO(\zeta^{-1})) \times e^{F(\xi)}$ where $\xi$ is related to $\zeta$ by \eqref{eq:zeta_xi_change} and $F(\xi)$ is given in \eqref{eq:defn_F(xi)} and \eqref{eq:alt_defn_F}. Thus $\zeta_{\pm}$ are the saddle points giving the major contributions to the integrals. We also show that the integrals over $\Gamma^{(k)}_2$ are negligible.
\end{enumerate}
Below we give the details of the steps.

\subsubsection{\ref{enu:Step_1}: Critical points}

For each $z \in \compC$, the equation $\frac{dF}{d\xi} = 0$ has two solutions,
\begin{equation} \label{eq:xi_pm}
  \xi_{\pm} = \pm i \sqrt{\frac{\tilde{c}(z)}{3\tilde{a}}}.
\end{equation}
By \eqref{eq:arg_tilde_c}, we have $\xi_+ \in \Delta_1$ and $\xi_- \in \Delta_4$.

\subsubsection{\ref{enu:Step_2}(a): Preliminary lemmas}

For the construction of the contours, we are going to use some planar dynamical system techniques. We interpret the complex $\xi$-plane as a two-dimensional real coordinate plane by the standard relation $\xi = x + yi$. The function $\Re F(\xi)$ is then harmonic in $\xi$, or equivalently in $x, y$, and it has only two critical points $\xi_{\pm}$. By condition \eqref{eq:arg_tilde_c}, we have
\begin{equation} \label{eq:inequality_critical_pt}
  \Re F(\xi_+) < \Re F(0) < \Re F(\xi_-).
\end{equation}
Consider the curve $L_0$ with differentiable parametrization $(x(t), y(t))$ such that
\begin{equation}
  \left( \frac{\partial}{\partial x} F(x(t) + iy(t)), \frac{\partial}{\partial x} F(x(t) + iy(t)) \right) \cdot
  \begin{pmatrix}
    x'(t) \\
    y'(t)
  \end{pmatrix}
  = 0, \quad \text{and} \quad x(0) = y(0) = 0.
\end{equation}
This curve is the \emph{level curve through $0$}. Since by \eqref{eq:inequality_critical_pt} $\xi_{\pm}$ are not on this level curve, the level curve can be extended to $\infty$ in both directions, and we assume it below. Then we have the following result on the directions that $L_0$ approaches $\infty$. A numerical plotting of $L_0$ is shown in Figure \ref{fig:L+-}. This plot demonstrates the following result.

\begin{lem} \label{lem:level_curve_0}
  $L_0$ lies in $\overline{\Delta_0} \cup \overline{\Delta_5} \cup \overline{\Delta_2} \cup \overline{\Delta_3}$, and it goes to $\infty$ in directions $e^0 \cdot \infty$ and $e^{\pi i} \cdot \infty$.
\end{lem}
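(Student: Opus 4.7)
My plan is to establish the lemma in two main steps: first that $L_0 \setminus \{0\}$ cannot cross any of the four rays $\arg \xi \in \{\pm \pi/3, \pm 2\pi/3\}$, and then use this confinement, together with the local structure of $L_0$ at the origin and a maximum principle argument, to identify the asymptotic directions. Writing $\xi = r e^{i\theta}$ and $\alpha = \arg \tilde c$, a direct calculation gives
\begin{equation*}
  \Re F(\xi) = -\tilde a\, r^3 \sin(3\theta) - |\tilde c|\, r \sin(\theta + \alpha).
\end{equation*}
On each of the four rays in question $\sin(3\theta) = 0$, so $\Re F(re^{i\theta}) = -|\tilde c|\, r \sin(\theta + \alpha)$. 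Since \eqref{eq:arg_tilde_c} forces $\alpha \in [-\pi/3 + \delta', \pi/3 - \delta']$, a one-line check shows $\sin(\theta + \alpha)$ is strictly nonzero (in fact has fixed sign) for $\theta \in \{\pm \pi/3, \pm 2\pi/3\}$. Hence on these rays $\Re F$ is nonzero for every $r > 0$, so $L_0$ cannot meet them except at $\xi = 0$.

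Next I analyze $L_0$ near the origin: linearizing, $\Re F(\xi) \sim -|\tilde c|\, r \sin(\theta + \alpha)$ for small $r$, so $L_0$ leaves $0$ smoothly along the two tangent directions $\theta = -\alpha$ and $\theta = \pi - \alpha$. By the range of $\alpha$, the first lies strictly inside $\Delta_0 \cup \Delta_5$ and the second strictly inside $\Delta_2 \cup \Delta_3$. Combined with the confinement from the previous paragraph, each of the two branches of $L_0$ must remain in its own double-sector. To promote this to the claim that each branch extends to $\infty$, I would use: (a) away from critical points of $F$, $L_0$ is a smooth $1$-manifold by the implicit function theorem; (b) the critical points $\xi_\pm$ in \eqref{eq:xi_pm} lie in $\Delta_1 \cup \Delta_4$ and hence outside the region traversed by $L_0$, so no branching can occur; (c) a closed-loop component is ruled out by the maximum principle applied to the harmonic function $\Re F$, since it would force $\Re F \equiv 0$ inside a bounded region, contradicting non-constancy of $F$.

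Finally, to pin down the asymptotic directions at infinity, on $L_0$ the relation $\Re F = 0$ yields $\tilde a\, r^3 \sin(3\theta) = -|\tilde c|\, r \sin(\theta + \alpha)$; dividing by $r^3$ and letting $r \to \infty$ gives $\sin(3\theta) \to 0$, so any limit direction is among $\{0, \pm \pi/3, \pm 2\pi/3, \pi\}$. The four non-horizontal candidates are excluded by the first paragraph's computation (a small tubular neighborhood of each such ray has $|\Re F|$ bounded below by a positive multiple of $r$ for large $r$, which cannot vanish), so the only possibilities are $\theta \to 0$ and $\theta \to \pi$, matching one branch each. The step I expect to be the most delicate is (b)--(c) in the middle paragraph, i.e., the global topological claim that $L_0$ really is two smooth unbounded branches and nothing more; the sign analysis on the boundary rays, which is where all the real input of the hypothesis on $\arg \tilde c$ enters, is otherwise routine.
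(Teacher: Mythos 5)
Your proof is correct and follows essentially the same strategy as the paper: both confine $L_0$ to $\overline{\Delta_0}\cup\overline{\Delta_5}\cup\overline{\Delta_2}\cup\overline{\Delta_3}$ by a sign check of $\Re F$ on the rays $\arg\xi\in\{\pm\pi/3,\pm 2\pi/3\}$, observe that the possible asymptotic directions are $k\pi/3$, and exclude the slanted ones by the same sign information. The paper does the confinement via a three-case split on the sign of $\arg\tilde c$, using the $\xi\mapsto\bar\xi$ symmetry to reduce one case to the other, whereas you handle all arguments uniformly; you also make explicit the topological facts (smooth $1$-manifold away from critical points, absence of critical points of $F$ in the region, maximum-principle exclusion of closed loops) that the paper tacitly assumes. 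One small imprecision to flag: your claim that $|\Re F|$ is bounded below by a positive multiple of $r$ throughout a \emph{full} tubular neighborhood of, say, $\arg\xi=\pi/3$ is not literally true --- on the $\Delta_1$ side of that ray the cubic term $-\tilde a\,r^3\sin(3\theta)$ changes sign, so $\Re F$ does vanish along a curve asymptotic to the ray at angles slightly greater than $\pi/3$. Your conclusion survives because the first paragraph already bars $L_0$ from entering $\Delta_1$, so the relevant branch can only approach $\pi/3$ from within $\overline{\Delta_0}$, where your lower bound does hold; it would be cleaner to state the exclusion that way rather than as a two-sided tubular bound.
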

\begin{proof}
  By the behavior of $\Re F(\xi)$ at $\infty$, we know that a level curve, on which $\Re F(\xi)$ is finite, can only go to $\infty$ in six possible directions: $k\pi/3$, $k = 0, \dotsc, 5$. For $L_0$, we also know that the tangent direction at $0$ is $-\arg \tilde{c} \in (\pi/3 + \delta', \pi/3 - \delta')$ and $\pi - \arg \tilde{c}$.

  In the remaining part of the proof, we consider three cases separately:
  \begin{enumerate*}[label=(\alph*)]
  \item \label{enu:case_3_zero}
    $\arg \tilde{c} = 0$, \linebreak
  \item \label{enu:case_1_neg}
    $\arg \tilde{c} \in (-\pi/3, 0)$, and
  \item \label{enu:case_2_pos}
    $\arg \tilde{c} \in (0, \pi/3)$.
  \end{enumerate*}

  In Case \ref{enu:case_3_zero}, $L_0$ is exactly the real axis and the result of the lemma is obvious.
 
  In Case \ref{enu:case_1_neg}, we have that one part of $L_0$ goes from $0$ to sector $\Omega_0$ and the other part goes from $0$ to sector $\Omega_3$. We denote them $L_{0, +}$ and $L_{0, -}$ respectively. We can see that $L_{0, +}$ does not go out of $\Omega_0$, because on one boundary of $\Omega_0$, $\{ \xi \neq 0 \mid \arg \xi = 0 \}$, $\Re F(\xi) > \Re F(0)$, and on the other boundary of $\Omega_0$, $\{ \xi \neq 0 \mid \arg \xi = \pi/3 \}$, $\Re F(\xi) < \Re F(0)$. Similarly, $L_{0, -}$ does not go out of $\Omega_3$.
  
  Now the possible directions for $L_{0, +}$ to approach $\infty$ is limited to $0$ and $\pi/3$. Next we exclude $\pi/3$. For any $\epsilon > 0$, we have by direct calculation that $\Re F(\rho e^{\alpha i}) < \Re F(0)$ for all $\alpha \in (\pi/3 - \epsilon, \pi/3)$ and large enough $\rho$, so for $\xi \in L_{0, +}$, $\arg \xi \notin (\pi/3 - \epsilon, \pi)$ if $\lvert \xi \rvert$ is large enough. Thus $L_{0, +}$ goes to $e^0 \cdot \infty$. By a similar reason, $L_{0, -}$ goes to $e^{\pi i} \cdot \infty$.

  Case \ref{enu:case_2_pos} is converted to Case \ref{enu:case_1_neg} by the change of variables $\xi \to \bar{\xi}$.
\end{proof}

The next technical lemma is proved by straightforward calculation.
\begin{lem} \label{lem:monotone}
  Let $\rho$ be a big enough positive number. Then on the circle $\{ \rho e^{i\alpha} \mid 0 \leq \alpha < 2\pi \}$, $\Re F(\xi)$ has three local maxima,  $z_2(\rho) = \rho e^{\pi i/2} + \bigO(\rho^{-1})$, $z_4(\rho) = \rho e^{7\pi i/6} + \bigO(\rho^{-1})$ and $z_6(\rho) = \rho e^{-\pi i/6} + \bigO(\rho^{-1})$, and three local minima, around $z_1(\rho) = \rho e^{\pi i/6} + \bigO(\rho^{-1})$, $z_3(\rho) = \rho e^{5\pi i/6} + \bigO(\rho^{-1})$ and $z_5(\rho) = \rho e^{-\pi i/2} + \bigO(\rho^{-1})$. Furthermore, on each arc $A_k(\rho)$ between $z_{k - 1}(\rho)$ and $z_k(\rho)$, ($k = 1, \dotsc, 6$ and $z_0(\rho) = z_6(\rho)$), the value of $\Re F(\xi)$ is monotonic as $\xi$ moves along the arc.
\end{lem}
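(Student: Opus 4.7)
The plan is to reduce the lemma to a one-variable perturbation analysis on the circle. Parametrizing $\xi = \rho e^{i\alpha}$ and writing $i\tilde c = |\tilde c|\, e^{i(\pi/2 + \arg\tilde c)}$, one obtains
\begin{equation*}
  \Re F(\rho e^{i\alpha}) = -\tilde a\, \rho^3 \sin(3\alpha) \;-\; \rho\,|\tilde c|\,\sin(\alpha + \arg\tilde c),
\end{equation*}
so the derivative with respect to $\alpha$ is
\begin{equation*}
  h_\rho(\alpha) := \frac{d}{d\alpha}\Re F(\rho e^{i\alpha}) = -3\tilde a\,\rho^3 \cos(3\alpha) \;-\; \rho\,|\tilde c|\,\cos(\alpha + \arg\tilde c).
\end{equation*}
The leading term has six simple zeros on $[0,2\pi)$, namely $\alpha_k^{(0)} = \pi/6 + (k-1)\pi/3$ for $k=1,\dots,6$, with derivative $\pm 9\tilde a\,\rho^3$ at each. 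The correction is of size $\bigO(\rho)$, negligible against $\bigO(\rho^3)$ for $\rho$ large enough.

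Next I would apply a standard implicit-function/perturbation argument to conclude that $h_\rho$ has exactly six simple zeros $\alpha_k(\rho) = \alpha_k^{(0)} + \bigO(\rho^{-2})$ on $[0,2\pi)$. Translating back to points on the circle,
\begin{equation*}
  z_k(\rho) := \rho e^{i\alpha_k(\rho)} = \rho e^{i\alpha_k^{(0)}} + \bigO(\rho^{-1}),
\end{equation*}
which matches the claimed locations. Classification of these critical points follows from the leading second derivative $9\tilde a\,\rho^3 \sin(3\alpha)$: this evaluates to $+9\tilde a\,\rho^3$ at $\alpha_k^{(0)}$ for $k=1,3,5$, giving local minima $z_1,z_3,z_5$, and to $-9\tilde a\,\rho^3$ at $\alpha_k^{(0)}$ for $k=2,4,6$, giving local maxima $z_2,z_4,z_6$, exactly as asserted.

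Monotonicity of $\Re F$ on each arc $A_k(\rho)$ will then follow by showing that $h_\rho$ keeps a constant sign on each open sub-arc between the consecutive zeros $\alpha_{k-1}(\rho)$ and $\alpha_k(\rho)$. Away from small neighborhoods of the $\alpha_j^{(0)}$ one has $|\cos(3\alpha)|$ uniformly bounded below by a positive constant, so the $\bigO(\rho^3)$ leading term of $h_\rho$ dominates and fixes the sign there; inside those small neighborhoods the single zero $\alpha_j(\rho)$ has just been isolated. This accounts for all six sign changes and yields strict monotonicity on each arc. The only mildly delicate point is uniformity: ``big enough $\rho$'' must be understood relative to $|\tilde c|$, concretely requiring $\rho^2 \gg |\tilde c|$, with $\arg \tilde c$ constrained as in \eqref{eq:arg_tilde_c}. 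This is not a real obstacle for fixed $\tilde c$, but should be tracked carefully whenever the lemma is invoked with $\rho$ and $\tilde c$ jointly tending to infinity, as in the application following \eqref{eq:Gamma_away_from_0}.
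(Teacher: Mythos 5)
Your proof is correct, and it is precisely the computation the paper has in mind: the paper offers no written proof of Lemma~\ref{lem:monotone}, merely remarking that it ``is proved by straightforward calculation,'' and your perturbation analysis of $h_\rho(\alpha) = \frac{d}{d\alpha}\Re F(\rho e^{i\alpha})$ is the natural fleshing-out of that claim. Your formulas for $\Re F(\rho e^{i\alpha})$ and $h_\rho$ check out, the six zeros of $\cos(3\alpha)$ at $\alpha_k^{(0)} = \pi/6 + (k-1)\pi/3$ match the angles $\pi/6, \pi/2, 5\pi/6, 7\pi/6, 3\pi/2, 11\pi/6$ listed in the lemma, the $O(\rho^{-2})$ shift in the angle correctly yields the stated $O(\rho^{-1})$ shift in $z_k(\rho)$, the sign of the leading second derivative $9\tilde a\rho^3\sin(3\alpha_k^{(0)})$ correctly distinguishes maxima from minima, and the sign-constancy argument on each sub-arc gives strict monotonicity. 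Your closing remark on uniformity in $|\tilde c|$ (requiring $\rho^2 \gg |\tilde c|$) is a genuine subtlety worth flagging, but note that in the paper's application (Section~\ref{subsubsec:construction_of_contours}) the lemma is invoked only in the normalized setting $\tilde a = 1$, $|\tilde c| = 1$ before a final rescaling of the $\xi$-plane, so there the issue does not arise.
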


Now consider the level curves through $\xi_+$ and $\xi_-$, which we denote by $L_+$ and $L_-$ respectively. Note that locally around $\xi_{\pm}$, $L_{\pm}$ is the union of two smooth local level curves, and $L_{\pm}$ goes to $\infty$ as it extends along the four ends of the smooth local level curves. See Figure \ref{fig:L+-} for a numerical plotting of these level curves. The plotting demonstrates the results of the following two lemmas.
\begin{lem} \label{lem:4_crosses}
  \begin{enumerate}[label=(\alph*)]
  \item \label{enu:level_curves_through_crit_pt_1}
    The branches of $L_+$ go to $e^0 \cdot \infty$, $e^{\pi i/3} \cdot \infty$, $e^{2\pi i/3} \cdot \infty$ and $e^{\pi i} \cdot \infty$, and we denote them $L_{+, 1}$, $L_{+, 2}$, $L_{+, 3}$, and $L_{+, 4}$ respectively.
  \item \label{enu:level_curves_through_crit_pt_2}
    The branches of $L_-$ go to $e^0 \cdot \infty$, $e^{-\pi i/3} \cdot \infty$, $e^{-2\pi i/3} \cdot \infty$ and $e^{\pi i} \cdot \infty$, and we denote them $L_{-, 1}$, $L_{-, 2}$, $L_{-, 3}$, and $L_{-, 4}$ respectively.
  \end{enumerate}
\end{lem}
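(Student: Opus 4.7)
The plan is a standard saddle-point/level-curve analysis of the harmonic function $\Re F$. Since $F'(\xi)=3i\tilde a\xi^2+i\tilde c$ has a simple zero at each $\xi_\pm$ and $F''(\xi_\pm)=6i\tilde a\xi_\pm\ne 0$, each $\xi_\pm$ is a non-degenerate saddle of $\Re F$; hence $L_\pm$ is locally the transverse union of two smooth arcs through $\xi_\pm$, with four branches emanating from $\xi_\pm$. By the maximum principle no component of a level set of $\Re F$ is bounded, so each branch extends to infinity, and from the expansion
\begin{equation*}
\Re F(\rho e^{i\alpha}) = -\tilde a\rho^3\sin(3\alpha)+\bigO(\rho)
\end{equation*}
any unbounded branch of any level curve can approach infinity only along one of the six rays $\alpha=k\pi/3$, $k=0,\dotsc,5$.

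I next count intersections with a large circle $\{|\xi|=\rho\}$. By Lemma \ref{lem:monotone}, for $\rho$ large $\Re F$ is strictly monotone on each of the six arcs between adjacent extrema, with values running between $\pm\tilde a\rho^3$ to leading order. A direct evaluation gives $F(\xi_+)=-\tfrac{2}{3\sqrt{3\tilde a}}\,\tilde c^{3/2}$, and since $\arg\tilde c^{3/2}\in(-\pi/2,\pi/2)$ under the standing hypothesis \eqref{eq:arg_tilde_c}, one has $\Re F(\xi_+)<0$. Therefore the level set $\{\Re F=\Re F(\xi_+)\}$ meets every sufficiently large circle in exactly six points, one per arc. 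Solving $\sin(3\alpha)\sim -\Re F(\xi_+)/(\tilde a\rho^3)$ arc-by-arc shows that, as $\rho\to\infty$, these six points approach the six directions $e^{k\pi i/3}\cdot\infty$; a short check on signs places exactly four of the crossings in the open upper half plane (those tending to $0,\pi/3,2\pi/3,\pi$) and the other two in the open lower half plane (tending to $4\pi/3,5\pi/3$).

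To split the upper four from the lower two, I invoke Lemma \ref{lem:level_curve_0}: $L_0$ is a simple curve from $e^0\cdot\infty$ to $e^{i\pi}\cdot\infty$ contained in $\overline{\Delta_0\cup\Delta_5\cup\Delta_2\cup\Delta_3}$, so it partitions $\mathbb C$ into an upper region $U\supset\Delta_1$ and a lower region $D\supset\Delta_4$. A direct computation from \eqref{eq:xi_pm} gives $\arg\xi_+\in(\pi/3,2\pi/3)$, so $\xi_+\in U$. Since $\Re F(\xi_+)\ne 0$, the connected set $L_+$ is disjoint from $L_0$, hence $L_+\subset U$, and thus $L_+$ can meet a large circle only at the four upper-half-plane crossings. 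Each of the four branches of $L_+$ crosses every sufficiently large circle at least once, so with four branches and exactly four available crossings one gets a forced bijection between branches and the four directions $e^{k\pi i/3}\cdot\infty$, $k=0,1,2,3$, proving part \ref{enu:level_curves_through_crit_pt_1}. Part \ref{enu:level_curves_through_crit_pt_2} then follows at once from the symmetry $F(-\xi)=-F(\xi)$, which sends $\xi_+\mapsto\xi_-$, exchanges the level sets $\{\Re F=\Re F(\xi_+)\}$ and $\{\Re F=\Re F(\xi_-)\}$, and rotates each asymptotic direction by $\pi$.

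The main obstacle is the final bijection step: a priori a single branch of $L_+$ could cross a large circle several times, which would let one of the four upper crossings belong to a different component of the level set rather than to $L_+$. This is controlled by the ends count just above --- $L_+$ contributes four infinity-ends while the entire level set has six, so exactly one other component exists; being disjoint from $L_+\subset U$ and from $L_0$, that other component's two ends must be the two lower-half crossings, which leaves all four upper crossings for $L_+$ and completes the matching.
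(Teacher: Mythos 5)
Your proposal is correct and follows essentially the same strategy as the paper's proof: you identify the six admissible asymptotic directions $k\pi/3$, use Lemma \ref{lem:monotone} to count crossings of the level set $\{\Re F = \Re F(\xi_+)\}$ with a large circle (one per monotone arc), use Lemma \ref{lem:level_curve_0} and the inequality $\Re F(\xi_+)<\Re F(0)$ to keep $L_+$ entirely on the side of $L_0$ containing $\Delta_1$, and then run the pigeonhole argument matching the four ends of $L_+$ with the four available crossings. The two places where you diverge slightly from the paper are both improvements in transparency rather than substance: you make the sign computation $\Re F(\xi_+)=-\tfrac{2}{3\sqrt{3\tilde a}}\Re(\tilde c^{3/2})<0$ explicit (the paper keeps it implicit via \eqref{eq:inequality_critical_pt}), and you dispatch part (b) by the symmetry $F(-\xi)=-F(\xi)$ rather than by the paper's ``analogous argument.'' The closing ends-count safety net is not really needed --- the direct pigeonhole (four branches, each forced to cross each large circle at least once, distinct branches at distinct points, at most four admissible crossings) already forces the bijection --- but it is a valid and clean alternative phrasing. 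One small point to be aware of, which the paper is equally cavalier about: the four ``upper-half-plane'' crossings are not automatically on the $U$-side of $L_0$, since $L_0$ need not be the real axis; one has to observe that on the two arcs $A_1$ and $A_4$ where $L_0$ crosses the circle, the negativity of $\Re F(\xi_+)$ puts the $\{\Re F=\Re F(\xi_+)\}$-crossing strictly on the $\Delta_1$-side of the $L_0$-crossing. That check is implicit in both your argument and the paper's.
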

\begin{figure}[htb]
  \begin{minipage}[t]{0.3\linewidth}
    \centering
    \includegraphics[width=\linewidth]{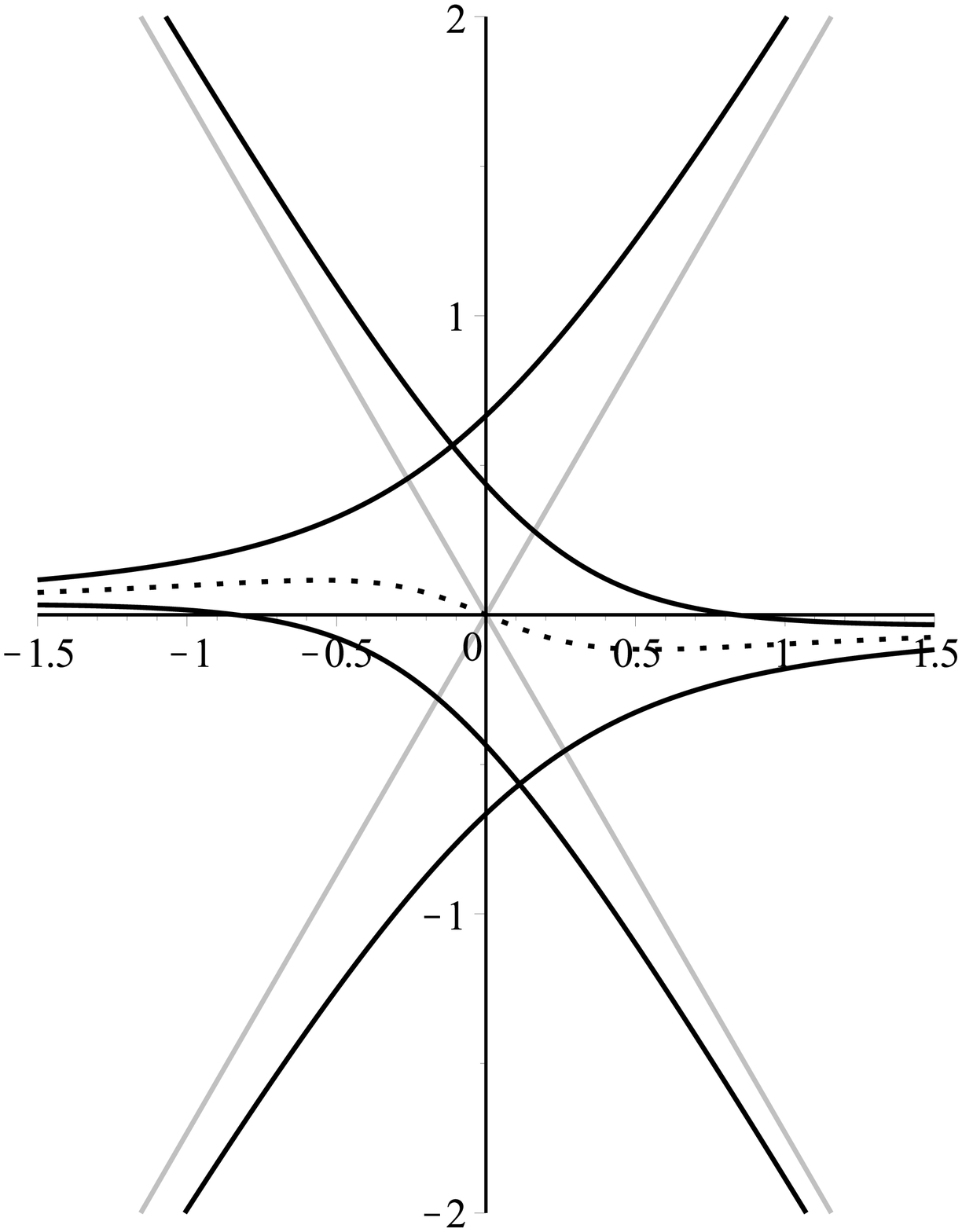}
    \caption{The dotted curve is $L_0$. $L_+$ consists of the curves above $L_0$, and $L_-$ consists of the curves below $L_0$. Here are $\tilde{a} = 1$ and $\tilde{c} = \pi/4$.}
    \label{fig:L+-}
  \end{minipage}
  \hspace{\stretch{1}}
  \begin{minipage}[t]{0.3\linewidth}
    \centering
    \includegraphics[width=\linewidth]{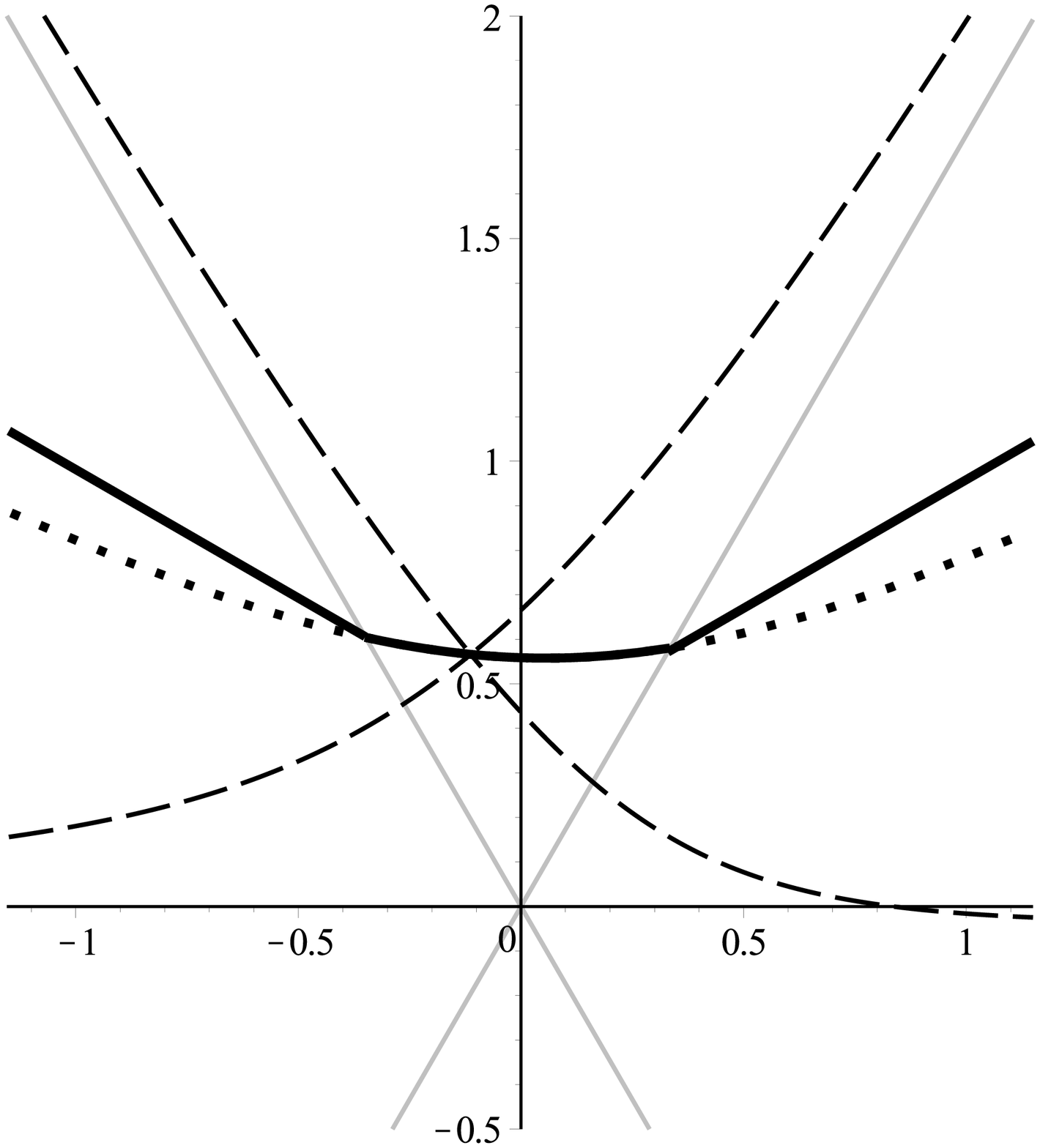}
    \caption{The solid curve is $\tilde{\Gamma}^{(0)}_1 \cup \tilde{\Gamma}^{(0)}_3$. The dotted curves are $\gamma_{+, 1}$ and $\gamma_{+, 2}$, and the dashed curves are $L_+$. Here $\tilde{a} = 1$ and $\tilde{c} = e^{\pi i/8}$.}
    \label{fig:Gamma_0}
  \end{minipage}
  \hspace{\stretch{1}}
  \begin{minipage}[t]{0.3\linewidth}
    \centering
    \includegraphics[width=\linewidth]{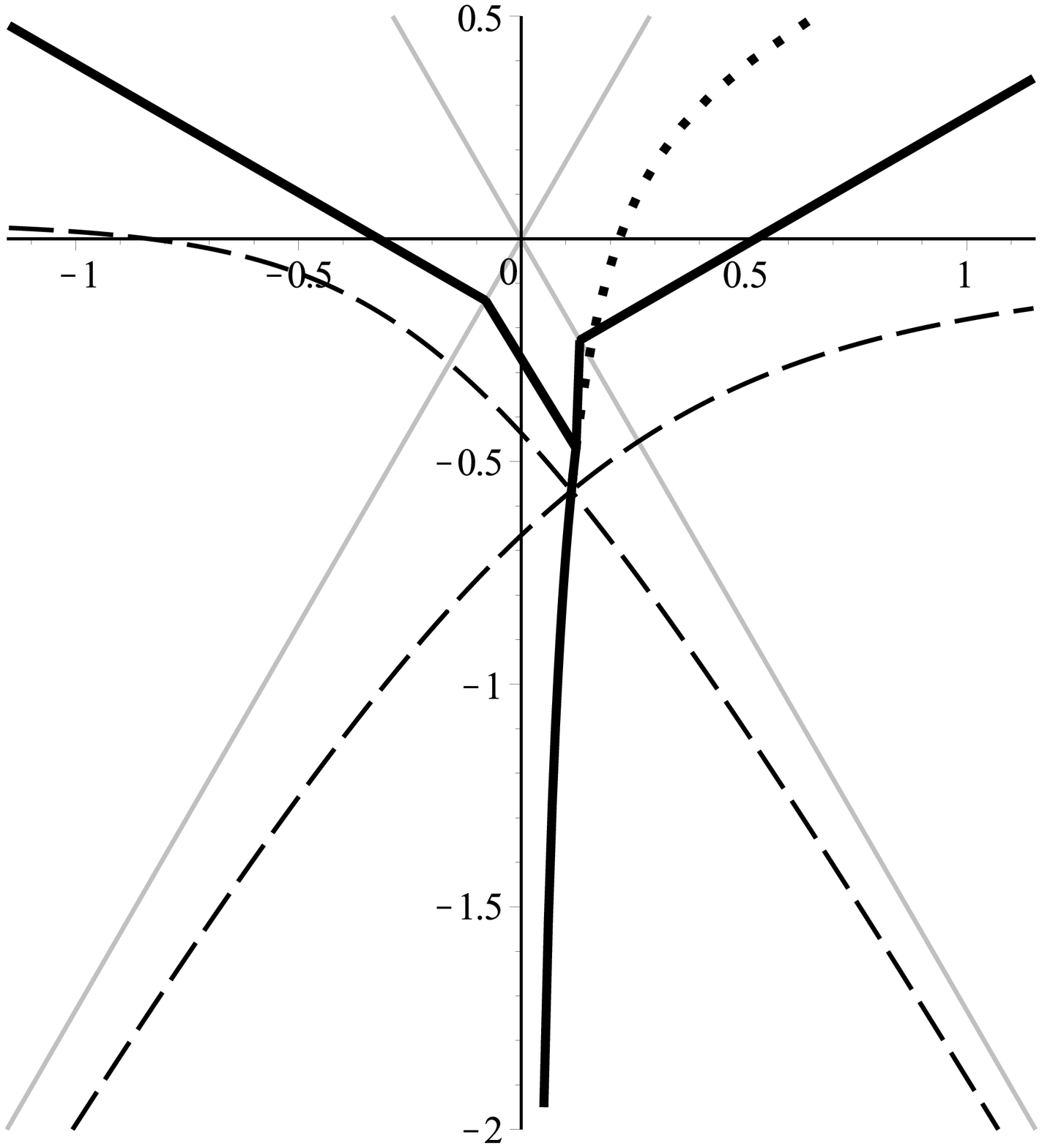}
    \caption{The solid curves are $\tilde{\Gamma}^{(2)}_1 \cup \tilde{\Gamma}^{(2)}_3$ and $\tilde{\Gamma}^{(4)}_1 \cup \tilde{\Gamma}^{(4)}_3$. (They overlap in the bottom part.) The dotted curve is $\gamma_{-, 2}$, and the dashed curves are $L_-$. Here $\tilde{a} = 1$ and $\tilde{c} = e^{\pi i/8}$.}
    \label{fig:Gamma_24}
  \end{minipage}
\end{figure}
\begin{proof}
  We give the proof to part \ref{enu:level_curves_through_crit_pt_1}, and that to part \ref{enu:level_curves_through_crit_pt_2} is analogous.
  
  By the argument in the beginning of the proof of Lemma \ref{lem:level_curve_0}, we know that the local level curves through $\xi_+$ go to infinity in the directions $k\pi/3$. By inequality \eqref{eq:inequality_critical_pt}, we know that $L_+$, the level curve through $\xi_+$, does not intersect $L_0$. Since we assume that $\xi_+$ is in sector $\Omega_1$, we have that $L_+$ can only go to $\infty$ in directions above $L_0$, that is, $0, \pi/3, 2\pi/3, \pi$.

  Recall the notations in Lemma \ref{lem:monotone}. Let $\rho$ be a large enough positive number, then $L_+$ intersects the circle $\{ \lvert \xi \rvert = \rho \}$ at four distinct points. These intersecting points are on $A_1(\rho) \cup A_2(\rho) \cup A_3(\rho) \cup A_4(\rho)$, since they are above $L_0$. By the monotone property stated in Lemma \ref{lem:monotone}, these intersecting points are on distinct arcs $A_k(\rho)$, and then they are around $\rho$, $\rho e^{\pi i/3}$, $\rho e^{2\pi i/3}$ and $-\rho$ respectively. Thus $L_+$ goes to infinity in the four distinct directions.
\end{proof}

\begin{lem} \label{lem:L+-_intersections}
  \begin{enumerate}[label=(\alph*)]
  \item \label{enu:lem:L+-_intersections_a}
    $L_{+, 1}$ intersects with the ray $\{ \arg \xi = \pi/3 \}$ at a point, which we denote by $\xi_1$; $L_{+, 2}$ stays in sector $\Delta_1$ and has the ray as its asymptote, but does not intersect it.
  \item \label{enu:lem:L+-_intersections_b}
    $L_{+, 4}$ intersects with the ray $\{ \arg \xi = 2\pi/3 \}$ at a point, which we denote by $\xi_2$; $L_{+, 3}$ stays in sector $\Delta_1$ and has the ray as its asymptote, but does not intersect it.
  \item \label{enu:lem:L+-_intersections_c}
    $L_{-, 1}$ intersects with the ray $\{ \arg \xi = 5\pi/3 \}$ at a point, which we denote by $\xi_3$; $L_{-, 2}$ stays in sector $\Delta_4$ and has the ray as its asymptote, but does not intersect it.
  \item \label{enu:lem:L+-_intersections_d}
    $L_{-, 4}$ intersects with the ray $\{ \arg \xi = 4\pi/3 \}$ at a point, which we denote by $\xi_4$; $L_{-, 3}$ stays in sector $\Delta_4$ and has the ray as its asymptote, but does not intersect it.
  \end{enumerate}
\end{lem}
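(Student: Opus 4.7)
\emph{Proof proposal.} I treat part~(a) in detail; parts~(b)--(d) follow by the identical recipe. The strategy combines the strict monotonicity of $\Re F$ on each of the six boundary rays $\{\arg\xi=k\pi/3\}$ with a bookkeeping argument about which of these rays the connected curve $L_+$ is actually allowed to cross.

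On the ray $\{\arg\xi=\pi/3\}$, the cubic term $i\tilde{a}\xi^3=-i\tilde{a}r^3$ is purely imaginary, so
\[
\Re F(re^{i\pi/3}) = r\,\Re\!\bigl(\tilde{c}\,e^{i5\pi/6}\bigr).
\]
The hypothesis $\arg\tilde{c}\in(-\pi/3+\delta',\pi/3-\delta')$ forces $\arg\!\bigl(\tilde{c}e^{i5\pi/6}\bigr)\in(\pi/2+\delta',7\pi/6-\delta')$ and hence $\Re(\tilde{c}e^{i5\pi/6})<0$, so $\Re F$ on this ray is a strictly decreasing linear function of $r$, running from $0$ down to $-\infty$. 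Combined with the elementary computation $F(\xi_+)=-2\tilde{c}^{3/2}/(3\sqrt{3\tilde{a}})$, which yields $\Re F(\xi_+)<0$ since $\arg\tilde{c}^{3/2}\in(-\pi/2,\pi/2)$, the level $\Re F=\Re F(\xi_+)$ is attained on the ray at exactly one point $\xi_1=r_1e^{i\pi/3}$. In particular $L_+\cap\{\arg\xi=\pi/3\}=\{\xi_1\}$, so exactly one of the four branches $L_{+,1},\dots,L_{+,4}$ passes through $\xi_1$.

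To identify that branch as $L_{+,1}$, I would apply the same monotonicity computation to the remaining bounding rays. One checks that $\Re F$ is strictly \emph{increasing} from $0$ to $+\infty$ on both $\{\arg\xi=4\pi/3\}$ and $\{\arg\xi=5\pi/3\}$, so $L_+$ (at negative level) cannot cross either of them; on $\{\arg\xi=0\}$ and $\{\arg\xi=\pi\}$ the function $\Re F$ is strictly monotone and hence crossed at most once. Since $L_{+,1}$ asymptotes to direction $0$, it lies outside $\Delta_1$ at infinity and must exit $\Delta_1$ somewhere. Every candidate detour through $\Delta_2,\Delta_3,\Delta_4$ requires either a crossing of $\{\arg\xi=4\pi/3\}$ or $\{\arg\xi=5\pi/3\}$ (forbidden) or a double use of one of the other crossings (violating uniqueness), so the only surviving route is the direct exit through $\xi_1$, forcing $\xi_1\in L_{+,1}$.

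The statement about $L_{+,2}$ follows: an identical bookkeeping argument applied to $L_{+,4}$ (asymptotic to direction $\pi$) gives $\xi_2\in L_{+,4}$, so neither of the unique crossings $\xi_1,\xi_2$ of $L_+$ with $\partial\Delta_1$ lies on $L_{+,2}$; since $L_{+,2}$ joins $\xi_+\in\Delta_1$ to the asymptotic direction $\pi/3\in\partial\Delta_1$ without exiting $\Delta_1$, it lies in $\overline{\Delta_1}$. A short local calculation at infinity (write $\xi=re^{i(\pi/3+\epsilon')}$ and solve $\Re F(\xi)=\Re F(\xi_+)$ to leading order in $1/r$, obtaining $\epsilon'\sim|\tilde{c}|\sin(\pi/3+\theta)/(3\tilde{a}r^2)>0$) confirms that $L_{+,2}$ approaches the ray from the interior of $\Delta_1$ rather than tangentially from $\Delta_0$. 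Parts (b), (c), (d) proceed by the same recipe, with $\Re F(\xi_-)>0$ playing the role of $\Re F(\xi_+)<0$ in parts (c) and (d). The main obstacle I anticipate is the case analysis in paragraph three: one must enumerate the signs of $\Re F$ on all six bounding rays and use them to exclude every ``long-way-around'' trajectory for $L_{+,1}$, and it is precisely the hypothesis $\arg\tilde{c}\in(-\pi/3+\delta',\pi/3-\delta')$ that makes the required sign pattern hold uniformly.
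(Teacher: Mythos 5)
Your proof is correct and rests on the same pillars as the paper's: the strict monotonicity of $\Re F$ along the bounding rays $\{\arg\xi=\pi/3\}$ and $\{\arg\xi=2\pi/3\}$, which forces $L_+$ to meet each at most once, together with the sign fact $\Re F(\xi_+)<0$ (which the paper gets from the earlier inequality $\Re F(\xi_+)<\Re F(0)=0$ rather than from the explicit closed form $F(\xi_+)=-2\tilde c^{3/2}/(3\sqrt{3\tilde a})$ you compute). Where the two arguments genuinely diverge is the branch-identification step. The paper dispatches it in one sentence: $L_{+,1}$ and $L_{+,4}$ run from $\xi_+$ to $e^0\cdot\infty$ and $e^{\pi i}\cdot\infty$ respectively, and they do not intersect, so the assignment of $\xi_1$ to $L_{+,1}$ and $\xi_2$ to $L_{+,4}$ is read off from the planar picture. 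You replace this terse topological appeal with a fully explicit region-by-region elimination: you compute the sign and monotonicity of $\Re F$ on \emph{all six} rays $\{\arg\xi=k\pi/3\}$, observe that the two rays bordering $\Delta_4$ carry $\Re F>0$ so cannot be crossed at the level $\Re F(\xi_+)<0$, and that each of the other rays is crossed at most once, so that any ``long way around'' exit of $L_{+,1}$ from $\Delta_1$ through $\xi_2$ strands the branch in $\Delta_2\cup\Delta_3$, away from its asymptotic direction $e^0\cdot\infty$. Your version is more elementary and self-contained (no planarity picture is invoked), at the cost of more casework; it also essentially supplies the justification that the paper leaves implicit. One small imprecision: on $\{\arg\xi=0\}$ and $\{\arg\xi=\pi\}$ the function $\Re F$ equals $\mp r\,\Im\tilde c$, which is constant (identically zero) when $\arg\tilde c=0$, so ``strictly monotone'' is not literally correct there; but since $\Re F(\xi_+)<0\ne 0$, the conclusion ``crossed at most once'' still holds, so this is harmless. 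Finally, your closing local expansion $\epsilon'\sim|\tilde c|\sin(\pi/3+\theta)/(3\tilde a r^2)>0$ is a nice check that $L_{+,2}$ approaches the ray $\{\arg\xi=\pi/3\}$ from within $\Delta_1$, but it is not logically needed: once $\xi_1\in L_{+,1}$ and $\xi_2\in L_{+,4}$ are the unique crossings, $L_{+,2}$ starts at $\xi_+\in\Delta_1$ and can never meet either bounding ray, so it stays in the open sector automatically.
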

\begin{proof}
  We prove parts \ref{enu:lem:L+-_intersections_a} and \ref{enu:lem:L+-_intersections_b}, and the proof to parts \ref{enu:lem:L+-_intersections_c} and \ref{enu:lem:L+-_intersections_d} is similar.

  We note that $\Re F(\xi)$ is monotonically decreasing as $\xi$ moves to $\infty$ on either the ray $\{ \arg \xi = \pi/3 \}$ or $\{ \arg \xi = 2\pi /3 \}$, which are the two boundaries of $\Delta_1$. So $L_+ = L_{+, 1} \cup L_{+, 2} \cup L_{+, 3} \cup L_{+, 4}$ intersects either ray at one point at most. Since $L_{+, 1}$ goes from $\xi_+$ to $e^0 \cdot \infty$, $L_{+, 4}$ goes from $\xi_+$ to $e^{\pi i} \cdot \infty$, and they do not intersect, we have that $L_{+, 1}$ intersects with the ray $\{ \arg \xi = \pi/3 \}$, at a point, and $L_{+, 4}$ intersects with the ray $\{ \arg \xi = 2\pi/3 \}$ at a point. The results for $L_{+, 2}$ and $L_{+, 3}$ are deduced by their asymptotic property in Lemma \ref{lem:4_crosses} and the fact that they do not intersect with the two rays.
\end{proof}

\subsubsection{\ref{enu:Step_2}(b): Construction of $\Gamma^{(0)}$, $\Gamma^{(2)}$ and $\Gamma^{(4)}$} \label{subsubsec:construction_of_contours}

We consider the images of $\Gamma^{(0)}, \Gamma^{(2)}, \Gamma^{(4)}$ and their components under the change of variables \eqref{eq:zeta_xi_change}, which are
\begin{equation} \label{eq:Sigma_in_Gamma}
  \tilde{\Gamma}^{(k)}_* = \left\{ \xi \in \compC \mid \xi + \frac{ib}{3a + 4} \in \Gamma^{(k)}_* \right\}, \quad k = 0, 2, 4, \quad * = 1, 2, 3 \text{ or blank}.
\end{equation}
The construction of $\tilde{\Gamma}^{(k)}$ is equivalent to the construction of $\Gamma^{(k)}$.

The basic ingredient for the construction of the contours are the \emph{flow curves} with respect to the gradient field $\nabla \Re F$, i.e., smooth curves with parametrization $(x(t), y(t))$ such that
\begin{equation}
  (x'(t), y'(t)) = \left( \frac{\partial}{\partial x} F(x(t) + iy(t)), \frac{\partial}{\partial x} F(x(t) + iy(t)) \right).
\end{equation}
Through any point where $\nabla \Re F$ does not vanish, there is a unique flow curve. But from the critical point $\xi_+$ or $\xi_-$, there are four flow curves connecting to $\xi_{\pm}$, with two flowing out of $\xi_{\pm}$, that is, $\Re F$ increases along the flow curves away from $\xi_{\pm}$, and two flowing in $\xi_{\pm}$, that is, $\Re F$ increases along the flow curves towards $\xi_{\pm}$. Generically the flow lines connecting $\xi_{\pm}$ can be extended to $\infty$, but in some special cases a flow line may connect $\xi_+$ and $\xi_-$. If a flow curve extends to $\infty$, then it can either go into $\infty$ in directions $e^{\pi i/2} \cdot \infty$, $e^{7\pi i/6} \cdot \infty$, and $e^{-\pi i/6} \cdot \infty$, or go out of $\infty$ in directions $e^{\pi i/6} \cdot \infty$, $e^{5\pi i/6} \cdot \infty$ and $e^{-\pi i/2} \cdot \infty$.

Around $\xi_{\pm}$, the flow curves into and out of $\xi_{\pm}$ alternate with the level curves $\Re F(\xi) = \Re F(\xi_{\pm})$. Part \ref{enu:level_curves_through_crit_pt_1} of Lemma \ref{lem:4_crosses} shows that one flow curve that flows in $\xi_+$ lies between $L_{+, 1}$ and $L_{+, 2}$, and is from $e^{\pi i/6} \cdot \infty$, and the other flow curve that flows in $\xi_+$ lies between $L_{+, 3}$ and $L_{+, 4}$, and is from $e^{5\pi i/6} \cdot \infty$. We denote them by $\gamma_{+, 1}$ and $\gamma_{+, 2}$ respectively for later use. Part \ref{enu:level_curves_through_crit_pt_2} of Lemma \ref{lem:4_crosses} shows that one flow curve that flows out of $\xi_-$ lies between $L_{-, 2}$ and $L_{-, 3}$, and goes to $e^{-\pi i/2} \cdot \infty$, and the other flow curve that flows out of $\xi_-$ lies above $L_{-, 1}$ and $L_{-, 4}$. We denote them by $\gamma_{-, 1}$ and $\gamma_{-, 2}$ respectively. The flow curve $\gamma_{-, 2}$ may end at $e^{\pi i/6} \cdot \infty$, $e^{5\pi i/6} \cdot \infty$, or $\xi_+$, depending on the argument of $\tilde{c}$, but we do not need this piece of information.

Below we construct $\tilde{\Gamma}^{(0)}, \tilde{\Gamma}^{(2)}$ and $\tilde{\Gamma}^{(4)}$ in the special case that $\tilde{\alpha} = 1$ and $\tilde{c} = e^{i\theta}$ with $\theta \in (-\pi/3 + \delta', \pi/3 - \delta')$. This construction may seem impractical, since our interest is the limiting case that $z \to \infty$, or equivalently, $\tilde{c} \to \infty$. Actually if $\tilde{a} = 1 = \lvert \tilde{c} \rvert = 1$, $\theta = \arg z$ may not satisfy the condition in part \ref{enu:thm:main_asy_1} of Proposition \ref{thm:main_asy}. But the construction for general $\tilde{a}$ and $\tilde{c}$, particularly for large $\tilde{c}$, will be derived by a scaling transform of the special case.

\paragraph{Construction of $\tilde{\Gamma}^{(0)}_1 \cup \tilde{\Gamma}^{(0)}_3$ for $\tilde{a} = 1$ and $\lvert \tilde{c} \rvert = 1$}

Recall that $L_{+, 1}$ intersects with the ray $\{ \arg \xi = \pi/3 \}$ at $\xi_1$. Since the flow curve $\gamma_{+, 1}$ lies above $L_{+, 1}$ and extends to $e^{\pi i/6} \cdot \infty$, it hits the ray $\{ \xi \mid \arg \xi = \pi/3 \text{ and } \lvert \xi \rvert > \lvert \xi_1 \rvert \}$ at a point, which we denote by $\xi'_1$. By the property of flow curve, $\Re F(\xi)$ decreases as $\xi$ moves from $\xi_+$ to $\xi'_1$ along $\gamma_{+, 1}$. A simple calculation shows that as $\xi$ moves to $e^{\pi i/6} \cdot \infty$ along the ray $\{ \xi'_1 + \rho e^{\pi i/6} \mid \rho \geq 0 \}$, $\Re F(\xi)$ is also decreasing. Similarly, $\gamma_{+, 2}$ hits the ray $\{ \xi \mid \arg \xi = 2\pi/3 \text{ and } \lvert \xi \rvert > \lvert \xi_2 \rvert \}$ at a point, which we denote by $\xi'_2$. $\Re F(\xi)$ decreases as $\xi$ moves along $\gamma_{+, 2}$ from $\xi_+$ to $\xi'_2$, and furthermore it decreases as $\xi$ moves to $\infty$ along the ray $\{ \xi'_2 + \rho e^{5\pi i/6} \mid \rho \geq 0 \}$. We define $\tilde{\Gamma}^{(0)}_1 \cup \tilde{\Gamma}^{(0)}_3$ by the concatenation of (i) the ray $\{ \xi'_2 + \rho e^{5\pi i/6} \mid \rho \geq 0 \}$, (ii) the part of $\gamma_{+, 2}$ between $\xi_+$ and $\xi'_2$, (iii) the part of $\gamma_{+, 1}$ between $\xi_+$ and $\xi'_1$, and (iv) the ray $\{ \xi'_1 + \rho e^{\pi i/3} \mid \rho \geq 0 \}$, with the orientation from $e^{5\pi i/6} \cdot \infty$ to $e^{\pi i/6} \cdot \infty$. For a numerical plotting of $\tilde{\Gamma}^{(0)}_1 \cup \tilde{\Gamma}^{(0)}_3$, see Figure \ref{fig:Gamma_0}. 

\begin{rem}
  \begin{itemize}
  \item
    We have not constructed $\tilde{\Gamma}^{(0)}_1$ and $\tilde{\Gamma}^{(0)}_3$ individually yet, since the dividing point between them is not given. 
  \item 
    It seems that we can let $\tilde{\Gamma}^{(0)}_1 \cup \tilde{\Gamma}^{(0)}_3$ simply be $\gamma_{+, 1} \cup \gamma_{+, 2}$. But then it is not easy to show that $\gamma_{+, 1}$ (resp.~$\gamma_{+, 2}$) stay in $\Delta_0 \cup \Delta_1$ (resp.~$\Delta_1 \cup \Delta_2$), and then it is a problem to verify \eqref{eq:condition_Gamma_0} for $\Gamma^{(0)}_1$ and $\Gamma^{(0)}_3$ later.
  \end{itemize}
\end{rem}

\paragraph{Construction of $\tilde{\Gamma}^{(2)}_1 \cup \tilde{\Gamma}^{(2)}_3$ and $\tilde{\Gamma}^{(4)}_1 \cup \tilde{\Gamma}^{(4)}_3$ for $\tilde{a} = 1$ and $\lvert \tilde{c} \rvert = 1$}

First we note that the flow curve $\gamma_{-, 1}$ stays in $\Delta_4$. Next we note that the level curves $L_{-, 1}$ and $L_{-, 4}$, the line segment between $\xi_3$ and $0$, and the line segment between $\xi_4$ and $0$ enclose a region, which we call $R$. On the boundary of $R$, $\Re F(\xi)$ keeps the same on the level curves and decreases as $\xi$ moves above to $0$ along either of the two line segments. The flow curve $\gamma_{-, 2}$ goes into region $R$. Letting $\epsilon$ be a small enough positive constant, we take $\xi'_-$ as the point on $\gamma_{-, 2}$ such that $\lvert \xi'_- - \xi_- \rvert = \epsilon$, and denote the part of $\gamma_{-, 2}$ between $\xi_-$ and $\xi'_0$ by $\gamma_{\epsilon}$. Then there exists a smooth curve lying in region $R$ and connecting $\xi'_-$ and $\xi_3/2$, which we denote by $C_3$, such that $\Re F(\xi)$ decreases monotonically as $\xi$ moves along $C_3$ from $\xi'_-$ to $\xi_3/2$. Similarly, there exists a smooth curve lying in region $R$ and connecting $\xi'_-$ and $\xi_4/2$, which we denote by $C_4$, such that $\Re F(\xi)$ decreases monotonically as $\xi$ moves along $C_4$ from $\xi'_-$ to $\xi'_4/2$. At last, by direct calculation, we find that $\Re F(\xi)$ decreases as $\xi$ moves along the ray $\{ \xi_3/2 + \rho e^{\pi i/6} \mid \rho \geq 0 \}$ to $e^{\pi i/6} \cdot \infty$, and analogously that $\Re F(\xi)$ decreases as $\xi$ moves along the ray $\{ \xi_4/2 + \rho e^{5\pi i/6} \mid \rho \geq 0 \}$ to $e^{5\pi i/6} \cdot \infty$.

Thus we define $\tilde{\Gamma}^{(2)}_1 \cup \tilde{\Gamma}^{(2)}_3$ by the concatenation of (i) the ray $\{ \xi_3/2 + \rho e^{\pi i/6} \mid \rho \geq 0 \}$, (ii) the curve $C_3$, (iii) the curve $\gamma_{\epsilon}$, and (iv) the flow curve $\gamma_{-, 1}$, with the orientation from $e^{\pi i/6} \cdot \infty$ to $e^{-\pi i/2} \cdot \infty$. Similarly, we define $\tilde{\Gamma}^{(4)}_1 \cup \tilde{\Gamma}^{(4)}_3$ by the concatenation of (i) the flow curve $\gamma_{-, 1}$, (ii) the curve $\gamma_*$, (iii) the curve $C_4$, and (iv) the ray $\{ \xi_4/2 + \rho e^{5\pi i/6} \mid \rho \geq 0 \}$, with the orientation from $e^{-\pi i/2} \cdot \infty$ to $e^{5\pi i/6} \cdot \infty$.

For a numerical plotting of $\tilde{\Gamma}^{(2)}_1 \cup \tilde{\Gamma}^{(2)}_3$ and $\tilde{\Gamma}^{(4)}_1 \cup \tilde{\Gamma}^{(4)}_3$, see Figure \ref{fig:Gamma_24}. Note that they have overlap $\gamma_{-, 1} \cup \gamma_{\epsilon}$, which explains the overlap in the schematic Figure \ref{fig:contours_part_1}.

\paragraph{Construction of $\tilde{\Gamma}^{(0)}_2$, $\tilde{\Gamma}^{(2)}_2$ and $\tilde{\Gamma}^{(4)}_2$ for $\tilde{a} = 1$ and $\lvert \tilde{c} \rvert = 1$}

In the construction, we define the function
\begin{equation} \label{eq:defn_hat_F}
  \hat{F}(\xi) = F(\xi) - i\frac{8}{3} \xi^3 = i \left( \tilde{a} - \frac{8}{3} \right) \xi^3 + i\tilde{c} \xi,
\end{equation}
and note that the leading coefficient of $\hat{F}$ satisfies $\tilde{a} - 8/3 < 0$. Notice that in the integral formulas \eqref{Q_def}, integrands on $\Sigma^{(0)}_2$, $\Sigma^{(2)}_2$ and $\Sigma^{(4)}_2$, although different, can all be written in the form of $\exp(-4i \zeta^3/3 -i\sigma\zeta + 2iz\zeta/C) G(\zeta) \times (\text{factor growing at most linearly in $\zeta$})$, and we have
\begin{equation}
  \exp(-4i \zeta^3/3 -i\sigma\zeta + 2iz\zeta/C) G(\zeta) - \hat{F}(\xi) = \exp(\text{quadratic polynomial in $\zeta$}).
\end{equation}

The ray $\{ \rho e^{\pi i/2} \mid \rho > 0 \}$ intersects with $\tilde{\Gamma}^{(0)}_1 \cup  \tilde{\Gamma}^{(0)}_3$ at a point, which we denote by $\xi''_0$. Then we define $\tilde{\Gamma}^{(0)}_2$ to be the ray $\{ \rho e^{\pi i/2} \mid \rho \geq \lvert \xi''_0 \rvert \}$. The following properties can be checked by direct computation: (i) $\tilde{\Gamma}^{(0)}_2$ is contained in $\Omega_1$, and (ii) 
\begin{equation} \label{eq:property_hat_F_Tamma^0_2}
  \Re \hat{F}(\xi''_0) < \Re F(\xi''_0) < \Re F(\xi_+) \quad \text{and} \quad \Re \text{$\hat{F}(\xi)$ decreases as $\xi$ moves along $\tilde{\Gamma}^{(0)}_2$ from $\xi''_0$ to $\infty$}.
\end{equation}

Let $\varphi \in (0, \pi/6)$ be a small enough positive number such that
\begin{equation} \label{eq:ineq_phi_tilde_Gamma^2_2}
  \frac{2 \sin(\varphi/2)^{3/2}}{\sqrt{(8 - 3\tilde{a}) \sin(3\varphi)}} < \frac{2}{3} (3\tilde{a})^{-1/2} \sin(3\delta'/2).
\end{equation}
Then the ray $\{ \rho e^{-i\varphi} \mid \rho \geq 0 \}$ intersects with $\tilde{\Gamma}^{(2)}_1 \cup \Gamma^{(2)}_3$ at a point, which we denote by $\xi''_2$, and the ray $\{ \rho e^{i(\pi + \varphi)} \mid \rho \geq 0 \}$ intersects $\tilde{\Gamma}^{(4)}_1 \cup \tilde{\Gamma}^{(4)}_3$ at a point, which we denote by $\xi''_4$. We define $\tilde{\Gamma}^{(2)}_2$ by the ray $\{ \rho e^{-i\varphi} \mid \rho \geq \lvert \xi''_2 \rvert \}$ if $\arg(\tilde{c}) \in (-\varphi/2, \pi/3 - \delta')$, and define $\tilde{\Gamma}^{(4)}_2$ by the ray $\{ \rho e^{i(\pi/2 + \varphi)} \mid \rho \geq \lvert \xi''_4 \rvert \}$ if $\arg(\tilde{c}) \in (-\pi/3 + \delta', \varphi/2)$. Then we have that 
\begin{equation} \label{eq:hat_F_on_tilde_Gamma^2_2}
  \Re \hat{F}(\xi) < \Re F(\xi_-)
  \begin{cases}
    \text{for all $\xi \in \tilde{\Gamma}^{(2)}_2$} & \text{if $\arg(\tilde{c}) \in (-\varphi/2, \pi/3 - \delta')$,} \\
    \text{for all $\xi \in \Gamma^{(4)}_2$} & \text{if $\arg(\tilde{c}) \in (-\pi/3 + \delta', \varphi/2)$}.
  \end{cases}
\end{equation}
  Below we check \eqref{eq:hat_F_on_tilde_Gamma^2_2} in the case that $\xi \in \tilde{\Gamma}^{(2)}_2$, and the case $\xi \in \tilde{\Gamma}^{(4)}_2$ is analogous. We first note that for all $\tilde{c} = e^{i\theta}$ with $\theta \in (-\varphi/2, \pi/3 - \delta')$ and for all $\rho > 0$,
  \begin{equation}
    \Re \left( i \tilde{c} \rho e^{-i\varphi} \right) = \rho \Re e^{i(\theta - \varphi + \pi/2)} \leq \rho \Re e^{i(\pi/2 - \varphi/2)} = \rho \sin(\varphi/2).
  \end{equation}
  So the value $\Re \hat{F}(\xi)$ for $\xi$ on the ray $\{ \rho e^{-i\theta} \mid \rho \geq 0 \}$ satisfies
\begin{equation}
  \begin{split}
    \Re \hat{F}(\xi) = \Re \hat{F}(\rho e^{-i\theta}) = {}& \Re \left( i \left( \tilde{a} - \frac{8}{3} \right) \rho^3 e^{-3i\theta} + i\tilde{c} \rho e^{-i\theta} \right) \\
    \leq {}& \left( \tilde{a} - \frac{8}{3} \right) \rho^3_0 \sin (3\theta) + \rho_0 \sin (\theta/2) = \frac{2 \sin(\varphi/2)^{3/2}}{\sqrt{(8 - 3\tilde{a}) \sin(3\varphi)}}.
  \end{split}
\end{equation}
On the other hand,
\begin{equation} \label{eq:value_of_F(xi_+)}
  \Re F(\xi_-) = \frac{2}{3} (3\tilde{a})^{-1/2} \Re(\tilde{c}^{3/2}) > \frac{2}{3} (3\tilde{a})^{-1/2} \sin(3\delta'/2), 
\end{equation}
since $\lvert \tilde{c} \rvert = 1$ and $\arg(\tilde{c}) \in (-\varphi/2, \pi/3 - \delta')$. So inequalities \eqref{eq:ineq_phi_tilde_Gamma^2_2} and \eqref{eq:value_of_F(xi_+)} imply \eqref{eq:hat_F_on_tilde_Gamma^2_2} in the case that $\xi \in \tilde{\Gamma}^{(2)}_2$.

\begin{rem} \label{rem:general_shape_of_Gammas}
  Although our construction depends on the value of $\arg(\tilde{c})$, by the compactness argument it is clear that for all $\tilde{c}$ that satisfy \eqref{eq:arg_tilde_c}, there exists $\epsilon > 0$ such that we can make:
  \begin{enumerate}
  \item
    For $\arg(\tilde{c} \in (-\pi/3 + \delta', \pi/3 - \delta')$, $\tilde{\Gamma}^{(0)}_1 \in \overline{\Delta_0 \cup \Delta_1}$, $\tilde{\Gamma}^{(0)}_2 \in \Delta_1$, $\tilde{\Gamma}^{(0)}_3 \in \overline{\Delta_1 \cup \Delta_2}$, and \linebreak[4] $\dist(\tilde{\Gamma}^{(0)}_1, \partial(\overline{\Delta_0 \cup \Delta_1})) > \epsilon$, $\dist(\tilde{\Gamma}^{(0)}_2, \partial(\Delta_1)) > \epsilon$, $\dist(\tilde{\Gamma}^{(0)}_3, \partial(\overline{\Delta_1 \cup \Delta_2})) > \epsilon$.
  \item
    For $\arg(\tilde{c} \in (-\varphi/2, \pi/3 - \delta')$, $\tilde{\Gamma}^{(2)}_1 \in \overline{\Delta_4 \cup \Delta_5}$, $\tilde{\Gamma}^{(2)}_2 \in \Delta_5$, $\tilde{\Gamma}^{(2)}_3 \in \overline{\Delta_5 \cup \Delta_0}$, and \linebreak[4] $\dist(\Sigma^{(2)}_1, \partial(\overline{\Omega_4 \cup \Omega_5})) > \epsilon$, $\dist(\tilde{\Gamma}^{(2)}_2, \partial(\Delta_5)) > \epsilon$, $\dist(\tilde{\Gamma}^{(2)}_3, \partial(\overline{\Delta_5 \cup \Delta_0})) > \epsilon$.
  \item
    For $\arg(\tilde{c} \in (-\pi/2 + \delta', \varphi/2)$, $\tilde{\Gamma}^{(4)}_1 \in \overline{\Delta_2 \cup \Delta_3}$, $\tilde{\Gamma}^{(4)}_2 \in \Delta_3$, $\tilde{\Gamma}^{(4)}_3 \in \overline{\Delta_3 \cup \Delta_4}$, and \linebreak[4] $\dist(\tilde{\Gamma}^{(4)}_1, \partial\overline{\Delta_2 \cup \Delta_3})) > \epsilon$, $\dist(\tilde{\Gamma}^{(4)}_2, \partial(\Delta_3) > \epsilon$, $\dist(\tilde{\Gamma}^{(4)}_3, \partial(\overline{\Delta_3 \cup \Delta_4})) > \epsilon$.
  \end{enumerate}
\end{rem}

\paragraph{Construction for the contours with general $\tilde{a}$ and $\tilde{c}$}

At last we consider the general case that $\tilde{a}$ is any positive number between $0$ and $8/3$, and $\tilde{c}$ is any number such that $\arg \tilde{c} \in (-\pi/3 + \delta', \pi/3 - \delta')$. We first construct the contours $\tilde{\Gamma}^{(0)}_{\scaled}$, $\tilde{\Gamma}^{(2)}_{\scaled}$ and $\tilde{\Gamma}^{(4)}_{\scaled}$ with respect to the parameters $1$ and $\tilde{c}/\lvert \tilde{c} \rvert$ in place of $\tilde{\alpha}$ and $\tilde{c}$, and then scale the contours in $\xi$-plane by the factor $\sqrt{\lvert \tilde{c} \rvert / \tilde{a}}$, that is, $\xi \in \tilde{\Gamma}^{(k)}$ if and only if $\xi/\sqrt{\lvert \tilde{c} \rvert/\tilde{a}} \in \tilde{\Gamma}^{(k)}_{\scaled}$. Then $\tilde{\Gamma}^{(0)}_1 \cup \tilde{\Gamma}^{(0)}_3$ is still through the point $\xi_+ = i \sqrt{\frac{\tilde{c}}{3\tilde{a}}}$, and $\tilde{\Gamma}^{(2)}_1 \cup \tilde{\Gamma}^{(2)}_3$ and $\tilde{\Gamma}^{(4)}_1 \cup \tilde{\Gamma}^{(4)}_3$ are still through the point $\xi_- = -i \sqrt{\frac{\tilde{c}}{3\tilde{a}}}$.

Our goal is to construct $\Gamma^{(0)}, \Gamma^{(2)}, \Gamma^{(4)}$, and it can be done by a translation of $\tilde{\Gamma}^{(0)}, \tilde{\Gamma}^{(2)}, \tilde{\Gamma}^{(4)}$ according to \eqref{eq:Sigma_in_Gamma}. Note that after a translation, the contours $\Gamma^{(k)}_j$ may not lie in the same sectors as $\Sigma^{(k)}_j$ do. But as $\lvert z \rvert \to \infty$, or equivalently, $\lvert \tilde{c} \rvert \to \infty$, the finite translation can be neglected. To be precise, if $\lvert z \rvert$ is large enough, then $\arg(z) \in (-\pi/3 + \delta, \pi/3 - \delta)$ implies that $\arg(\tilde{c})$ satisfies \eqref{eq:arg_tilde_c}, and $\arg(z) \in [0, \pi/3 - \delta)$ (\resp\ $\arg(z) \in (-\pi/3 + \delta, 0]$) implies that $\arg(\tilde{c}) \in (-\varphi/2, \pi/3 - \delta')$ (\resp\ $\arg(\tilde{c}) \in (-\pi/3 + \delta', \varphi/2)$). Thus we derive results \eqref{eq:condition_Gamma_0}--\eqref{eq:saddle_pt_property_of_Gamma} by properties stated in Remark \ref{rem:general_shape_of_Gammas} for the contours $\tilde{\Gamma}^{(k)}_{\scaled}$.

\subsubsection{\ref{enu:Step_3}: Saddle point analysis} \label{subsubsec:saddle_point_anal}

First we compute $n^{(0)}(z)$ as $\lvert z \rvert \to \infty$ with $\arg z \in (-\pi/3 + \delta, \pi/3 - \delta)$. As discussed in the beginning of this section, this condition is equivalent to \eqref{eq:arg_tilde_c} and $\lvert \tilde{c} \rvert \to \infty$.

We write
\begin{equation} \label{eq:formula_n^0}
  n^{(0)}(z) = \M (\tilde{n}^{(0)}(z) + \hat{n}^{(0)}(z)),
\end{equation}
where
\begin{align}
  \tilde{n}^{(0)}(z) = {}&
                           \begin{pmatrix}
                             \int_{\Gamma^{(0)}_1} e^{\frac{2iz \zeta}{C}} \Psi^{(1)}_{1, 2}(\zeta) G_1(\zeta) dz + \int_{\Gamma^{(0)}_3} e^{\frac{2iz \zeta}{C}} \Psi^{(2)}_{1, 2}(\zeta) G_1(\zeta) d\zeta \\
                             \int_{\Gamma^{(0)}_1} e^{\frac{2iz \zeta}{C}} \Psi^{(1)}_{2, 2}(\zeta) G_2(\zeta) dz + \int_{\Gamma^{(0)}_3} e^{\frac{2iz \zeta}{C}} \Psi^{(2)}_{2, 2}(\zeta) G_2(\zeta) d\zeta \\
                             \int_{\Gamma^{(0)}_1} e^{\frac{2iz \zeta}{C}} \Psi^{(1)}_{1, 2}(\zeta) G_3(\zeta) dz + \int_{\Gamma^{(0)}_3} e^{\frac{2iz \zeta}{C}} \Psi^{(2)}_{1, 2}(\zeta) G_3(\zeta) d\zeta \\
                             \int_{\Gamma^{(0)}_1} e^{\frac{2iz \zeta}{C}} \Psi^{(1)}_{2, 2}(\zeta) G_4(\zeta) dz + \int_{\Gamma^{(0)}_3} e^{\frac{2iz \zeta}{C}} \Psi^{(2)}_{2, 2}(\zeta) G_4(\zeta) d\zeta
                           \end{pmatrix}, \\
  \hat{n}^{(0)}(z) = {}&
                         \begin{pmatrix}
                           \int_{\Gamma^{(0)}_2} e^{\frac{2iz \zeta}{C}} \Psi^{(1)}_{1, 1}(\zeta) G_1(\zeta) d\zeta \\
                           \int_{\Gamma^{(0)}_2} e^{\frac{2iz \zeta}{C}} \Psi^{(1)}_{2, 1}(\zeta) G_2(\zeta) d\zeta \\
                           \int_{\Gamma^{(0)}_2} e^{\frac{2iz \zeta}{C}} \Psi^{(1)}_{1, 1}(\zeta) G_3(\zeta) d\zeta \\
                           \int_{\Gamma^{(0)}_2} e^{\frac{2iz \zeta}{C}} \Psi^{(1)}_{2, 1}(\zeta) G_4(\zeta) d\zeta
                         \end{pmatrix},
\end{align}
$\M$ is defined in \eqref{eq:defn_M4x4}, and $\Psi^{(k)}$ is the fundamental solution of \eqref{int:2a} that is expressed in $\psi^{(1)}$ and $\psi^{(2)}$ in Figure \ref{fig:2x2Phi}.

We note that $(\Psi^{(1)}_{1, 2}, \Psi^{(1)}_{2, 2})^T = \psi^{(2)}$ that is defined in \eqref{eq:defn_psi^1_psi^2}. By \eqref{eq:condition_Gamma_0} and \eqref{eq:Gamma_away_from_0}, we have that for all $\zeta \in \Gamma^{(0)}_1$, the asymptotic formula \eqref{eq:asy_psi^2} holds uniformly. Then we use the asymptotics of $\Psi^{(1)}_{1, 2}(\zeta), \Psi^{(1)}_{2, 2}(\zeta)$ to derive that uniformly
\begin{align}
  e^{\frac{2iz \zeta}{C}} \Psi^{(1)}_{1, 2}(\zeta) G_1(\zeta) = {}& \sqrt{\frac{2}{\pi}} \frac{\gamma_1}{C\ga_2 \sqrt{r_1}} e^{ - \frac{2r^2_2}{r^2_1 + r^2_2} \tau z} e^{F(\xi)} \bigO(\zeta^{-1}), \label{eq:asy_tilde_n^0_11} \\
  e^{\frac{2iz \zeta}{C}} \Psi^{(1)}_{2, 2}(\zeta) G_2(\zeta) = {}& \sqrt{\frac{2}{\pi}} \frac{1}{C \sqrt{r_2}} e^{ - \frac{2r^2_2}{r^2_1 + r^2_2} \tau z} e^{F(\xi)} (1 + \bigO(\zeta^{-1})),  \label{eq:asy_tilde_n^0_12} \\
  e^{\frac{2iz \zeta}{C}} \Psi^{(1)}_{1, 2}(\zeta) G_3(\zeta) = {}& 2i \sqrt{\frac{2}{\pi}} \frac{\gamma_1}{C^2 \ga_2 \sqrt{r_1}} \zeta e^{ - \frac{2r^2_2}{r^2_1 + r^2_2} \tau z} e^{F(\xi)} \bigO(\zeta^{-1}),\\
  e^{\frac{2iz \zeta}{C}} \Psi^{(1)}_{2, 2}(\zeta) G_4(\zeta) = {}& 2i \sqrt{\frac{2}{\pi}} \frac{1}{C^2 \sqrt{r_2}} \zeta e^{ - \frac{2r^2_2}{r^2_1 + r^2_2} \tau z} e^{F(\xi)} (1 + \bigO(\zeta^{-1})), \label{eq:asy_tilde_n^0_14}
\end{align}
where $\xi$ depends on $\zeta$ by \eqref{eq:zeta_xi_change}. Similarly, $(\Psi^{(2)}_{1, 2}, \Psi^{(2)}_{2, 2})^T = t_2 \psi^{(1)} + (t_1t_2 + 1)\psi^{(2)}$, and by \eqref{eq:condition_Gamma_0} and \eqref{eq:Gamma_away_from_0}, we also have that for all $\zeta \in \Gamma^{(0)}_3$, the asymptotic formula \eqref{eq:asy_psi^2} holds uniformly. Then similar to \eqref{eq:asy_tilde_n^0_11}--\eqref{eq:asy_tilde_n^0_14}, we have uniformly
\begin{align}
  e^{\frac{2iz \zeta}{C}} \Psi^{(2)}_{1, 2}(\zeta) G_1(\zeta) = {}& \sqrt{\frac{2}{\pi}} \frac{\gamma_1}{C\ga_2 \sqrt{r_1}} e^{ - \frac{2r^2_2}{r^2_1 + r^2_2} \tau z} e^{F(\xi)} \bigO(\zeta^{-1}), \\
  e^{\frac{2iz \zeta}{C}} \Psi^{(2)}_{2, 2}(\zeta) G_2(\zeta) = {}& \sqrt{\frac{2}{\pi}} \frac{1}{C \sqrt{r_2}} e^{ - \frac{2r^2_2}{r^2_1 + r^2_2} \tau z}e^{F(\xi)} (1 + \bigO(\zeta^{-1})), \label{eq:asy_tilde_n^0_13} \\
  e^{\frac{2iz \zeta}{C}} \Psi^{(2)}_{1, 2}(\zeta) G_3(\zeta) = {}& 2i \sqrt{\frac{2}{\pi}} \frac{\gamma_1}{C^2 \ga_2 \sqrt{r_1}} \zeta e^{ - \frac{2r^2_2}{r^2_1 + r^2_2} \tau z} e^{F(\xi)} \bigO(\zeta^{-1}),\\
  e^{\frac{2iz \zeta}{C}} \Psi^{(2)}_{2, 2}(\zeta) G_4(\zeta) = {}& 2i \sqrt{\frac{2}{\pi}} \frac{1}{C^2 \sqrt{r_2}} \zeta e^{ - \frac{2r^2_2}{r^2_1 + r^2_2} \tau z} e^{F(\xi)} (1 + \bigO(\zeta^{-1})).
\end{align}
We compute the second component of the $4$-dimensional vector $\tilde{n}^{(0)}(z)$ in detail. The uniform convergence asymptotics \eqref{eq:asy_tilde_n^0_12} and \eqref{eq:asy_tilde_n^0_13} imply that 
\begin{equation}
  \begin{split}
    \tilde{n}^{(0)}_2(z) = {}& \sqrt{\frac{2}{\pi}} \frac{1}{C \sqrt{r_2}} e^{ - \frac{2r^2_2}{r^2_1 + r^2_2} \tau z} \int_{\Gamma^{(0)}_1 \cup \Gamma^{(0)}_3} e^{F(\xi)} (1 + \bigO(\zeta^{-1})) d\zeta \\
    = {}& \sqrt{\frac{2}{\pi}} \frac{1}{C \sqrt{r_2}} e^{ - \frac{2r^2_2}{r^2_1 + r^2_2} \tau z} \int_{\tilde{\Gamma}^{(0)}_1 \cup \tilde{\Gamma}^{(0)}_3} e^{F(\xi)} (1 + \bigO(\xi^{-1})) d\xi.
  \end{split}
\end{equation}
According to the construction in Section \ref{subsubsec:construction_of_contours}, the contour $\tilde{\Gamma}^{(0)}_1 \cup \tilde{\Gamma}^{(0)}_3$ has the following property that $\Re F(\xi)$ attains its unique maximum on it at $\xi_+$, $\Re F(\xi)$ decreases fast as $\xi \to \infty$ along it, and locally around $\xi_+$ it is the steepest descent contour for $\Re F(\xi)$. Thus a standard application of the saddle point method yields
\begin{equation}
  \begin{split}
    \int_{\tilde{\Gamma}^{(0)}_1 \cup \tilde{\Gamma}^{(0)}_3} e^{F(\xi)} (1 + \bigO(\xi^{-1})) d\xi = {}& \sqrt{\frac{2\pi}{-F''(\xi_+)}} e^{F(\xi_+)} (1 + \bigO(\xi^{-1}_+)) \\
    = {}& \frac{\sqrt{\pi}}{2} \frac{(r^2_1 + r^2_2)^{1/3}}{r^{2/3}_1 r^{1/6}_2} z^{-1/4} e^{- \theta_2(z)} (1 + \bigO(z^{-1/2})).
  \end{split}
\end{equation}
Hence 
\begin{equation} \label{eq:approx_n^0_2}
  \tilde{n}^{(0)}_2(z) = \sqrt{\frac{2}{\pi}} \frac{1}{C \sqrt{r_2}} \frac{\sqrt{\pi}}{2} \frac{(r^2_1 + r^2_2)^{1/3}}{r^{2/3}_1 r^{1/6}_2} e^{ - \frac{2r^2_2}{r^2_1 + r^2_2} \tau z} e^{F(\xi_+)} = \frac{1}{\sqrt{2}} e^{-\frac{2r^2_2}{r^2_1 + r^2_2} \tau z} z^{-1/4} e^{- \theta_2(z)} (1 + \bigO(z^{-1/2})).
\end{equation}
Similarly, we have for the fourth component of $\tilde{n}^{(0)}(z)$
\begin{equation} \label{eq:approx_n^0_4}
  \begin{split}
    \tilde{n}^{(0)}_4(z) = {}& 2i \sqrt{\frac{2}{\pi}} \frac{1}{C^2 \sqrt{r_2}} e^{ - \frac{2r^2_2}{r^2_1 + r^2_2} \tau z} \int_{\Gamma^{(0)}_1 \cup \Gamma^{(0)}_3} \zeta e^{F(\xi)} (1 + \bigO(\zeta^{-1})) d\zeta \\
    = {}& 2i \sqrt{\frac{2}{\pi}} \frac{1}{C^2 \sqrt{r_2}} e^{ - \frac{2r^2_2}{r^2_1 + r^2_2} \tau z} \int_{\tilde{\Gamma}^{(0)}_1 \cup \tilde{\Gamma}^{(0)}_3} \left( \xi + \frac{ib}{3a + 4} \right) e^{F(\xi)} (1 + \bigO(\xi^{-1})) d\xi \\
    = {}& \frac{i r_2}{\sqrt{2}} e^{-\frac{2r^2_2}{r^2_1 + r^2_2} \tau z} z^{1/4} e^{- \theta_2(z)} (1 + \bigO(z^{-1/2})).
  \end{split}
\end{equation}
For the first and third components of $\tilde{n}^{(0)}(z)$, we can do the same computation, but we only need the estimates as follows
\begin{equation} \label{eq:approx_n^0_13}
  \tilde{n}^{(0)}_2(z) = e^{-\frac{2r^2_2}{r^2_1 + r^2_2} \tau z} z^{-1/4} e^{- \theta_2(z)} \bigO(z^{-1/2}), \quad \tilde{n}^{(0)}_2(z) = e^{-\frac{2r^2_2}{r^2_1 + r^2_2} \tau z} z^{1/4} e^{- \theta_2(z)} \bigO(z^{-1/2}).
\end{equation}

Next we consider the components of $\hat{n}^{(0)}(z)$, and give some detail in the estimate of the first component. We note that $(\Psi^{(1)}_{1, 1}, \Psi^{(1)}_{2, 1})^T = \psi^{(1)} + t_1 \psi^{(2)}$. By \eqref{eq:condition_Gamma_middles} and \eqref{eq:Gamma_away_from_0}, we have that for all $\zeta \in \Gamma^{(0)}_2$, the asymptotic formula \eqref{eq:asy_psi^1+t_1psi^2} holds uniformly. Then we use the asymptotics of $\Psi^{(1)}_{1, 1}(\zeta)$ to derive that uniformly
\begin{equation} \label{eq:approx_on_Gamma^0_2}
  e^{\frac{2iz \zeta}{C}} \Psi^{(1)}_{1, 2}(\zeta) G_1(\zeta) = \sqrt{\frac{2}{\pi}} \frac{\gamma_1}{C\ga_2 \sqrt{r_1}} e^{ - \frac{2r^2_2}{r^2_1 + r^2_2} \tau z} e^{\hat{F}(\xi) + f(\xi)} \bigO(\zeta^{-1}),
\end{equation}
where $\hat{F}(\xi)$ is defined in \eqref{eq:defn_hat_F} and
\begin{equation}
  f(\xi) = \frac{8b}{3a + 4}\xi^2 + i \left( \frac{8b^2}{(3a + 4)^2} - 2\sigma \right) \xi - \frac{8b^3}{3(3a + 4)^3} + \frac{2\sigma b}{3a + 4}.
\end{equation}
Note that the coefficients of $F(\xi)$ and $\hat{F}(\xi)$ are given in terms of $\tilde{a}$ and $\tilde{c}$. If we denote
\begin{equation}
  \F(\xi) = F(\xi) \Big\rvert_{\tilde{c} \to \tilde{c}/\lvert \tilde{c} \rvert} = i\tilde{a} \xi^3 + i\frac{\tilde{c}}{\lvert \tilde{c} \rvert} \xi, \quad \hat{\F}(\xi) = \hat{F}(\xi) \Big\rvert_{\tilde{c} \to \tilde{c}/\lvert \tilde{c} \rvert} = i \left( \tilde{a} - \frac{8}{3} \right) \xi^3 + i\frac{\tilde{c}}{\lvert \tilde{c} \rvert} \xi, 
\end{equation}
then
\begin{equation}
  F(\xi) = \lvert \tilde{c} \rvert^{3/2} \F \left( \frac{\xi}{\sqrt{\lvert \tilde{c} \rvert}} \right), \quad \hat{F}(\xi) = \lvert \tilde{c} \rvert^{3/2} \hat{\F} \left( \frac{\xi}{\sqrt{\lvert \tilde{c} \rvert}} \right).
\end{equation}
By the construction of $\Gamma^{(0)}_2$ and \eqref{eq:property_hat_F_Tamma^0_2}, we have that for all $\zeta \in \Gamma^{(0)}_2$, or equivalently $\xi \in \tilde{\Gamma}^{(0)}_2$, there exists $\epsilon > 0$ such that
\begin{equation}
  \Re \hat{\F} \left( \frac{\xi}{\sqrt{\lvert \tilde{c} \rvert}} \right) < \Re \F \left( \frac{\xi_+}{\sqrt{\lvert \tilde{c} \rvert}} \right) - \epsilon.
\end{equation}
As $z \to \infty$, we have $\tilde{c} = 2C^{-1} z + \bigO(1)$, and then for all $\zeta \in \Gamma^{(0)}_2$, or equivalently $\xi \in \tilde{\Gamma}^{(0)}_2$, 
\begin{equation}
  \Re \hat{F}(\xi) < \Re F(\xi_+) - \left( \frac{2\epsilon}{C} \right)^{3/2} \lvert z \rvert^{3/2}.
\end{equation}
Since $f(\xi)$ is independent of $z$ and $f(\xi) = \bigO(\xi^2)$ as $\xi \to \infty$, we have that as $z \to \infty$, if $\lvert \xi \rvert \leq \lvert z \rvert^{3/5}$, then $\lvert f(\xi) \rvert = \bigO(z^{6/5})$. Thus for $\xi \in \Sigma^{(0)}_2$ and $\lvert \xi \rvert \leq \lvert z \rvert^{3/5}$, there exists $\epsilon' > 0$ such that for large enough $z$
\begin{equation} \label{eq:ineq_hat_F}
  \Re \hat{F}(\xi) + f(\xi) < \Re F(\xi_+) - \epsilon' \lvert z \rvert^{3/2}.
\end{equation}
On the other hand, if $\xi \in \Sigma^{(0)}_2$ and $\lvert \xi \rvert > \lvert z \rvert^{3/5}$ and $z \to \infty$, then $\hat{F}(\xi)$ is dominated by the cubic term, and it is clear that inequality \eqref{eq:ineq_hat_F} still holds.

By the approximation \eqref{eq:approx_on_Gamma^0_2}, using \eqref{eq:ineq_hat_F} and that $\Re \hat{F}(\xi) \to -\infty$ fast as $\xi \to \infty$ along $\tilde{\Gamma}^{(0)}_2$, we estimate that
\begin{equation}
  \hat{n}^{(0)}_1(z) = \sqrt{\frac{2}{\pi}} \frac{\gamma_1}{C\ga_2 \sqrt{r_1}} e^{ - \frac{2r^2_2}{r^2_1 + r^2_2} \tau z} \int_{\tilde{\Gamma}^{(0)}_2} e^{\hat{F}(\xi) + f(\xi)} \bigO(\xi^{-1}) d\xi = e^{-\frac{2r^2_2}{r^2_1 + r^2_2} \tau z} e^{- \theta_2(z)} o(e^{-\epsilon' \lvert z \rvert^{3/2}}),
\end{equation}
where $\epsilon' > 0$ is a constant, which can be taken to be the same as in \eqref{eq:ineq_hat_F}. By the same method, we obtain the general result
\begin{equation} \label{approx_n^0_hat}
  \hat{n}^{(0)}_k(z) = e^{-\frac{2r^2_2}{r^2_1 + r^2_2} \tau z} e^{- \theta_2(z)} o(e^{-\epsilon' \lvert z \rvert^{3/2}}), \quad k = 1, 2, 3, 4.
\end{equation}

Plugging \eqref{eq:approx_n^0_2}, \eqref{eq:approx_n^0_4}, \eqref{eq:approx_n^0_13} and \eqref{approx_n^0_hat} into \eqref{eq:formula_n^0}, we derive that
\begin{equation}
  n^{(0)}(z) = \frac{1}{\sqrt{2}}
  \begin{pmatrix}
    e^{-\theta_2(z) - \tau z} \bigO(z^{-3/4}) \\
    z^{-1/4} e^{-\theta_2(z) - \tau z} (1 + \bigO(z^{-1/2})) \\
    e^{-\theta_2(z) - \tau z} \bigO(z^{-1/4}) \\
    z^{1/4} e^{-\theta_2(z) - \tau z} (1 + \bigO(z^{-1/2}))
  \end{pmatrix},
\end{equation}
and prove part \ref{enu:thm:main_asy_1} of Proposition \ref{thm:main_asy}.

\subsection{Sketch of the proof of parts \ref{enu:thm:main_asy_2} -- \ref{enu:thm:main_asy_6}} \label{subsec:other_parts_1.3}

\subsubsection{Proof of parts \ref{enu:thm:main_asy_2} and \ref{enu:thm:main_asy_3}}

The proof of parts \ref{enu:thm:main_asy_2} and \ref{enu:thm:main_asy_3} is parallel to that of part \ref{enu:thm:main_asy_1}. We also take the change of variables \eqref{eq:zeta_xi_change} and compute the critical point $\xi_{\pm}$ as in \eqref{eq:xi_pm}. But now $\xi_+ \in \Delta_2$ and $\xi_- \in \Delta_5$ in the setting of part \ref{enu:thm:main_asy_2}, and $\xi_+ \in \Delta_0$ and $\xi_- \in \Delta_3$ in the setting of part \ref{enu:thm:main_asy_3}. Also we use the method from planar dynamic systems to construct $L_0$, $L_{\pm}$, and the flow curves, and then $\tilde{\Gamma}^{(k)}_{\scaled}$, and finally $\tilde{\Gamma}^{(k)}$ and $\Gamma^{(k)}$ ($k = 0, 2, 4$). The results are shown in Figures \ref{fig:contours_part_2} and \ref{fig:contours_part_3}. An obvious $2\pi/3$ rotational symmetry can be ovserved in Figures \ref{fig:contours_part_1}, \ref{fig:contours_part_2} and \ref{fig:contours_part_3}, and it is a direct consequence of the symmetry among the settings in the three parts. At last, the saddle point analysis is applied, and the critical point $\zeta_+$ yields the result $e^{-\theta_2(\zeta) - \tau z} \bigO(z^{-1/4})$, and the critical point $\zeta_-$ yields the result $e^{\theta_2(\zeta) - \tau z} \bigO(z^{-1/4})$. The explicit leading terms of the $\bigO(z^{-1/4})$ factors are computed in the way of Section \ref{subsubsec:saddle_point_anal}.

\subsubsection{Proof of parts \ref{enu:thm:main_asy_4}, \ref{enu:thm:main_asy_5} and \ref{enu:thm:main_asy_6}}

Similar to the proof to parts \ref{enu:thm:main_asy_1}, \ref{enu:thm:main_asy_2} and \ref{enu:thm:main_asy_3}, the essential part of the asymptotic analysis in the proof of parts \ref{enu:thm:main_asy_4}, \ref{enu:thm:main_asy_5} and \ref{enu:thm:main_asy_6} is the integrals on $\Gamma^{(k)}_1 \cup \Gamma^{(k)}$ ($k = 1, 3, 5$), where the contours $\Gamma^{(k)}_j$ are deformed, in a similar way to the deformation of contours shown in Section \ref{subsec:part_1_of_1.3} for part \ref{enu:thm:main_asy_1}. The integrands on $\Gamma^{(k)}_1 \cup \Gamma^{(k)}$, although various in explicit formulas, all have the asymptotic behavior
\begin{equation}
  e^{-\frac{4}{3} i\zeta^3 - i \sigma \zeta + \frac{2iz\zeta}{C}} G(\zeta) \times (\text{factor growing at most linearly at $\infty$}),
\end{equation}
which is comparable to \eqref{eq:asy_integrands_parts_123} in Remark \ref{rem:meaning_of_F}.

Thus we take the change of variables, comparable to \eqref{eq:zeta_xi_change}
\begin{equation} \label{eq:relation_zeta_xi_parts456}
  \zeta = \xi - \frac{ib}{-3a + 4} = \xi - \frac{i\tau}{C^2 r^2_2},
\end{equation}
define
\begin{equation}
  F(\xi) = -i\tilde{a} \xi^3 + i\tilde{c} \xi, \quad \text{where} \quad \tilde{a} = \frac{4}{3} - a = \frac{8r^2_2}{3(r^2_1 + r^2_2)}, \quad \tilde{c} = -\frac{b^2}{-3a + 4} + c + \frac{2z}{C} - \sigma = \frac{2z - 4s_1/r_1}{C},
\end{equation}
and have
\begin{equation}
  \log \left( e^{-\frac{4}{3} i\zeta^3 - i \sigma \zeta + \frac{2iz\zeta}{C}} G(\zeta) \right) = F(\xi) +\log \gamma_1 - \frac{2r^2_1}{r^2_1 + r^2_2} \tau z.
\end{equation}
\begin{rem}
  Here and below notations like $\xi$, $F$, $\tilde{a}$, and $\tilde{c}$, are different from their counterparts in Section \ref{asymptotics} but serve the same purpose in the proof. We use the same notations to emphasize the identical use, while we trust that they do not lead to confusion. 
\end{rem}
Then we find the critical points of $F(\xi)$, and denote them 
\begin{equation}
  \xi_{\pm} = \pm \sqrt{\frac{\tilde{c}}{\tilde{a}}}, 
\end{equation}
and then let
\begin{equation}
  \zeta_{\pm} = \xi_{\pm} - \frac{ib}{-3a + 4}.
\end{equation}
We deform $\Gamma^{(k)}$ ($k = 1, 3, 5$) such that $\Gamma^{(k)}_1 \cup \Gamma^{(k)}_3$ are  through either $\zeta_+$ or $\zeta_-$, satisfy
  \begin{equation} 
    \Re \log \left( e^{-\frac{4}{3} i\zeta^3 - i \sigma \zeta + \frac{2iz\zeta}{C}} G(\zeta) \right) \text{ attains its maximum on $\Gamma^{(k)}_1 \cup \Gamma^{(k)}_3$ at $\zeta_{\pm}$, \quad} k = 1, 3, 5,
  \end{equation}
and the deformed contours satisfy conditions analogous to \eqref{eq:condition_Gamma_0}--\eqref{eq:Gamma_away_from_0}. Since the construction of the contours is different from the constructions in parts \ref{enu:thm:main_asy_1}, \ref{enu:thm:main_asy_2}, and \ref{enu:thm:main_asy_3} only in computational detail, we omit it, and only show Figures \ref{fig:contours_part_4}, \ref{fig:contours_part_5} and \ref{fig:contours_part_6} to indicate the result of the construction.
  \begin{figure}[htb]
    \begin{minipage}[t]{0.3\linewidth}
      \centering
      \includegraphics{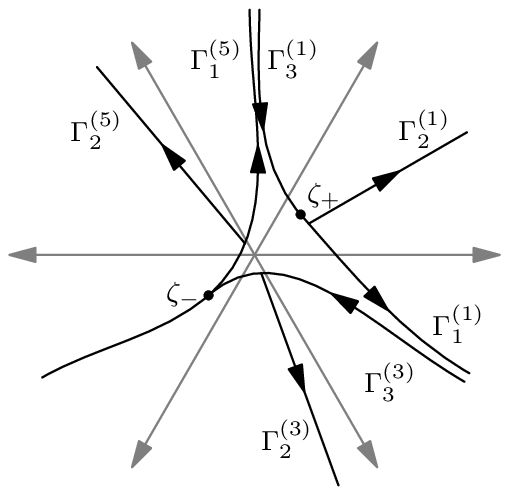}
      \caption{Schematic graphs of $\Gamma^{(1)}$, $\Gamma^{(3)}$ and $\Gamma^{(5)}$, in the proof of part \ref{enu:thm:main_asy_4} of Proposition \ref{thm:main_asy}. $\Gamma^{(3)}_1$ and $\Gamma^{(5)}_3$ are not labelled, because their major parts overlap.}
      \label{fig:contours_part_4}
    \end{minipage}
    \hspace{\stretch{1}}
    \begin{minipage}[t]{0.3\linewidth}
      \centering
      \includegraphics{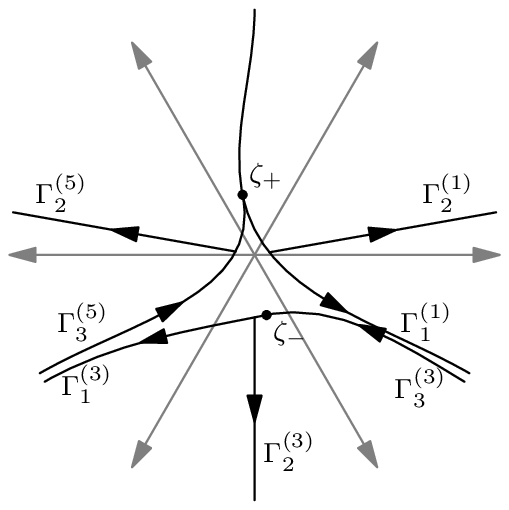}
      \caption{Schematic graphs of $\Gamma^{(1)}$, $\Gamma^{(3)}$ and $\Gamma^{(5)}$, in the proof of part \ref{enu:thm:main_asy_5} of Proposition \ref{thm:main_asy}. $\Gamma^{(5)}_1$ and $\Gamma^{(1)}_3$ are not labelled, because their major parts overlap.}
      \label{fig:contours_part_5}
    \end{minipage}
    \hspace{\stretch{1}}
    \begin{minipage}[t]{0.3\linewidth}
      \centering
      \includegraphics{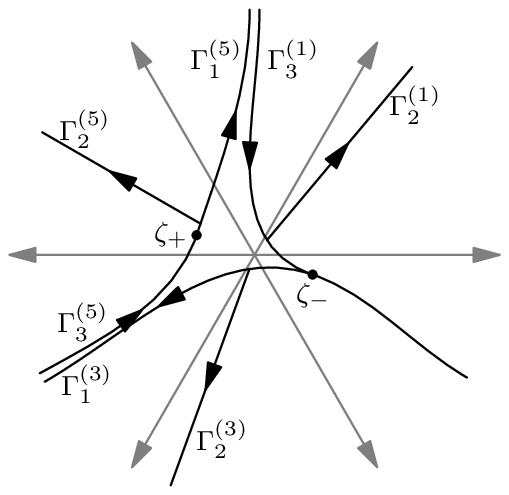}
      \caption{Schematic graphs of $\Gamma^{(1)}$, $\Gamma^{(3)}$ and $\Gamma^{(5)}$, in the proof of part \ref{enu:thm:main_asy_6} of Proposition \ref{thm:main_asy}. $\Gamma^{(1)}_1$ and $\Gamma^{(3)}_3$ are not labelled, because their major parts overlap.}
      \label{fig:contours_part_6}
    \end{minipage}
  \end{figure}

  At last we apply the saddle point analysis, and find that the critical point $\zeta_+$ yields the result $e^{\theta_1(\zeta) - \tau z} \bigO(z^{-1/4})$, and the critical point $\zeta_-$ yields the result $e^{-\theta_1(\zeta) - \tau z} \bigO(z^{-1/4})$, and prove the results. The detailed computation is omitted.

\section{Proof of Theorem \ref{fund_solutions}} \label{sec:proof_main}

The proof of Theorem \ref{fund_solutions} follows from combining the results of Propositions \ref{prop:Duits-Geudens}, \ref{diffeq}, and \ref{thm:main_asy}. We will write the detailed proof of the formula for $M^{(0)}$. The proofs for $M^{(1)}, M^{(2)},\dots,M^{(5)}$ are nearly identical, and we leave them to the reader. Throughout the proof, we refer the reader to Figure \ref{fig:A+Dominance}, which divides the complex plane into 12 sectors, each of size $\pi/6$. Within each of these sectors, the asymptotic dominance scheme of the columns of the matrix $\acal^+(z)$ is indicated. For example, in the sector $0<\arg z < \pi/6$ the sequence $4, 1, 3, 2$ means that as $z\to\infty$, 
 \begin{equation}
 v^+_1(z) = o\left( v^+_4(z)\right), \quad v^+_3(z) = o\left( v^+_1(z)\right), \quad v^+_2(z) = o\left( v^+_3(z)\right),
 \end{equation}
 where we recall that $v^+_j$ are columns of $\acal^+$.
It is easy to check this dominance scheme in each of the sectors from the definitions \eqref{eq:defn_v_1-v_4} and the relations \eqref{eq:v_jumps}.
The rays which separate the different dominance schemes are called the {\it Stokes rays}.

  \begin{figure}[ht]
    \begin{minipage}[t]{0.45\linewidth}
      \centering
      \includegraphics{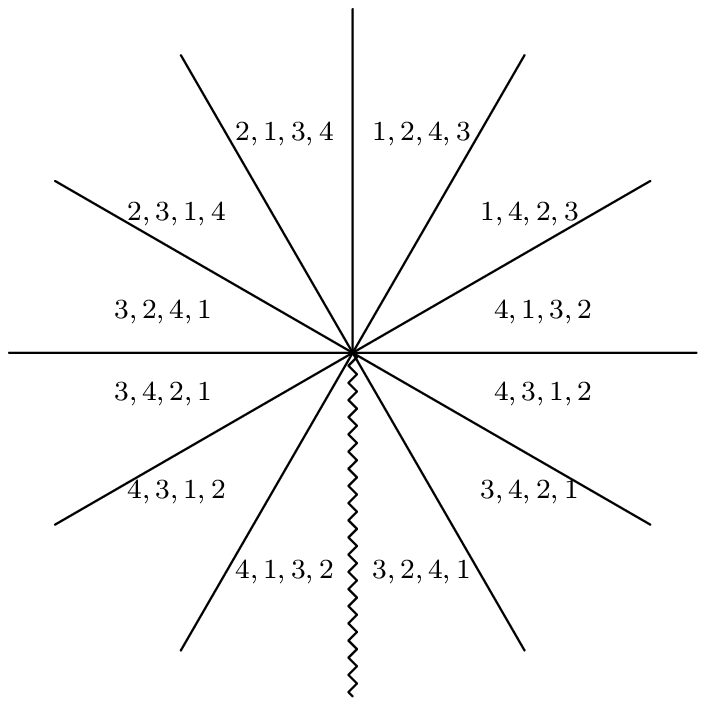}
      \caption{The dominance scheme for $\acal^+(z)$. The complex plane is separated into sectors of angle $\pi/6$. In each sector the relative dominance as $z\to\infty$ of the columns of $\acal^+(z)$ is indicated, e.g. the sequence 4, 1, 3, 2 in the sector $0<\arg z < \pi/6$ means that $\lvert v^+_4(z) \rvert \gg \lvert v^+_1(z)\rvert \gg \lvert v^+_3(z)\rvert \gg \lvert v^+_2(z)\rvert$ as $z\to \infty$. The zigzag line indicates the branch cut for $\acal^+(z)$. cf.~\cite[Figure 13]{Duits-Geudens13}.}
      \label{fig:A+Dominance}
    \end{minipage}
    \hspace{\stretch{1}}
    \begin{minipage}[t]{0.45\linewidth}
      \centering
      \includegraphics{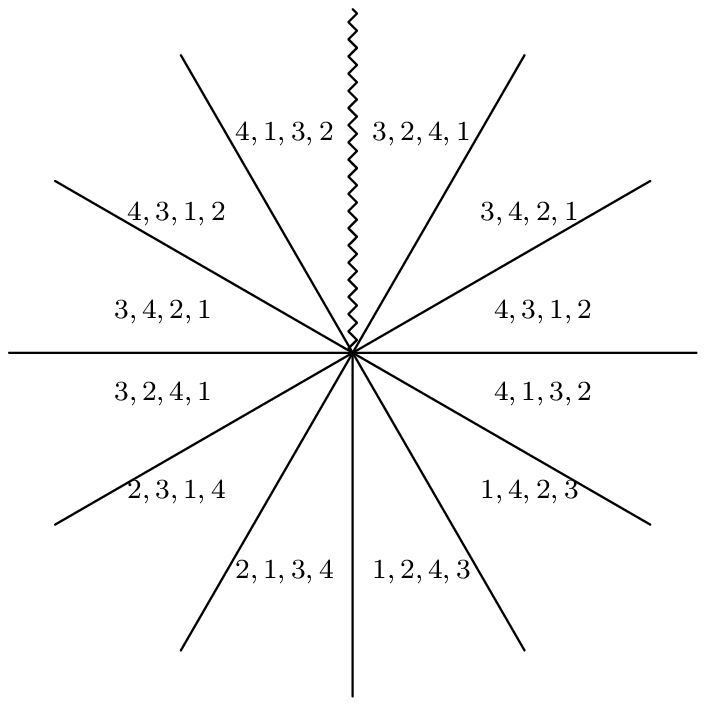}
      \caption{The dominance scheme for $\acal^-(z)$. The complex plane is separated into sectors of angle $\pi/6$. In each sector the relative dominance as $z\to\infty$ of the columns of $\acal^-(z)$ is indicated, e.g. the sequence 4, 3, 1, 2 in the sector $0<\arg z < \pi/6$ means that $\lvert v^-_4(z) \rvert \gg \lvert v^-_3(z)\rvert \gg \lvert v^-_1(z)\rvert \gg \lvert v^-_2(z)\rvert$ as $z\to \infty$. The zigzag line indicates the branch cut for $\acal^-(z)$.}
      \label{fig:A-Dominance}
    \end{minipage}
  \end{figure}

Denote the columns of $M^{(0)}$ by $m^{(0)}_1, m^{(0)}_2, m^{(0)}_3, m^{(0)}_4$, so that
 \begin{equation}
 M^{(0)}(z)= \left( m^{(0)}_1(z), m^{(0)}_2(z), m^{(0)}_3(z),  m^{(0)}_4(z) \right).
 \end{equation}
 According to Proposition \ref{prop:Duits-Geudens}, $ M^{(0)}(z)=(I+\bigO(z^{-1}))\acal^+(z)$ as $z\to\infty$ throughout the sector $\Om_0$, so the dominance scheme in Figure \ref{fig:A+Dominance} applies also to the columns of $M^{(0)}(z)$ in this sector. Notice that $\Om_0=\{ z\in \C : -\pi/12 < \arg z < 7\pi/12\}$ overlaps with 5 of the sectors shown in Figure \ref{fig:A+Dominance}: $-\pi/6<\arg z < 0$;  $0<\arg z <\pi/6$;  $\pi/6<\arg z < \pi/3$; $\pi/3< \arg z < \pi/2;$ and $\pi/2 < \arg z < 4\pi/3.$
 
We also refer to Figure \ref{fig:Prop1_3_summary}, which summarizes the results of Proposition \ref{thm:main_asy}. In that figure the asymptotic behavior as $z\to\infty$ of the solutions $n^{(j)}(z)$ is matched to that of the functions $v^{\pm}_1, \dots, v^{\pm}_4$. 
  
In the sector $0<\arg z < \pi/6$, we have that $n^{(0)}(z) \sim v^+_2(z)$, and $m^{(0)}_2(z) \sim v^+_2(z)$. According to the dominance scheme in this sector, this indicates that both $n^{(0)}(z)$ and $m^{(0)}_2(z)$ are {\it recessive solutions} to \eqref{int:7_a}  in this sector, i.e. they are not dominant over any other solutions to \eqref{int:7_a} as $z\to \infty$ in this sector except the trivial one. Recessive solutions are unique up to a constant factor. Since both $n^{(0)}(z)$ and $m^{(0)}_2(z)$ match the leading order behavior of $v^+_2(z)$ in this sector, we must have
 \begin{equation}\label{eq:m02}
 m^{(0)}_2(z)=n^{(0)}(z).
 \end{equation}
 
 By considering the recessive solutions in the sectors $\pi/3< \arg z < \pi/2$ and $\pi/2 < \arg z < 4\pi/3,$ we similarly obtain
  \begin{equation}\label{eq:m0304}
 m^{(0)}_3(z)=n^{(1)}(z), \qquad  m^{(0)}_4(z)=-n^{(2)}(z).
 \end{equation}
 It remains only to find $m^{(0)}_1(z)$, which is the only column of $M^{(0)}(z)$ which is not recessive in one of the sectors which overlap $\Om_0$. We look instead for a sector in which it is the least dominant possible amongst those overlapping $\Om_0$. Notice in the sector $-\pi/12 < \arg z < 0$, $m^{(0)}_1(z)$ dominates $m^{(0)}_2(z)$, but is dominated by the other columns of $M^{(0)}(z)$. According to the Figure \ref{fig:Prop1_3_summary}, in this sector we have $n^{(5)}(z) \sim v^+_1(z)$. 
 
 Since $n^{(5)}(z)$ solves \eqref{int:7_a} it is a linear combination of the rows of $M^{(0)}(z)$,
   \begin{equation}
   n^{(5)}(z)= c_1 m^{(0)}_1(z)+c_2 m^{(0)}_2(z)+c_3 m^{(0)}_3(z)+c_4 m^{(0)}_4(z),
 \end{equation}
 for some constants $c_1, c_2, c_3, c_4$.
 In the sector $-\pi/12 < \arg z < 0$, $n^{(5)}(z) \sim v^+_1(z)$ and $m^{(0)}_3(z) \sim v^+_3(z)$, which dominates $v^+_1(z)$ there, so $c_3 = 0$. Similarly, in the sector $0 < \arg z < \pi/6$, $n^{(5)}(z) \sim v^+_1(z)$, and $m^{(0)}_4(z) \sim v^+_4(z)$, which dominates $v^+_1(z)$ there, so $c_4 = 0$. Furthermore, in the sector $-\pi/12 < \arg z < 0$, we have $c_1 m^{(0)}_1(z) + c_2 m^{(0)}_2(z) \sim c_1 v^+_1(z) + c_2 v^+_2(z) \sim c_1 v^+_1(z)$, so $c_1 = 1$ by the comparison with the asymptotics of $n^{(5)}(z)$. Using \eqref{eq:m02} as well, we obtain
 \begin{equation}\label{eq:n5}
 n^{(5)}(z)= m^{(0)}_1(z)+c_2 n^{(0)}(z).
 \end{equation}
To find the value of $c_2$, we can use the linear relation \eqref{nj_relations_a},
\begin{equation}\label{nj_relations_a2}
 n^{(5)}(z) = -t_3 n^{(0)}(z)-(1+t_2t_3) n^{(1)}(z) + t_2n^{(2)}(z) -n^{(3)}(z),
\end{equation}
and consider the asymptotics of $n^{(5)}(z)$ in the sector $\pi/2 < \arg z < 7\pi/12.$ The leading order behavior of each of the functions on the right-hand side of \eqref{nj_relations_a2} is given in Proposition \ref{thm:main_asy} (see also Figure \ref{fig:Prop1_3_summary}). Inserting these asymptotics into 
\eqref{nj_relations_a2} gives, as $z\to \infty$ with $\pi/2 < \arg z < 7\pi/12,$
\begin{equation}\label{eq:n5asy1}
 n^{(5)}(z) \sim -t_3 v^+_2(z) - (1 + t_2 t_3) v^+_3(z) - t_2 v^+_4(z) + v^+_1(z).
\end{equation}
According to Figure \ref{fig:A+Dominance}, $v^+_2(z)$ is dominant as $z\to \infty$ in this sector, so \eqref{eq:n5asy1} becomes
\begin{equation}\label{eq:n5asy2}
 n^{(5)}(z) \sim -t_3 v^+_2(z),
\end{equation}
or equivalently, in the sector $\pi/2 < \arg z < 7\pi/12$,
\begin{equation}\label{eq:n5asy3}
 n^{(5)}(z) \sim -t_3 n^{(0)}(z).
\end{equation}
Comparing \eqref{eq:n5asy2} and \eqref{eq:n5}, and noting that $n^{(0)}(z) \sim v^+_2(z)$ dominates $m^{(0)}_1(z) \sim v^+_1(z)$ in the sector $\pi/2 < \arg z < 7\pi/12$, we find that $c_2=-t_3$. Thus \eqref{eq:n5} gives the formula for $m^{(0)}_1(z)$:
 \begin{equation}\label{eq:m01}
 m^{(0)}_1(z)= n^{(5)}(z)+t_3 n^{(0)}(z).
 \end{equation}
Combining \eqref{eq:m02}, \eqref{eq:m0304}, and \eqref{eq:m01} gives the formula for $M^{(0)}(z)$ in Theorem \ref{fund_solutions}.

The formulas for the rest of the solutions $M^{(1)}(z), \dots, M^{(5)}(z)$ can be obtained in a similar manner. Always three out of the four columns of $M^{(j)}$ can be identified as solutions to \eqref{int:7_a} which are recessive in some part of $\Om_j$. These recessive solutions can be identified with one of the functions $n^{(k)}(z)$ using Proposition \ref{thm:main_asy}, or equivalently referencing Figure \ref{fig:Prop1_3_summary}. There is one column which is never recessive in $\Om_j$, but it can be determined using the linear relations \eqref{nj_relations} in a manner similar to how $m^{(0)}$ was determined above. In Figure \ref{fig:A-Dominance} we include the dominance scheme for the columns of $\acal^-$, which should be consulted when considering $M^{(3)}(z)$, $M^{(4)}(z),$ and $M^{(5)}(z)$.

\section{Proof of contour integral formulas for kernels} \label{sec:proof_of_contour_int_form}

In this section, we assume $q(\sigma)$ is the Hastings--McLeod solution to the PII equation \eqref{int:1}. Then the solutions $\Psi^{(0)}(\zeta; \sigma), \dotsc, \Psi^{(0)}(\zeta; \sigma)$ to the Lax pair \eqref{int:2} that are defined in Section \ref{subsec:Flaschka-Newell_Lax} are also assumed to be associated with the Hastings--McLeod solution $q(\sigma)$. These solutions to \eqref{int:2} are related by the jump conditions \eqref{eq:jump_condition_2x2} which are in turn determined by the parameters $(t_1, t_2, t_3)$. In this section we assume $(t_1, t_2, t_3) = (1, 0, -1)$, associated with the Hastings--McLeod solution. Recall that for $j=0,\dots, 5$, $n^{(j)}(z) = n^{(j)}(z;r_1, r_2, s_1, s_2, \tau)$ defined in \eqref{eq:def_of_nj1} are vector-valued functions with the parameters $r_1, r_2, s_1, s_2, \tau$. The vectors $n^{(j)}$ also depend on a solution to the PII equation \eqref{int:1} by definition, and we assume it to be the Hastings--McLeod solution $q(\sg)$ in this section.

By the symmetry of equations \eqref{int:2}, and the identity $\Psi^{(0)}(\zeta; \sigma) = \Psi^{(3)}(\zeta; \sigma)$ that holds because $t_2 = 0$ for the Hastings--McLeod solution (see Figure \ref{fig:2x2Phi}), we have that
\begin{equation}
  \Psi^{(0)}(\zeta; \sigma) =
  \begin{pmatrix}
    0 & 1 \\
    1 & 0
  \end{pmatrix}
  \Psi^{(0)}(\zeta; \sigma)
  \begin{pmatrix}
    0 & 1 \\
    1 & 0
  \end{pmatrix}.
\end{equation}
It implies that, with functions $f(\zeta; \sigma), g(\zeta; \sigma), \Phi_1(\zeta; \sigma), \Phi_2(\zeta; \sigma)$ defined in \eqref{tac:8} and \eqref{eq:formula_Phi_1_2},
\begin{align}
  \Phi_1(\zeta; \sigma) = {}& \Phi_2(-\zeta; \sigma) & & \text{for all $\zeta \in \compC$}, \label{eq:u_to_-u_Phi} \\
  f(\zeta; \sigma) = {}& -g(-\zeta; \sigma) & & \text{for all $\zeta \in \compC \setminus \realR$}.
\end{align}
At last, we note that the differential equation \eqref{int:2b} implies that
\begin{equation} \label{tac:14}
  \begin{aligned}
  \frac{\partial f(\zeta; \sigma)}{\partial \sigma} = {}& -i\zeta f(\zeta; \sigma) + q(\sigma) g(\zeta; \sigma), & \frac{\partial g(\zeta; \sigma)}{\partial \sigma} = {}& q(\sigma) f(\zeta; \sigma) + i\zeta g(\zeta; \sigma), \\
  \frac{\partial \Phi_1(\zeta; \sigma)}{\partial \sigma} = {}& -i\zeta \Phi_1(\zeta; \sigma) + q(\sigma) \Phi_2(\zeta; \sigma), & \frac{\partial \Phi_2(\zeta; \sigma)}{\partial \sigma} = {}& q(\sigma) \Phi_1(\zeta; \sigma) + i\zeta \Phi_2(\zeta; \sigma).
  \end{aligned}
\end{equation}

\subsection{Proof of Theorem \ref{critical}} \label{two_matrix_kernel_proof}

The Duits--Guedens critical kernel for the two-matrix model was derived in \cite{Duits-Geudens13}, and our proof of Theorem \ref{critical} is based on the presentation in \cite{Kuijlaars14}. The critical kernel is described in terms of the tacnode RHP with parameters \cite[Formula (2.41)]{Kuijlaars14}
\begin{equation}\label{crit:1}
r_1=r_2=1, \quad s_1=s_2=s, \quad \tau \in \R.
\end{equation}
In \cite[Formulas (4.5) and (4.9)]{Kuijlaars14}, two vector-valued functions $\widehat{m}(z)$ and $\widetilde{m}(z)$, depending on parameters $s$ and $\tau$, are introduced as the linear combinations of the columns of the solution to the tacnode RHP. By the relation \eqref{eq:Kuijlaars_compare} between the tacnode RHP and RHP \ref{RHP:4x4}, we have in our notations
\begin{equation} \label{crit:3}
  \begin{split}
    \widetilde{m}(z; s, \tau) = {}& n^{(1)}(z; 1, 1, s, s, \tau) + n^{(2)}(z; 1, 1, s, s, \tau), \\
    \widehat{m}(z; s, \tau) = {}& n^{(0)}(z; 1, 1, s, s, \tau) -n^{(3)}(z; 1, 1, s, s, \tau).
  \end{split}
\end{equation}

The critical kernel $K^{\crit}_2(x, y; s, \tau)$ has the expression \cite[Theorem 2.9 and Formula (4.13)]{Kuijlaars14}
\begin{multline} \label{eq:first_formula_of_K^cr_2}
  K^{\crit}_2(x, y; s, \tau) = \frac{1}{2\pi i(x - y)} \left[ \widetilde{m}_1(ix; s, -\tau) \widehat{m}_4(iy; s, \tau) + \widetilde{m}_2(ix; s, -\tau) \widehat{m}_3(iy; s, \tau) \right. \\
  - \left. \widetilde{m}_3(ix; s, -\tau) \widehat{m}_2(iy; s, \tau) - \widetilde{m}_4(ix; s, -\tau) \widehat{m}_1(iy; s, \tau) \right],
\end{multline}
and for the proof of Theorem \ref{critical} we also need \cite[equation (4.13)]{Kuijlaars14}
\begin{equation}\label{crit:3a}
  \frac{\partial}{\partial s} K^{\crit}_2(x,y; s,\tau) = \frac{-1}{\pi i} \left( \widetilde{m}_1(ix; s, -\tau) \widehat{m}_1(iy; s, \tau) + \widetilde{m}_2(ix; s, -\tau)\widehat{m}_2(iy; s, \tau) \right) ds.
\end{equation}

Now we use the integral formulas of $n^{(3)}$ and $n^{(0)}$ to express the functions $\widehat{m}_1$ and $\widehat{m}_2$ in terms of the entries of $\Psi^{(0)}(\zeta; \sigma)$ which are defined in Section \ref{subsec:Flaschka-Newell_Lax}. First we consider $n^{(0)}$. By the definition given by \eqref{Q_def}--\eqref{eq:form_f_and_g}, we have that each component of $n^{(0)}(z)$ is expressed by a sum of integrals on $\Gamma^{(0)}_1$, $\Gamma^{(0)}_2$, and $\Gamma^{(0)}_3$. Since $n^{(0)}(z)$ is associated to the Hastings--McLeod solution, and then $(t_1, t_2, t_3) = (1, 0, -1)$, by \eqref{eq:def_of_nj2} for any component of $n^{(0)}(z)$, the integrand on $\Gamma^{(0)}_2$ vanishes, and the integrands on $\Gamma^{(0)}_1$ and $\Gamma^{(0)}_3$ are identical. Similarly, we have that each component of $n^{(3)}(z)$ is expressed by a sum of integrals on $\Gamma^{(3)}_1$, $\Gamma^{(3)}_2$, and $\Gamma^{(3)}_3$, such that the integrand on $\Gamma^{(3)}_2$ vanishes and the integrands on $\Gamma^{(3)}_1$ and $\Gamma^{(3)}_3$ are identical. Furthermore, with $\Sg_{\tac}$ as in Figure \ref{fig:Sigma_T}, we can deform $\Gamma^{(0)}_1 \cup \Gamma^{(0)}_3$ into the upper half of $\Sigma_{\tac}$ and $\Gamma^{(3)}_1 \cup \Gamma^{(3)}_3$ into the lower half of $\Sigma_{\tac}$, both with reversed orientation. Thus we can write, after expressing the integrands in \eqref{Q_def} by \eqref{eq:def_of_nj2}, \eqref{eq:form_f_and_g}, and \eqref{eq:jump_condition_2x2},
\begin{align}
  \widehat{m}_1(z; s, \tau) = {}& \sqrt{\frac{2}{\pi}} \frac{e^{-\frac{\tau^3}{3} + 2 s \tau}}{2^{1/3}} \int_{\Sg_{\tac}} e^{2^{4/3}\tau \zeta^2 + 2^{2/3}i z \zeta} f(\zeta; 2^{2/3}(2s - \tau^2)) \,d\zeta, \label{crit:7_1} \\
  \widehat{m}_2(z; s, \tau) = {}& \sqrt{\frac{2}{\pi}} \frac{e^{-\frac{\tau^3}{3} + 2 s \tau}}{2^{1/3}} \int_{\Sg_{\tac}} e^{2^{4/3}\tau \zeta^2 + 2^{2/3}i z \zeta} g(\zeta; 2^{2/3}(2s - \tau^2)) \,d\zeta, \label{crit:7_2}
\end{align}
where the functions $f(\zeta; 2^{2/3}(2s - \tau^2))$ and $g(\zeta; 2^{2/3}(2s - \tau^2))$ the contour $\Sg_{\tac}$ are defined in equation \eqref{tac:8} and in Figure \ref{fig:Sigma_T}, respectively. By the same argument, we have that $\widehat{m}_3(z; s, \tau)$ and $\widehat{m}_4(z; s, \tau)$ have similar but slightly more complicated formulas as integrals on $\Sigma_{\tac}$. Here we note that the contour $\Sigma_{\tac}$ can be replaced by $\Sigma_{\tMM}$ where the definition for $f$ and $g$ is still given by \eqref{tac:8}.

Next we use the integral formulas of $n^{(1)}$ and $n^{(2)}$ to express the functions $\widetilde{m}_1$ and $\widetilde{m}_2$. Similar to the argument for $\widehat{m}_1$ and $\widehat{m}_2$, because $n^{(1)}$ and $n^{(2)}$ are associated to the Hastings--McLeod solution, the integrands on $\Gamma^{(1)}_2$ and $\Gamma^{(1)}_3$ for the integral formula of the $j$-th component of $n^{(1)}$ are identical to the integrands on  $\Gamma^{(2)}_3$ and $\Gamma^{(2)}_1$ respectively for the integral formula of the $j$-th component of $n^{(2)}$, and the integrand on $\Gamma^{(1)}_1$ for the integral formula of the $j$-th component of $n^{(1)}$ is the negative of the integrand on $\Gamma^{(2)}_2$ for the integral formula of the $j$-th component of $n^{(2)}$. Using the contours $\Gamma^{(k)}_j$ shown in Figure \ref{fig:six_contours}, we find that the integrals on $\Gamma^{(1)}_1$ and $\Gamma^{(1)}_2$ cancel the integrals on $\Gamma^{(2)}_2$ and $\Gamma^{(2)}_3$ respectively when we compute $n^{(1)} + n^{(2)}$. Hence by \eqref{eq:def_of_nj2}, \eqref{eq:form_f_and_g}, and \eqref{eq:jump_condition_2x2}, we derive analogous to \eqref{crit:7_1} and \eqref{crit:7_2},
\begin{align}
  \widetilde{m}_1(z; s, -\tau) = {}& -\sqrt{\frac{2}{\pi}}\frac{e^{\frac{\tau^3}{3} - 2 s \tau}}{2^{1/3}} \int_{-i\infty}^{i\infty} e^{-2^{4/3}\tau \zeta^2 + 2^{2/3}i z \zeta} \Phi_1(\zeta; 2^{2/3}(2s - \tau^2)) \,d\zeta, \label{crit:6_1} \\
  \widetilde{m}_2(z; s, -\tau) = {}& -\sqrt{\frac{2}{\pi}}\frac{e^{\frac{\tau^3}{3} - 2 s \tau}}{2^{1/3}} \int_{-i\infty}^{i\infty} e^{-2^{4/3}\tau \zeta^2 + 2^{2/3}i z \zeta} \Phi_2(\zeta; 2^{2/3}(2s - \tau^2)) \,d\zeta,  \label{crit:6_2}
\end{align}
where $\Phi_1(\zeta; 2^{2/3}(2s - \tau^2))$ and $\Phi_2(\zeta; 2^{2/3}(2s - \tau^2))$ are defined in \eqref{eq:formula_Phi_1_2}. Similarly, $\widetilde{m}_3(z; s, -\tau)$ and $\widetilde{m}_4(z; s, -\tau)$ have similar but slightly more complicated formulas as integrals on the imaginary axis.

Plugging in \eqref{crit:7_1}, \eqref{crit:7_2}, \eqref{crit:6_1} and \eqref{crit:6_2} into \eqref{crit:3a}, we find
\begin{multline}\label{crit:11}
  \frac{\partial}{\partial s} K_2^{\crit}(x,y; s,\tau) = \frac{-2^{1/3}}{ i\pi^2}  \int_{-i\infty}^{i\infty} \,du\,\int_{\Sg_{\tac}} \,dv\, e^{-2^{4/3}\tau (u^2-v^2)-2^{2/3} (xu+y v)} \\
  \times  \bigg(\Phi_1(u;  2^{2/3}(2s - \tau^2)) f(v; 2^{2/3}(2s - \tau^2)) + \Phi_2(u; 2^{2/3}(2s - \tau^2)) g(v; 2^{2/3}(2s - \tau^2))\bigg).
\end{multline}
Next we make the change of variable $u\mapsto (-u)$. We note that the contour for $u$ changes direction under this transform. Hence by \eqref{eq:u_to_-u_Phi}, \eqref{crit:11} becomes
\begin{multline}\label{crit:12}
  \frac{\partial}{\partial s} K_2^{\crit}(x,y; s,\tau) = \frac{2^{1/3}}{i \pi^2} \int_{-i\infty}^{i\infty} \,du\,\int_{\Sg_{\tac}} \,dv\, e^{-2^{4/3}\tau (u^2-v^2)+2^{2/3} (xu-y v)} \\
  \times  \bigg(\Phi_2(u;  2^{2/3}(2s - \tau^2)) f(v; 2^{2/3}(2s - \tau^2)) + \Phi_1(u; 2^{2/3}(2s - \tau^2)) g(v; 2^{2/3}(2s - \tau^2))\bigg).
\end{multline}

The right-hand side of the above equation can also be expressed as a derivative with respect to $s$. Indeed, using \eqref{tac:14} we find
\begin{equation}\label{crit:13}
  \frac{\d}{\d \sigma} \left(\frac{\Phi_2(u; \sigma)f(v; \sigma)-\Phi_1(u; \sigma)g(v; \sigma)}{i(u-v)}\right)=\Phi_2(u; \sigma)f(v; \sigma)+\Phi_1(u; \sigma)g(v; \sigma).
\end{equation}
We also note that in \eqref{crit:12}, if we deform the integral contour $\Sigma_{\tac}$ into $\Sigma_{\tMM}$, with the definition of $f(v; \sg)$ and $g(v; \sg)$ given in \eqref{tac:8} when $v \in \realR$, the integral on the right-hand side does not change. Therefore in \eqref{crit:12} we can replace $\Sigma_{\tac}$ with $\Sigma_{\tMM}$. Let us denote
\begin{multline} \label{eq:witehat_K}
  \widehat{K}^{\crit}_2(x, y; s, \tau) = \frac{1}{2^{1/3}\pi} \int_{-i\infty}^{i\infty} \,du\,\int_{\Sg_{\tMM}} \,dv\, e^{-2^{4/3}\tau (u^2-v^2)+2^{2/3} (xu-y v)} \\
  \times \left(\frac{\Phi_1(u;\sg)g(v; \sg)-\Phi_2(u;\sg)f(v; \sg)}{2\pi(u-v)}\right) ,
\end{multline}
where $\sigma = 2^{2/3}(2s - \tau^2)$ as in \eqref{eq:expression_sigma_K^cr_2}. Notice that without the deformation of $\Sigma_{\tac}$ into $\Sigma_{\tMM}$, the integral in \eqref{eq:witehat_K} would be singular when the contours cross, and would have to be be regarded as a principal value integral. The deformation removes the singularity. Then \eqref{crit:12} and \eqref{crit:13} imply that
\begin{equation} \label{eq:derivatives_equal_2MM}
  \frac{d}{ds} K^{\crit}_2(x, y; s, \tau) = \frac{d}{ds} \widehat{K}^{\crit}_2(x, y; s, \tau).
\end{equation}
If we can show
\begin{align}
  \lim_{\sigma \to -\infty} K^{\crit}_2(x, y; s, \tau) = {}& 0, \label{eq:vanishing_2MM_old} \\
  \lim_{\sigma \to -\infty} \widehat{K}^{\crit}_2(x, y; s, \tau) = {}& 0,  \label{eq:vanishing_2MM_new}
\end{align}
then \eqref{critkernel} follows from \eqref{eq:derivatives_equal_2MM}. Below we prove \eqref{eq:vanishing_2MM_old} and \eqref{eq:vanishing_2MM_new}.

We need the asymptotic behavior of $\Psi^{(0)}(\zeta; \sigma), \dotsc, \Psi^{(5)}(\zeta; \sigma)$, the fundamental solutions to \eqref{int:2a}, as $\sigma \to -\infty$, when $q(\sigma)$ is the Hastings-McLeod solution to \eqref{int:1}, or equivalently, when $(t_1, t_2, t_3) = (1, 0, -1)$ in \eqref{eq:triple_number}. The result was obtained in \cite[Section 6]{Deift-Zhou95}, and we summarize it below.

Suppose $\sigma < 0$. We define the $2 \times 2$ matrix-valued function $m^{(23)}(z)$, where we follow the notational convention in \cite{Deift-Zhou95} and suppress the dependence on $\sigma$, by
\begin{equation}
  m^{(23)}(z) e^{-i \frac{\sqrt{2}}{3} (-\sigma)^{3/2} (z^2 - 1)^{3/2} \sigma_3} =
  \begin{cases}
    \Psi^{(0)} \left( \sqrt{\frac{-\sigma}{2}} z; \sigma \right) = \Psi^{(3)} \left( \sqrt{\frac{-\sigma}{2}} z; \sigma \right) & \text{if $z \in C_1 \cup C_3$}, \\
    \Psi^{(1)} \left( \sqrt{\frac{-\sigma}{2}} z; \sigma \right) = \Psi^{(2)} \left( \sqrt{\frac{-\sigma}{2}} z; \sigma \right) & \text{if $z \in C_2$}, \\
    \Psi^{(4)} \left( \sqrt{\frac{-\sigma}{2}} z; \sigma \right) = \Psi^{(5)} \left( \sqrt{\frac{-\sigma}{2}} z; \sigma \right) & \text{if $z \in C_4$},
  \end{cases}
\end{equation}
where $C_1, C_2, C_3, C_4$ are regions of the complex plane as shown in Figure \ref{fig:last_RHP}, and the power function $(z^2-1)^{3/2}$ has a cut on $[-1,1]$ taking the branch such that $(z^2-1)^{3/2}\sim z^3$ as $z\to\infty$. Then $m^{(23)}(z) = I + \bigO(z^{-1})$ as $z \to \infty$, and it satisfies the jump condition as shown in Figure \ref{fig:last_RHP}, cf.~\cite[Figure 6.18]{Deift-Zhou95}.
\begin{figure}[htb]
  \centering
  \includegraphics{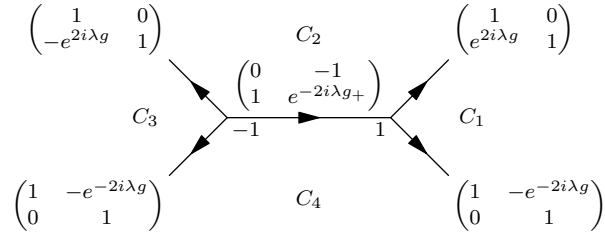}
  \caption{The jump condition for $m^{(23)}(z)$, where $\lambda = 2^{1/2} 3^{-1} (-\sigma)^{3/2}$ and $g = g(z) = (z^2 - 1)^{3/2}$.}
  \label{fig:last_RHP}
\end{figure}

Then as observed in \cite{Deift-Zhou95}, the RHP for $m^{(23)}(z)$ converges formally to a RHP on the interval $[-1, 1]$ with jump matrix $\left(
\begin{smallmatrix}
  0 & -1 \\
  1 & 0
\end{smallmatrix} \right)$, and this RHP has a simple solution:
\begin{equation}
  P^{(\infty)}(z) = \frac{1}{2}
  \begin{pmatrix}
    a(z) + a(z)^{-1} & i(a(z) - a(z)^{-1}) \\
    i(a(z)^{-1} - a(z) & a(z) + a(z)^{-1}
  \end{pmatrix},
  \quad \text{where} \quad a(z) = \frac{(z - 1)^{1/4}}{(z + 1)^{1/4}}.
\end{equation}
The function $a(z)$ has a cut on $[-1,1]$ taking the branch of the fractional power so that $a(z)\sim 1$ as $z\to\infty$.
By constructing local parametrices at $1$ and $-1$, the convergence can be made rigorous, as discussed in \cite[Section 3]{Deift-Zhou95}. By standard argument, we derive the following result:
\begin{lem} \label{lem:last}
  Fix $\epsilon > 0$. As $\sigma \to -\infty$, $\lVert m^{(23)}(z) - P^{(\infty)}(z) \rVert \to 0$ uniformly for all $\{ z \in \compC \mid \lvert z - 1 \rvert > \epsilon \text{ and } \lvert z + 1 \rvert > \epsilon \}$. Here if $z$ is on the jump curve, then the uniform convergence holds for both $m^{(23)}_+(z)$ and $m^{(23)}_-(z)$.
\end{lem}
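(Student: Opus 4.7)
The plan is to follow the standard nonlinear steepest descent argument of Deift--Zhou, using the outer parametrix $P^{(\infty)}(z)$ together with local parametrices near the endpoints $\pm 1$, and then invoking a small-norm Riemann--Hilbert argument to get uniform convergence on sets bounded away from $\pm 1$.

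First I would analyze the jump matrices of $m^{(23)}(z)$ pictured in Figure \ref{fig:last_RHP}. Away from the interval $[-1,1]$, the off-diagonal entries carry exponential factors $e^{\pm 2i\lambda g(z)}$ with $\lambda = 2^{1/2} 3^{-1} (-\sigma)^{3/2} \to +\infty$ and $g(z) = (z^2-1)^{3/2}$ taking the principal branch such that $g(z) \sim z^3$ at infinity. The key computational step is to verify that on each lens-shaped contour in the upper and lower half planes, the sign of $\Re(ig(z))$ forces the oscillatory entries to be exponentially small in $\lambda$; this is standard since $\Re(ig(z))$ has the correct signature on the deformed contour coming from $\pm 1$. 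Consequently the jump of $m^{(23)}(z)$ is $I + \bigO(e^{-c\lambda})$ uniformly on compact subsets of the contour that are bounded away from $\pm 1$, while on $[-1,1]$ the jump is exactly $\bigl(\begin{smallmatrix}0 & -1 \\ 1 & 0\end{smallmatrix}\bigr)$.

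Next, the outer parametrix $P^{(\infty)}(z)$ defined in the statement is by construction the solution to the model RHP on $[-1,1]$ with the constant jump $\bigl(\begin{smallmatrix}0 & -1 \\ 1 & 0\end{smallmatrix}\bigr)$ and asymptotics $I + \bigO(z^{-1})$ at infinity, as one checks directly from the formula for $a(z)$. At the endpoints $\pm 1$, $P^{(\infty)}(z)$ has quarter-root singularities, so it cannot approximate $m^{(23)}$ uniformly on small disks around $\pm 1$. To handle this I would construct local parametrices $P^{(1)}$ and $P^{(-1)}$ on disks $D(\pm 1, \epsilon/2)$ using Airy functions, matching the jumps of $m^{(23)}$ inside these disks exactly and matching $P^{(\infty)}$ on the boundary up to an error of size $\bigO(\lambda^{-1})$; this is the standard Airy parametrix construction, identical to the one used for the soft edge of the GUE.

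Finally, I would define the error matrix $R(z) = m^{(23)}(z) \cdot P(z)^{-1}$, where $P(z) = P^{(\infty)}(z)$ outside the disks $D(\pm 1, \epsilon/2)$ and $P(z) = P^{(\pm 1)}(z)$ inside. Then $R(z)$ is analytic off a contour on which its jump is $I + \bigO(\lambda^{-1})$ on the boundary circles and $I + \bigO(e^{-c\lambda})$ on the remaining lens contours outside the disks, with $R(z) \to I$ at infinity. A standard small-norm argument then yields $R(z) = I + \bigO(\lambda^{-1})$ uniformly on any set bounded away from $\pm 1$, which translates into the asserted uniform convergence $m^{(23)}(z) \to P^{(\infty)}(z)$ on $\{\lvert z - 1\rvert > \epsilon,\; \lvert z+1 \rvert > \epsilon\}$ as $\sigma \to -\infty$, including boundary values from both sides on the jump contour.

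The main obstacle is really bookkeeping rather than a conceptual one: one must verify the signature of $\Re(ig(z))$ on the lens contours to ensure that the jumps off $[-1,1]$ are exponentially close to the identity, and one must write down the Airy parametrix carefully to confirm the $\bigO(\lambda^{-1})$ matching on the boundary of the disks. Both of these computations are entirely standard for the PII/soft-edge problem and can be quoted essentially verbatim from \cite{Deift-Zhou95}.
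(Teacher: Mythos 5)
Your proposal is correct and takes essentially the same approach as the paper: the paper itself only cites the standard Deift--Zhou steepest descent machinery (exponentially small jumps off $[-1,1]$, outer parametrix $P^{(\infty)}$, Airy local parametrices at $\pm 1$, small-norm error analysis for $R = m^{(23)} P^{-1}$), which is exactly what you spell out. The only difference is one of explicitness---the paper defers the entire argument to \cite[Section 3]{Deift-Zhou95} while you write out the structure of that argument.
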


By Lemma \ref{lem:last} and the asymptotics of $\Psi^{(k)}(\zeta; \sigma)$ implied by it, we use the explicit formulas \eqref{crit:6_1} and \eqref{crit:6_2} for $\widetilde{m}_1(z; s, -\tau)$ and $\widetilde{m}_2(z; s, -\tau)$ and derive that as $s \to -\infty$, $\widetilde{m}_1(z; s, -\tau)$ and $\widetilde{m}_2(z; s, -\tau)$ vanishes exponentially. In a similar way, we have that $\widetilde{m}_3(z; s, -\tau)$ and $\widetilde{m}_4(z; s, -\tau)$ also vanishes exponentially.

To estimate $\widehat{m}_j(z; s, \tau)$, as $s \to -\infty$, it is better to replace the contour $\Sigma_{\tac}$ by $\Sigma^{\sigma}_{\tMM}$ that depends on $\sigma = 2^{2/3}(2s - \tau^2) < 0$, which is simply the contour $\Sigma_{\tMM}$ scaled by factor $\sqrt{-\sigma}/2$:
\begin{multline}
  \Sg^{\sigma}_{\tMM} =[\sqrt{-\sigma}/2, e^{i\pi/6}\cdot \infty) \cup [\sqrt{-\sigma}/2, e^{-i\pi/6}\cdot \infty)\cup \\
  [-\sqrt{-\sigma}/2, e^{5i\pi/6}\cdot \infty) \cup [-\sqrt{-\sigma}/2, e^{-5i\pi/6}\cdot \infty)\cup[-\sqrt{-\sigma}/2, \sqrt{-\sigma}/2].
\end{multline}
Then by direct computation based on \eqref{crit:7_1} and \eqref{crit:7_2}, we can find that $\widehat{m}_1(z; s, \tau)$ and $\widehat{m}_2(z; s, \tau)$ vanish exponentially as $s \to -\infty$. Similarly, we can show that $\widehat{m}_1(z; s, \tau)$ and $\widehat{m}_2(z; s, \tau)$ also do. Hence we prove \eqref{eq:vanishing_2MM_old} by plugging in the exponential vanishing property stated above to \eqref{eq:first_formula_of_K^cr_2}.

Similarly, if we replace $\Sigma_{\tMM}$ by $\Sigma^{\sigma}_{\tMM}$ when we evaluate the double contour integral in \eqref{eq:witehat_K}, we find it vanishes exponentially as $s \to -\infty$, and \eqref{eq:vanishing_2MM_new} holds. The detail is left to the reader. Thus Theorem \ref{critical} is proved.

\subsection{Proof of Theorem \ref{tacnode}}\label{tacnode_kernel_proof}

The multi-time correlation kernel for the tacnode process was first derived in the form of Airy resolvents, and it was presented in the most general form in \cite{Ferrari-Veto12}, where the notation $\lcal^{\lambda, \Sigma}_{\tac}(\tau_1, x; \tau_2, y)$ is defined. We use this notation in Theorem \ref{tacnode}. In \cite{Delvaux13}, it was shown that the kernel can also be represented in the form of the tacnode RHP. In this section we prove Theorem \ref{tacnode} based on the formula of in \cite{Delvaux13}. 

First we recall that in \cite[RHP 2.1 in Section 2.1.1]{Delvaux13} a Riemann--Hilbert problem is defined, and it is essentially the same as the tacnode RHP in \cite{Kuijlaars14} that is discussed in Section \ref{subsec:tacnode_RHP}. In \cite[Section 2.1.1]{Delvaux13}, a matrix $\widehat{M}(z)$ is defined with parameters $r_1, r_2, s_1, s_2, \tau$, and without much difficulty we can check that
\begin{equation}
  \widehat{M}(z; r_1, r_2, s_1, s_2, \tau) = e^{\frac{r^2_1 - r^2_2}{2} \tau z} \left. M^{(1)} \left( z; r_1, r_2, s_1, s_2, \frac{r^2_1 + r^2_2}{2} \tau \right) \right\rvert_{(t_1, t_2, t_3) = (1, 0, -1)},
\end{equation}
where $M^{(1)}(z; r_1, r_2, s_1, s_2, \tau)$ is the solution to RHP \ref{RHP:4x4} in sector $\Delta_1$, associated to the Hastings--McLeod. Then the vector function $\mathbf{p}(z)$ defined in\cite[Formula (2.9)]{Delvaux13}, which is the sum of the first and second columns of $\widehat{M}(z)$, become
\begin{multline}
  \mathbf{p}(z) = \left( p_j(z; r_1, r_2, s_1, s_2, \tau) \right)^T_{j = 1, \dotsc, 4} = \\
  e^{\frac{r^2_1 - r^2_2}{2} \tau z} \left( n^{(0)} \left( z; r_1, r_2, s_1, s_2, \frac{r^2_1 + r^2_2}{2} \tau \right) - n^{(3)} \left( z; r_1, r_2, s_1, s_2, \frac{r^2_1 + r^2_2}{2} \tau \right) \right).
\end{multline}
Then the functions $\widehat{p}_j(z; \tilde{s}, \tau)$ ($j = 1, 2$), which are denoted in \cite[Formula (2.26)]{Delvaux13} and are related to the first two components of $\mathbf{p}(z)$ by \cite[Formula 4.37]{Delvaux13}, becomes
\begin{multline} \label{eq:phat_1}
  \widehat{p}_1(z; s, \tau) = \frac{\lambda^{-1/24}}{\sqrt{2\pi}} \exp \left( -\lambda \tau \left( \lambda^{-1/2} C^{-2} s + \frac{2}{3} \tau^2 \right) \right) \\
  \times p_1 \left( z; \lambda^{1/4}, 1, \frac{\lambda^{3/4}}{2} \left( \lambda^{-1/2} C^{-2} s + \tau^2 \right), \frac{1}{2} \left( \lambda^{-1/2} C^{-2} s + \tau^2 \right), \tau \right),
\end{multline}
\begin{multline} \label{eq:phat_2}
  \widehat{p}_2(z; s, \tau) = \frac{1}{\sqrt{2\pi}} \exp \left( -\tau \left( \lambda^{-1/2} C^{-2} s + \frac{2}{3} \tau^2 \right) \right) \\
  \times p_2 \left( z; \lambda^{1/4}, 1, \frac{\lambda^{3/4}}{2} \left( \lambda^{-1/2} C^{-2} s + \tau^2 \right), \frac{1}{2} \left( \lambda^{-1/2} C^{-2} s + \tau^2 \right), \tau \right),
\end{multline}
where $p_1, p_2$ are components of $\mathbf{p}$, $C$ is defined in \eqref{eq:defn_C_for_tacnode} and where $\lambda > 0$ is the parameter in the correlation kernel formula $\lcal^{\lambda, \Sigma}_{\tac}$. Then by \cite[Theorem 2.9]{Delvaux13}, (noting that our notation $\lcal^{\lambda, \Sigma}_{\tac}$ means the same as $\lcal^{\lambda, \sigma}_{\tac}$ in \cite{Delvaux13} if $\sigma$ and $\Sigma$ are related by \cite[Formula (2.15)]{Delvaux13}, or equivalently \eqref{eq:defn_C_for_tacnode}),
\begin{equation}
  \lcal_{\tac}^{\la, \Sigma}(\tau_1, x; \tau_2, y) = -1_{\tau_1 < \tau_2} \frac{1}{\sqrt{4\pi(\tau_2 - \tau_1)}} \exp \left( \frac{(y - x)^2}{4(\tau_2 - \tau_1)} \right) + \tilde{\lcal}^{\lambda, \Sigma}_{\tac}(\tau_1, x; \tau_2, y),
\end{equation}
where, with $\sigma = \lambda^{1/2} C^2 \Sigma$ as specified in \eqref{eq:defn_C_for_tacnode},
\begin{equation}
  \tilde{\lcal}^{\lambda, \Sigma}_{\tac}(\tau_1, x; \tau_2, y) = \frac{1}{C^2} \int^{\infty}_{\sigma} \left( \lambda^{1/3} \widehat{p}_1(x; s, \tau_1) \widehat{p}_1(y; s, -\tau_2) + \lambda^{-1/2} \widehat{p}_2(x; s, \tau_1) \widehat{p}_2(y; s, -\tau_2) \right) ds.
\end{equation}
The vector-valued function $\mathbf{p}(z)$ is in the same form as $\widehat{m}(z)$ defined in \eqref{crit:3} with more general parameters. Thus similar to \eqref{crit:7_1} and \eqref{crit:7_2}, we can write $p_1(z)$ and $p_2(z)$ as integrals on contour $\Sigma_{\tac}$, and then have by \eqref{eq:phat_1} and \eqref{eq:phat_2}
\begin{align}
  \widehat{p}_1(z; s, \tau) = {}& \frac{1}{\pi (1 + \sqrt{\lambda})^{1/3}} \int_{\Sigma_{\tac}} f(\zeta; s) \exp \left[ -\frac{4 i}{3} \frac{1 - \sqrt{\lambda}}{1 + \sqrt{\lambda}} \zeta^3 + \frac{4\tau}{C^2} \zeta^2 + \left( \frac{2iz}{C} + i \frac{1 - \sqrt{\lambda}}{1 + \sqrt{\lambda}} s \right) \zeta \right] d\zeta, \\
  \widehat{p}_2(z; s, \tau) = {}& \frac{1}{\pi (1 + 1/\sqrt{\lambda})^{1/3}} \int_{\Sigma_{\tac}} g(\zeta; s) \exp \left[ -\frac{4 i}{3} \frac{1 - \sqrt{\lambda}}{1 + \sqrt{\lambda}} \zeta^3 + \frac{4\tau}{C^2} \zeta^2 + \left( \frac{2iz}{C} + i \frac{1 - \sqrt{\lambda}}{1 + \sqrt{\lambda}} s \right) \zeta \right] d\zeta,
\end{align}
where $f$ and $g$ are defined in \eqref{tac:8}. 

In this section, we need a technical lemma:
\begin{lem} \label{lem:tacnode_vanishing}
  Let $\epsilon > 0$ and $N \in \realR$. Then there exists $C(\epsilon, N) > 0$ such that for all $\sigma > N$, and $k = 1, 2$
  \begin{align}
    \Psi^{(0)}_{k, 2}(\zeta; \sigma) = {}& e^{i \left( \frac{4}{3} \zeta^3 + \sigma \zeta \right)} \bigO(1), & \text{for all $\zeta \in \compC$ such that $\Im(\zeta) > \epsilon$}, \label{eq:vanishing_tacnode_1} \\
    \Psi^{(0)}_{k, 1}(\zeta; \sigma) = {}& e^{-i \left( \frac{4}{3} \zeta^3 + \sigma \zeta \right)} \bigO(1), & \text{for all $\zeta \in \compC$ such that $\Im(\zeta) < -\epsilon$}. \label{eq:vanishing_tacnode_2}
  \end{align}
\end{lem}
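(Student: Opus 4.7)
My plan has two stages. The two claims \eqref{eq:vanishing_tacnode_1} and \eqref{eq:vanishing_tacnode_2} are structurally parallel, so I focus on the first; the second is obtained by the same argument applied to $\psi^{(1)}$ in place of $\psi^{(2)}$ in the lower half plane, using the identity that two branches of \eqref{eq:asy_psi^1+t_1psi^2} collapse for the Hastings--McLeod Stokes data.

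For the first stage, I specialize the sectorial asymptotics \eqref{eq:asy_psi^2} to the Hastings--McLeod case $(t_1, t_2, t_3) = (1, 0, -1)$. Because $t_2 = 0$, two of the three branches collapse to the common identity
\begin{equation*}
\psi^{(2)}(\zeta;\sigma) = (I + O(\zeta^{-1})) \begin{pmatrix} 0 \\ e^{i(4\zeta^3/3 + \sigma\zeta)} \end{pmatrix}, \quad \arg \zeta \in \left(-\tfrac{\pi}{3} + \delta, \tfrac{4\pi}{3} - \delta\right),
\end{equation*}
a sector that strictly contains the closed upper half plane outside any fixed disk. For each fixed $\sigma$ this already gives $\Psi^{(0)}_{k,2}(\zeta;\sigma) = e^{i(4\zeta^3/3 + \sigma \zeta)}O(1)$ as $|\zeta| \to \infty$ in $\{\Im \zeta > \epsilon\}$, and continuity of $\psi^{(2)}(\cdot;\sigma)$ handles bounded $\zeta$.

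The second stage is to promote this pointwise-in-$\sigma$ bound to uniformity in $\sigma \geq N$, which is the real content of the lemma. I would split into the regimes $\sigma \in [N, M]$ and $\sigma \geq M$. In the first regime, the Hastings--McLeod solution is smooth and pole-free on $\mathbb{R}$, so the coefficient matrix $A$ in \eqref{int:3} depends continuously on $\sigma$, and the canonical sectorial solutions $\Psi^{(k)}$ can be constructed by a convergent Picard iteration whose error depends continuously on $\sigma$; combining continuity with compactness of $[N, M]$ yields a uniform bound there. In the second regime, I invoke the Hastings--McLeod asymptotic $q(\sigma) = \Ai(\sigma)(1+o(1)) = O(e^{-\frac{2}{3}\sigma^{3/2}})$ (and similar decay for $r = q'$), so the off-diagonal entries of $A$ in \eqref{int:3} are exponentially small in $\sigma$ and the Lax system is an exponentially small perturbation of its decoupled diagonal background whose exact solution is the explicit prefactor in \eqref{eq:fundamental_eq}.

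The main obstacle is executing this last step carefully in the presence of the turning points $\zeta = \pm \tfrac{i}{2}\sqrt{\sigma}$ of the diagonal background, which lie on the imaginary axis for $\sigma > 0$ and so enter the region $\{\Im \zeta > \epsilon\}$. My preferred route is the Riemann--Hilbert problem for the dressed function $Y(\zeta;\sigma) := \Psi^{(0)}(\zeta;\sigma) e^{i(4\zeta^3/3 + \sigma\zeta)\sigma_3}$ with $\sigma_3 = \diag(1,-1)$. Because the Stokes data $(1,0,-1)$ is $\sigma$-independent, the conjugated jump matrices on the six Stokes rays have universal constant entries on the rays on which $\Re[i(4\zeta^3/3 + \sigma\zeta)]$ is controlled, and off-diagonal entries of magnitude $O(e^{-c\sigma^{3/2}})$ on the remaining rays. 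A standard small-norm Deift--Zhou argument, with local Airy parametrices at the two turning points (whose own entries are bounded by the dressing factor $e^{i(4\zeta^3/3 + \sigma\zeta)}$), then produces $Y = I + O(\zeta^{-1})$ uniformly in $\sigma \geq M$, which is exactly the bound claimed.
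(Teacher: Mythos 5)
Your proposal takes a genuinely different route from the paper. The paper's proof is a few lines: it invokes the explicit Airy-resolvent representations of the Hastings--McLeod $\Psi^{(0)}$ from \cite{Baik-Liechty-Schehr12}, as recorded in \cite[Formulas (2.38)--(2.39)]{Kuijlaars14}. Writing $\theta(\zeta)=\tfrac43\zeta^3+\sigma\zeta$, one has
\begin{equation*}
\Psi^{(0)}_{1,2}(\zeta;\sigma)=-e^{i\theta(\zeta)}\int_\sigma^\infty e^{2i(x-\sigma)\zeta}Q_\sigma(x)\,dx,\qquad
\Psi^{(0)}_{2,2}(\zeta;\sigma)=e^{i\theta(\zeta)}\Bigl(1+\int_\sigma^\infty e^{2i(x-\sigma)\zeta}R_\sigma(x,\sigma)\,dx\Bigr),
\end{equation*}
where the resolvent kernels $Q_\sigma,R_\sigma$ are uniformly bounded and decay at Airy speed for all $\sigma\ge N$. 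Since $|e^{2i(x-\sigma)\zeta}|\le e^{-2\epsilon(x-\sigma)}$ when $\Im\zeta>\epsilon$ and $x\ge\sigma$, the integrals are bounded uniformly in $\sigma\ge N$, and \eqref{eq:vanishing_tacnode_1} follows at once; \eqref{eq:vanishing_tacnode_2} is analogous. No RHP estimate and no splitting of the $\sigma$-range is required.

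Your Stage~1 is correct. In Stage~2, however, the large-$\sigma$ argument contains both a misdiagnosis and a genuine gap. The turning points $\zeta=\pm\tfrac{i}{2}\sqrt\sigma$ are a red herring: they lie on $\Sigma_2,\Sigma_5$ (the imaginary axis), where $t_2=0$ makes the Hastings--McLeod jump \emph{trivial}, so the dressed function $Y=\Psi^{(0)}e^{i\theta\sigma_3}$ is analytic there and no Airy parametrix is needed. The actual obstacle is the origin. On the rays $\Sigma_0,\Sigma_1,\Sigma_3,\Sigma_4$ carrying the Stokes multipliers $\pm1$, the conjugated jump has off-diagonal entry $t_k\,e^{\mp2i\theta(\zeta)}$, which is exponentially small once $|\zeta|$ is bounded away from $0$ but equals $|t_k|=1$ at $\zeta=0$. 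Your stated bound $O(e^{-c\sigma^{3/2}})$ for the jump off-diagonals conflates the decay of the \emph{coefficient matrix} $A$ (whose off-diagonals shrink like $q(\sigma)\sim\Ai(\sigma)$) with the \emph{jump data}, which is the fixed Stokes triple $(1,0,-1)$: the $L^\infty$ norm of the dressed jump on each nontrivial ray is exactly $1$, and the Deift--Zhou contraction does not close as you have set it up. This can be repaired by deforming the four rays off the origin to a fixed radius $r_0$ (after which the jumps really are $O(e^{-c\sigma r_0})$) and controlling $Y$ on the remaining disk by analyticity, but that step, and likewise the $\sigma$-uniformity of the Wasow $O(\zeta^{-1})$ error on the compact range $[N,M]$, needs to be argued rather than asserted. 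In short, your route would effectively re-derive uniform large-$\sigma$ asymptotics for the Hastings--McLeod RHP, whereas the paper sidesteps all of this by exploiting the known explicit formulas.
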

\begin{proof}
  We prove \eqref{eq:vanishing_tacnode_1}, and the proof of \eqref{eq:vanishing_tacnode_2} is analogous. The Airy resolvent formulas in \cite{Baik-Liechty-Schehr12} yield (see \cite[Formulas (2.38) and (2.39)]{Kuijlaars14})
  \begin{equation}
    \Psi^{(0)}_{1, 2}(\zeta; \sigma) = e^{i \left( \frac{4}{3} \zeta^3 + \sigma \zeta \right)} \tilde{f}(\zeta; \sigma),\quad \Psi^{(0)}_{1, 2}(\zeta; \sigma) = e^{i \left( \frac{4}{3} \zeta^3 + \sigma \zeta \right)}  \tilde{g}(\zeta; \sigma),
  \end{equation}
  where
  \begin{equation}
     \tilde{f}(\zeta; \sigma) = -\int^{\infty}_{\sigma} e^{2i(x - \sigma)\zeta} Q_{\sigma}(x) dx, \quad \tilde{g}(\zeta; \sigma) = 1 + \int^{\infty}_{\sigma} e^{2i(x - \sigma)\zeta} R_{\sigma}(x, \sigma) dx,
  \end{equation}
  and the definitions of $Q_{\sigma}(x)$ and $R_{\sigma}(x, \sigma)$ are given in \cite[Formulas (2.18) and (2.19)]{Kuijlaars14}. For all $\sigma > N$, $Q_{\sigma}(x)$ and $R_{\sigma}(x, \sigma)$ decays at the speed comparable to the Airy function, so that that $\tilde{f}(\zeta; \sigma)$ and $\tilde{g}(\zeta; \sigma)$ are bounded and \eqref{eq:vanishing_tacnode_1} is proved.
\end{proof}

Then we can write the kernel $\tilde{\lcal}^{\lambda, \Sigma}_{\tac}(\tau_1, x; \tau_2, y)$ as
\begin{multline} \label{tac:12_new}
  \tilde{\lcal}^{\lambda, \Sigma}_{\tac}(\tau_1, x; \tau_2, y) = \frac{1}{\pi C} \int_{\Sigma_{\tac}} du \int_{\Sigma_{\tac}} dv\, e^{-\frac{4i}{3} \frac{1 - \sqrt{\lambda}}{1 + \sqrt{\lambda}} (u^3 + v^3) + \frac{4(\tau_1 u^2 - \tau_2 v^2)}{C^2} + \frac{2i(xu + yv)}{C}} \\
  \times \frac{1}{\pi C^3} \int^{\infty}_{\sigma} \exp \left( i\frac{1 - \sqrt{\lambda}}{1 + \sqrt{\lambda}} s (u + v) \right) \left[ f(u; s) f(v; s) + \frac{1}{\sqrt{\lambda}} g(u; s) g(v; s) \right] ds,
\end{multline}
where the change of order of integrations is justified by Lemma \ref{lem:tacnode_vanishing}. Next we make the change of variable $v\mapsto (-v)$. Note now that the contour $\Sg_{\tac}$ is invariant under this transform. Hence by \eqref{eq:u_to_-u_Phi}, \eqref{tac:12_new} becomes
\begin{multline} \label{tac:12_new_new}
  \tilde{\lcal}^{\lambda, \Sigma}_{\tac}(\tau_1, x; \tau_2, y) = \frac{1}{\pi C} \int_{\Sigma_{\tac}} du \int_{\Sigma_{\tac}} dv\, e^{-\frac{4i}{3} \frac{1 - \sqrt{\lambda}}{1 + \sqrt{\lambda}} (u^3 - v^3) + \frac{4(\tau_1 u^2 - \tau_2 v^2)}{C^2} + \frac{2i(xu - yv)}{C}} \\
  \times \frac{1}{\pi C^3} \int^{\infty}_{\sigma} \exp \left( i\frac{1 - \sqrt{\lambda}}{1 + \sqrt{\lambda}} s (u - v) \right) \left[ f(u; s) g(v; s) + \frac{1}{\sqrt{\lambda}} g(u; s) f(v; s) \right] ds.
\end{multline}
By \eqref{tac:14}, we have
\begin{multline}\label{tac:15}
  \frac{\d}{\d s} \left[ \frac{i}{2} (1 + \la^{-1/2}) e^{\frac{is(1 - \sqrt{\la})}{1 + \sqrt{\la}}(u - v)} \frac{f(u; s)g(v; s)-f(v; s)g(u; s)}{u-v}\right] \\
  = \exp \left( i\frac{(1 - \sqrt{\la})}{1 + \sqrt{\la}} s(u - v) \right) \left[ f(u; s)g(v; s)+\frac{f(u; s)g(v; s)}{\sqrt{\la}} \right]. 
\end{multline} 
Using this identity, we can write \eqref{tac:12_new_new} as 
\begin{multline}\label{tac:15a}
  \tilde{\lcal}_{\tac}^{\la, \Sigma}(\tau_1, x; \tau_2, y) = \frac{1}{\pi C} \int_{\Sg_{\tac}}\,du\int_{\Sg_{\tac}}\,dv\, e^{-\frac{4i}{3}(\frac{1-\sqrt{\la}}{1+\sqrt{\la}})(u^3-v^3)+\frac{4}{C^2}(\tau_1u^2-\tau_2v^2)+\frac{2i}{C}(xu-yv)} \\
  \times \frac{i(1 + \la^{-1/2})}{2 \pi C^3} \int^\infty_{\sigma} \frac{\d}{\d s} \left[\frac{e^{\frac{is(1-\sqrt{\la})}{1+\sqrt{\la}}(u-v)}\left(f(u;s)g(v;s)-f(v;s)g(u;s)\right)}{u-v}\right] \,ds.
\end{multline}
Performing the integration in $s$ and using $C^3=1+\lambda^{-1/2}$, we derive
\begin{multline}
    \tilde{\lcal}_{\tac}^{\la, \Sigma}(\tau_1, x; \tau_2, y)=\frac{1}{C\pi} \int_{\Sg_{\tac}}\,du\int_{\Sg_{\tac}}\,dv\, e^{-\frac{4i}{3}(\frac{1-\sqrt{\la}}{1+\sqrt{\la}})(u^3-v^3)+\frac{4}{C^2}(\tau_1u^2-\tau_2v^2)+\frac{2i}{C}(xu-yv)+\frac{i\sg(1-\sqrt{\la})}{1+\sqrt{\la}}(u-v)} \\
    \times \left[\frac{\left(f(u; \sigma)g(v; \sigma)-f(v; \sigma)g(u; \sigma)\right)}{2\pi i(u-v)}\right],
\end{multline}
given that
\begin{equation} \label{eq:last_condition}
  f(u; \sigma) g(v; \sigma) - f(v; \sigma) g(u; \sigma) \to 0, \quad \text{as $\sigma \to +\infty$, for all $u, v \in \Sigma_{\tac}$.} 
\end{equation}
We thus have that \eqref{eq:last_condition} is implied by Lemma \ref{lem:tacnode_vanishing}. Thus we finish the proof of Theorem \ref{tacnode}.

\appendix

\section{Formulas of $V_1$, $V_2$, and $W$ in \eqref{int:7_b}} \label{sec:formulas_VVW}

In order to present the coefficient matrices of \eqref{int:7_b}, it is convenient to introduce several notations which were used in \cite{Delvaux13}. Below we define several parameters which depend on $r_{1,2}$, $s_{1,2}$, and $\tau$. These are the same notations given in \cite[Theorem 6.2]{Delvaux13} up to the rescaling $\tau \mapsto 2\tau/(r_1^2+r_2^2)$. The quantities $C$ and $\ga$ are the ones which were defined in \eqref{int:6}; $q$ and $q'$, are the PII solution and its derivative; $u$ is the PII Hamiltonian defined in \eqref{int:10}; and all Painlev\'{e} functions are evaluated at the point $\sg$ which is defined in \eqref{int:9}.  Other notations in this appenix are independent of the rest of the paper. In particular the parameters $b$ and $c$ given below are {\it not} the same ones as in \eqref{diffeq:2}. We use these symbols to match the notation of \cite{Delvaux13} and we trust it will not cause confusion. Define the parameters
\begin{multline}
b=\frac{1}{Cr_2 \sqrt{r_1 r_2} \ga}\left[\left(s_2^2+\frac{2r_2^2\tau}{r_1^2+r_2^2}\right)q-\frac{uq+q'}{C}\right], \quad \tilde{b} =\frac{\ga}{Cr_1 \sqrt{r_1 r_2}}\left[\left(s_1^2+\frac{2r_1^2\tau}{r_1^2+r_2^2}\right)q-\frac{uq+q'}{C}\right], \\
\be= \frac{\ga}{Cr_2 \sqrt{r_1 r_2}}\left[\left(s_2^2-\frac{2r_2^2\tau}{r_1^2+r_2^2}\right)q-\frac{uq+q'}{C}\right], \quad \tilde{\be}= \frac{1}{Cr_1 \sqrt{r_1 r_2}\ga}\left[\left(s_1^2-\frac{2r_1^2\tau}{r_1^2+r_2^2}\right)q-\frac{uq+q'}{C}\right], \\
\end{multline}
as well as
\begin{equation}
d= \frac{q}{C\sqrt{r_1r_2}\ga}, \qquad \tilde{d}=\frac{q\ga}{C\sqrt{r_1r_2}}, \qquad c=\frac{s_1^2}{r_1}-\frac{u}{r_1C}, \qquad \tilde{c}= \frac{s_2^2}{r_2}- \frac{u}{r_2C}.
\end{equation}
Also introduce the notations
\begin{multline}
f=\frac{4r_2C^{-1}\ga^{-1}}{(r_1^2+r_2^2)r_1\sqrt{r_1r_2}} \left[\frac{q' \tau}{C}+\frac{(r_1^2-r_2^2)\tau^2q}{r_1^2+r_2^2}-\frac{(s_1r_1-s_2r_2)q}{2}\right]+b\left(- c -\frac{r_2}{r_1} \tilde{c} +\frac{(r_1^2+r_2^2)\tau}{r_1}\right) \\
-d^2\tilde{d}+\frac{r_2}{r_1}\tilde{c}^2 d -\frac{2 s_2 d}{r_1},
\end{multline}
\begin{multline}
\tilde{f}=\frac{4r_1C^{-1}}{(r_1^2+r_2^2)r_2\sqrt{r_1r_2}} \left[\frac{q' \tau}{C}-\frac{(r_1^2-r_2^2)\tau^2q}{r_1^2+r_2^2}+\frac{(s_1r_1-s_2r_2)q}{2}\right]+\tilde{b}\left(- \frac{r_1}{r_2}c - \tilde{c} +\frac{(r_1^2+r_2^2)\tau}{r_2}\right) \\
-\tilde{d}^2 d+\frac{r_1}{r_2} c^2 \tilde{d} -\frac{2 s_1 \tilde{d}}{r_2}.
\end{multline}

Then the matrices $V_1$, $V_2$, and $W$ in \eqref{int:7_b} are given below. Note that besides the scaling of $\tau$, the solution to the Lax system in \cite{Delvaux13} differs from (the Hastings--McLeod case of) the solution to Lax system \eqref{int:7} by a scalar factor $\exp(-\tau z(r^2_1 - r^2_2)/(r^2_1 + r^2_2))$, so our Lax system is slightly different from \cite[Formulas (6.21)--(6.24)]{Delvaux13} in $U$ and $W$.
\begin{equation}
\begin{aligned}
  V_{1} &= 2 
  \begin{pmatrix}
     c & 0 & -i & 0 \\
    \tilde{d} & 0 & 0 &  0 \\
    i\left(-c^2+\frac{r_2}{r_1} d\tilde{d}+\frac{s_1}{r_1}-z\right)  & i  b &  - c & - d  \\
    i\be & 0 & 0 & 0
  \end{pmatrix}, \\
  V_{2} &= 2
  \begin{pmatrix}
     0& \ d & 0   & 0 \\
    0 & \tilde{c} & 0 &  i \\
    0  & -i\tilde{\be}&  0 & 0 \\
    -i \tilde{b} &  i \left(-\tilde{c}^2+\frac{r_1}{r_2} d\tilde{d}-\frac{s_2}{r_2}-z\right) & -\tilde{d} & -\tilde{c}
  \end{pmatrix}, \\
  W &= \diag(z, -z, z, -z)+2
  \begin{pmatrix}
    0 & -b & 0 & - i d\\
    - b &0 & i \tilde{d} & 0 \\
    0 & - if & 0 & -\tilde{\be} \\
    i\tilde{f} & 0 & -\be & 0 
  \end{pmatrix}.
  \end{aligned}
\end{equation}

\end{document}